\newcommand{\Rmnum}[1]{\expandafter\@slowromancap\romannumeral #1@}
\newcommand{\deleted}[1]{}
\begin{document}

\title{Linear Time Algorithm for Projective Clustering}
%\thanks{This research was partially supported by NSF through a CAREER award CCF-0546509 and grants IIS-0713489 and IIS-1115220.}}
\author{Hu Ding   \hspace{0.25in} Jinhui Xu}
\institute{
 Department of Computer Science and Engineering\\
State University of New York at Buffalo\\
 \email{\tt \{huding, jinhui\}@buffalo.edu}\\
}

\maketitle

\thispagestyle{empty}

\begin{abstract}
Projective clustering is a problem with both theoretical and practical importance and has received a great deal of attentions in recent years. Given a set of points $P$ in $\mathbb{R}^{d}$ space, projective clustering is to find a set $\mathbb{F}$ of $k$ lower dimensional $j$-flats so that the average distance (or squared distance) from points in $P$ to their closest flats is minimized. 
Existing approaches for this problem are mainly based on adaptive/volume sampling or core-sets techniques which suffer from several  limitations.   
In this paper, we present  the first uniform random sampling based approach for this challenging problem and achieve linear time solutions for three cases, general projective clustering, regular projective clustering, and $L_{\tau}$ sense projective clustering. For the general projective clustering problem, we show that for any given small numbers $0<\gamma, \epsilon <1$, our approach first removes $\gamma|P|$ points as outliers and then determines $k$ $j$-flats  to cluster  the remaining points into $k$ clusters with an objective value no more than $(1+\epsilon)$ times of the optimal for all points. 
For regular projective clustering, we demonstrate that when the input points satisfy some reasonable assumption on its input, our approach for the general case can be extended to yield a PTAS for all points. For $L_{\tau}$ sense projective clustering, we show that our techniques for both the general and regular cases can be naturally extended to the $L_{\tau}$ sense projective clustering problem for any $1 \le \tau < \infty$. Our results are based on several novel techniques, such as slab partition, $\Delta$-rotation, symmetric sampling, and recursive projection, and can be easily implemented for applications.

\end{abstract}

\newpage

\pagestyle{plain}
\pagenumbering{arabic}
\setcounter{page}{1}
\vspace{-0.2in}
\section{Introduction}
\vspace{-0.1in}

Projective clustering for a set $P$ of $n$ points in $\mathbb{R}^{d}$ space is to find a set $\mathbb{F}$ of $k$ lower dimensional $j$-flats so that the average distance (by certain distance measure) from points in $P$ to their closest flats is minimized. Depending on the choices of $j$ and $k$, the problem has quite a few different variants. For instance, when $k=1$, the problem is to find a $j$-flat to fit a set of points and is often called shape fitting problem. On the contrary, when $j=1$, the problem is to find $k$ lines to cluster a point set, and thus is called $k$-line clustering. In this paper, we mainly consider the $L_{2}$ sense projective clustering, i.e., minimizing the average squared distances to the resulting flats. We also consider extensions to regular projective clustering and $L_{\tau}$ sense projective clustering for any integer $1 \le \tau < \infty$, where the regular projective clustering is for  points whose projection on its optimal fitting flat have bounded coefficient of variation along any direction.

%Meanwhile, we also show that our method can also extend to $l_1$ sense, i.e., minimizing the sum of distances. Moreover, our algorithm is linear on the input data size $n$. 

%all combinations of $j$ and $k$, and additionally, we also allow a certain 
%number of points in $G$ to be outliers and therefore achieve a more robust clustering. Our version of the projective clustering problem is called {\em Robust $(j,k,\delta)$-Projective Clustering (or $(j,k,\delta)$-RPC)}.  

\noindent\textbf{Previous results: }Projective clustering is related to many theoretical problems such as shape fitting, matrix approximation, etc., as well as  numerous applications in applied domains. Due to its importance in both theory and applications, in recent years, a great deal of effort has devoted to solving this challenging problem and a number of promising techniques have been developed \cite{AHV04,FFS,FL11,HW04,HV02,HV03,AM04,AWY,AY00,PJA,HV02,SV07,DRV,EV05,VX12}. From methodology point of view, Agarwal {\em et al.} \cite{AHV04} first introduced a structure called {\em kernel set} for capturing the extent of a point set and used it to derive a number of algorithms related to the projective clustering problem. Har-Peled {\em et al.} \cite{HW04,HV03} presented algorithms for shape fitting problem based on kernel set and core-sets. The core-set concept has also been extended to more general projective clustering problems \cite{HV02,VX12,EV05}, and has proved to be effective for many other problems  \cite{AHV05,BC03,BHI02,Cl08,KMY02,HRZ}. 
 Another main approach for projective clustering is dimension reduction through adaptive sampling \cite{SV07,DRV}. From time efficiency point of view, most of the existing algorithms for projective clustering problems have super-linear dependency on the size $n$ of the point set. Several linear or near linear time (on $n$) algorithms were also previously presented. In \cite{AMV}, Agarwal {\em et al.} presented a near linear time algorithm for $k$-line clustering with $L_{\infty}$ sense objective. In \cite{EV05}, Edwards and Varadarajan introduced a near linear  time algorithm for integer points  and with $L_{\infty}$ sense objective. In \cite{VX12}, Varadarajan and Xiao designed a near linear time algorithm for $k$-line clustering and general projective clustering on integer points with $L_{1}$ sense objective. Furthermore, \cite{FFS,FL11} present a linear time bicriteria approximation algorithm with $L_1$, $L_2$ and $L_\infty$ sense.
 
\noindent\textbf{Relations with subspace approximation:} A problem closely related to $j$-flat fitting is the low rank matrix approximation problem whose objective is to find a lower dimensional subspace, rather than a flat, to approximate the original matrix (which is basically a set of column points). For this problem,  Frieze {\em et al.} introduced an elegant method based on random sampling \cite{FKV}. Their method additively approximates the original matrix, but unfortunately is not exact PTAS. To achieve a PTAS,    Deshpande {\em et al.} presented a volume sampling based approach to generate $j$-subspaces \cite{DRV}.  Their algorithm works well for the single $j$-flat/subspace fitting problem, and can also be extended to projective clustering problem (but with relatively high time complexity). Shyamalkumar {\em et al.} present an algorithm for subspace approximation with any $L_\tau$ sense objective, for $\tau\geq 1$ \cite{SV07}.

\vspace{-0.15in}
\section{Main Results and Techniques}
\label{sec-main}
\vspace{-0.12in}
%Below, we introduce the  definitions for flat fitting and projective clustering:

%\vspace{-0.05in}
%\begin{definition}[$\alpha$-Balanced Clustering]
%Let $P$ be a set of $n$ points in $\mathbb{R}^d$ space. A clustering of $P$ into $k$ clusters, $\{C_{1}, \cdots, C_{k}\}$, is $\alpha$-balanced if each $|C_{i}|\geq\alpha n$ for some $\alpha\leq\frac{1}{k}$.
%\end{definition}
%%\vspace{-0.05in}
%\begin{definition}[$j$-Flat Fitting]
%\label{def-fit}
%Given a point set $P$ in $\mathbb{R}^{d}$ space and an integer $1\leq j\leq d$,  a $j$-dimensional flat fitting of $P$ is a $j$-dimensional flat $\mathcal{F}$ in $\mathbb{R}^{d}$ space minimizing the average squared distance from $P$ to $\mathcal{F}$ (i.e., $\frac{1}{|P|}\sum_{p\in P}||p,\mathcal{F}||^2$).
%\end{definition}
%%\vspace{-0.05in}

\begin{definition}[$L_{\tau}$ Sense $(k,j)$-Projective Clustering and $j$-Flat Fitting]
\label{def-pc}
Given a point set $P$ in $\mathbb{R}^{d}$ space, and three integers $k\geq 1$, $1\leq j\leq d$ and $1\leq \tau<\infty$,  an $L_{\tau}$ sense  $(k,j)$-projective clustering is to find $k$ $j$-dimensional flats $\mathbb{F}=\{\mathcal{F}_{1}, \cdots, \mathcal{F}_{k}\}$ in $\mathbb{R}^{d}$ space such that $\frac{1}{|P|}\sum_{p\in P}\min_{1\leq i\leq k}||p,\mathcal{F}_{i}||^\tau$ is minimized. When $k=1$, it is a $j$-flat fitting problem.
\end{definition}
\vspace{-0.08in}

%\textbf{Notations.} 
In this paper, we assume both $k$ and $j$ are constant. $||p,\mathcal{F}||$ is the closest distance from $p$ to $\mathcal{F}$.
%Meanwhile, we denote the inner product of two vectors $\overrightarrow{a}$ and $\overrightarrow{b}$ as $<\overrightarrow{a},\overrightarrow{b}>$, the perpendicular distance (i.e., shortest %distance) from a point $p$ to a flat $\mathcal{F}$ as $||p, \mathcal{F}||$.

\vspace{-0.17in}
\subsection{Main Results}
\vspace{-0.08in}

In this paper, we mainly focus on the case of $\tau=2$ on arbitrary points (i.e., general projective clustering), and then extend the ideas to two other cases, regular projective clustering and $L_{\tau}$ sense projective clustering for any integer $1 \le \tau < \infty$. We present a uniform approach, purely based on random sampling, to achieve linear time solutions for all three cases. 

\begin{itemize} 
\item {\bf General $(k,j)$-projective clustering:} For arbitrary point set $P$ and small constant numbers $0 < \gamma, \epsilon <1$, our approach leaves out a small portion (i.e., $\gamma |P|$) of the input points as outliers, and finds, in  $O(2^{poly(\frac{kj}{\epsilon\gamma})}nd)$ time, $k$ $j$-flats to cluster the remaining points so that their objective value is no more than $(1+\epsilon)$ times of the optimal value on the whole set $P$.   Our result relies on  several novel techniques, such as symmetric sampling, slab partition, $\Delta$-rotation, and recursive projection. 

\item {\bf Regular projective clustering:} When the input point set $P$ has  regular distribution on its clusters, our approach yields a PTAS solution for the whole point set $P$ in the same time bound. The regularity of $P$  is measured based on the Coefficient of Variation (CV) on the projection of its points along any direction on their optimal fitting flat. $P$ is regular if CV has a bounded value. Since many commonly encountered distributions, which are often used to model various data or noises in experiments, 
are regular (such as Gaussian distribution, Erlang distribution, etc), our result,  thus, has a wide range of  potential applications.    

% is a measure for the dispersion of the points. 
 
\item {\bf $L_{\tau}$ sense projective clustering:} Our approach can also be extended to $L_{\tau}$ sense projective clustering for any $1 \leq \tau<\infty $ and with the same time bound. We show that each technique used for the general and regular projective clustering (i.e., the case of $\tau=2$) can be extended to achieve similar results.  

%e extend to any $L_\tau$ sense for $1\leq \tau<\infty $
%algorithm finds $k$ $j$-flats $\{\mathcal{F}_{1}, \cdots, \mathcal{F}_{k}\}$ in $O(2^{poly(\frac{kj}{\epsilon\gamma})}nd)$ time such that $\frac{1}{|P'|}\sum_{p\in P'}\min_{1\leq i\leq k}||p, \mathcal{F}_i ||^2\leq (1+\epsilon)Opt$, where $Opt$ is the optimal objective value of $P$, $P'$ is a subset of $P$ with cardinality at least $(1-\gamma)|P|$, and $\epsilon>0$ is any positive small constant. This means that our algorithm leaves out a small portion (i.e., $\gamma |P|$) of the input points as outliers and fits the remaining points into $k$ $j$-flats to achieve a $(1+\epsilon)$-approximation solution for 
\end{itemize}

\noindent\textbf{Comparsons with previous results:} As mentioned earlier,  existing works on projective clustering can be  classified into two categories: (a) adaptive sampling (or volume sampling) based approaches \cite{DRV,SV07} and (b) Core-sets based approaches \cite{EV05,VX12}. Often, (a) can efficiently solve the single flat fitting problem  (i.e., subspace approximation), 
%but due to the nature of volume sampling, 
but its extension to projective clustering requires a running time (i.e., $O(d(n/\epsilon)^{jk^3/\epsilon})$) much higher than the desired (near) linear time. (b) can solve projective clustering in near linear time, but the input must be integer points and within a polynomial range (i.e., $(mn)^{10}$) in any coordinate. The main advantages of our approach are:  (1) its  linear time complexity, (2)  do not need to have any assumption on its input (if a small fraction of outliers is allowed), (3) achieve linear time PTAS for regular points, (4) simple and can be easily implemented for applications.

%Meanwhile, we introduce the concept (i.e., Coefficient of Variance) from statistics to projective clustering. When the input has bounded CV, our algorithm yields PTAS. 

\vspace{-0.15in}
\subsection{Key Techniques}
\vspace{-0.05in}

Our approach is based on a key result  in \cite{IKI}, which estimates the mean point of large point set by a small random sample whose size is independent of the size and dimensionality  of the original set.  
 %estimate the mean point of $n$ points in $\mathbb{R}^d$, it is sufficient to compute the mean point of a small random sample of them, which has size independent with $n$ and $d$. We formally %describe this result as Lemma \ref{lem-dis}. 
This result is widely used in many areas, especially in $k$-means clustering \cite{KSS04,KSS05,KSS10}. Since projective clustering is a generalization of $k$-means clustering, where the mean point is simply a $0$-dimensional flat, it is desirable  to generalize this uniform random sampling technique to the more general flat fitting and projective clustering problems (without relying on adaptive or volume sampling or core-sets techniques). 

%Particularly, we are interested in two questions: (1)
%%projective clustering.  
%%a generalized result of Lemma \ref{lem-dis}. In other words, we want to answer the following two questions:
%%\vspace{-0.05in}
%%\begin{enumerate}
%%\item Is there any geometric structure can approximate the position of the optimal fitting flat for a given points set.
%%\item 
%Is there any algorithm which can  approximate the fitting flat just by \textbf{uniform} random sampling, rather than adaptive sampling (i.e., volume sampling, core-set)? (2) 
%%\item 
%Is it possible to generalize such kind of algorithm to projective clustering?
%\end{enumerate}
%\vspace{-0.05in}
%To answer the first question, Section \ref{sec-symmetric} shows 

To address this issue, we show that 
%a technique called {\em \textbf{Symmetric Sampling}}. 
after taking a random sample $\mathcal{S}$, it is impossible to generate a proper fitting flat if we simply compute the mean of $\mathcal{S}$ as in \cite{IKI}. Our 
%if we just simply compute the mean of $\mathcal{S}$ like Lemma \ref{lem-dis}, it would be impossible to get a proper fitting flat. 
 key idea is to use {\em Symmetric Sampling} technique to consider not only $\mathcal{S}$, but also $-\mathcal{S}$, which is the symmetric point set of $\mathcal{S}$ with respect to the mean point $o$ of the input set $P$. Intuitively, if we enumerate the mean point of every subset of $\mathcal{S}\cup -\mathcal{S}$, there must exist one such point  $p$ that not only locates close to the optimal fitting flat,  but also is far away from $o$. This means that $p$ can define one dimension of the fitting flat, and thus we can reduce the $j$-flat fitting problem to a $(j-1)$-flat fitting problem by projecting all points to some $(d-1)$ dimensional subspace. If recursively use the strategy $j$ times, which is called {\em \textbf{Recursive Projection}}, we can get one proper flat. With this flat fitting technique, we can naturally extend it to projective clustering.
 
% Since the output is got just by uniform random sampling, it would be very natural to extend from single flat fitting to projective clustering.

%\textbf{Main Idea:} From the PTAS algorithm for regular case, we know that a bounded coefficient of variation will guarantee us to find the Flat Kernel via random sampling. However, in general case the coefficient of variation maybe arbitrarily large. So we are seeking that whether we can get a Flat Kernel by removing a small fraction of the points set, which is informally called fractional Flat Kernel (i.e., Definition \ref{def-fsk}). We show that by the previous Recursively Project Algorithm, we can get such kind of fractional Flat Kernel (Theorem \ref{the-rpa2}). The main idea of Theorem \ref{the-rpa2} is to prove that there exist a hidden hyper-cuboid $\overline{\mathcal{H}}$ in the $\mathcal{F}_{opt}$ for each cluster, such that: a large enough subset of points are project inside $\overline{\mathcal{H}}$ on $\mathcal{F}_{opt}$ \textbf{(1)} and the Recursively Project Algorithm can get $j$ points satisfying the fractional Flat Kernel requirement for the points project inside $\overline{\mathcal{H}}$ \textbf{(2)}. Thus, Recursively Project Algorithm ensures the solution for general case of single flat fitting problem. Section \ref{sec-cuboid} introduces Cuboid Lemma and Slab Partition, which will help us prove Theorem \ref{the-rpa2}. Finally, we use the similar algorithm with the PTAS algorithm for regular case to get an approximation solution for $k\geq 2$ case.  

\vspace{-0.1in}

\section{Hyperbox Lemma and Slab Partition}
\label{sec-cuboid}
\vspace{-0.1in}

In this section, we present two standalone results, {\em Hyperbox Lemma} and {\em Slab Partition}, which are used for proving our key theorem (i.e., Theorem \ref{the-alg1}) in Section \ref{sec-symmetric}. 

%\textbf{Notation.} In our formal algorithm for projective clustering, we do not really need the Cuboid Lemma and Slab Partition. Actually, they are just for proving Theorem \ref{the-alg1} in Section \ref{sec-symmetric}. 

\begin{definition}[Slab and Amplification]
\label{def-sa}
Let $o$ and $s$ be two points in $\mathbb{R}^d$, and $\Omega$ and $-\Omega$ be the two hyperplanes perpendicular to vector $\overrightarrow{os}$ and passing through $s$ and $-s$ respectively, where $-s$ is $s$'s symmetric point about $o$.  
 %Similarly, assume $-\Omega$ is the symmetric hyperplane with $\Omega$ respect to $o$. 
The region bounded by $\Omega$ and $-\Omega$ is called the Slab  determined by $\overrightarrow{os}$ (denoted as $\mathcal{R}$). Further, let $s'$ be a point collinear with $o$ and $s$ with $\frac{||\overrightarrow{os}'||}{||\overrightarrow{os}||}=\lambda$.  Then the Slab $\mathcal{R}'$ determined by $\overrightarrow{os}'$ is called an amplification of $\mathcal{R}$ by a factor $\lambda$ (see Figure \ref{fig-defslab}).
\end{definition}

%Fig. \ref{fig-defslab} shows an example for Slab and Amplification.
\vspace{-0.2in}
\subsection{Hyperbox Lemma}
\label{sec-cuboidlemma}
\vspace{-0.1in}
\vspace{-0.25in}
\begin{figure}[]
\vspace{-0.15in}
\begin{minipage}[t]{0.5\linewidth}
  \centering
  \includegraphics[height=1.2in]{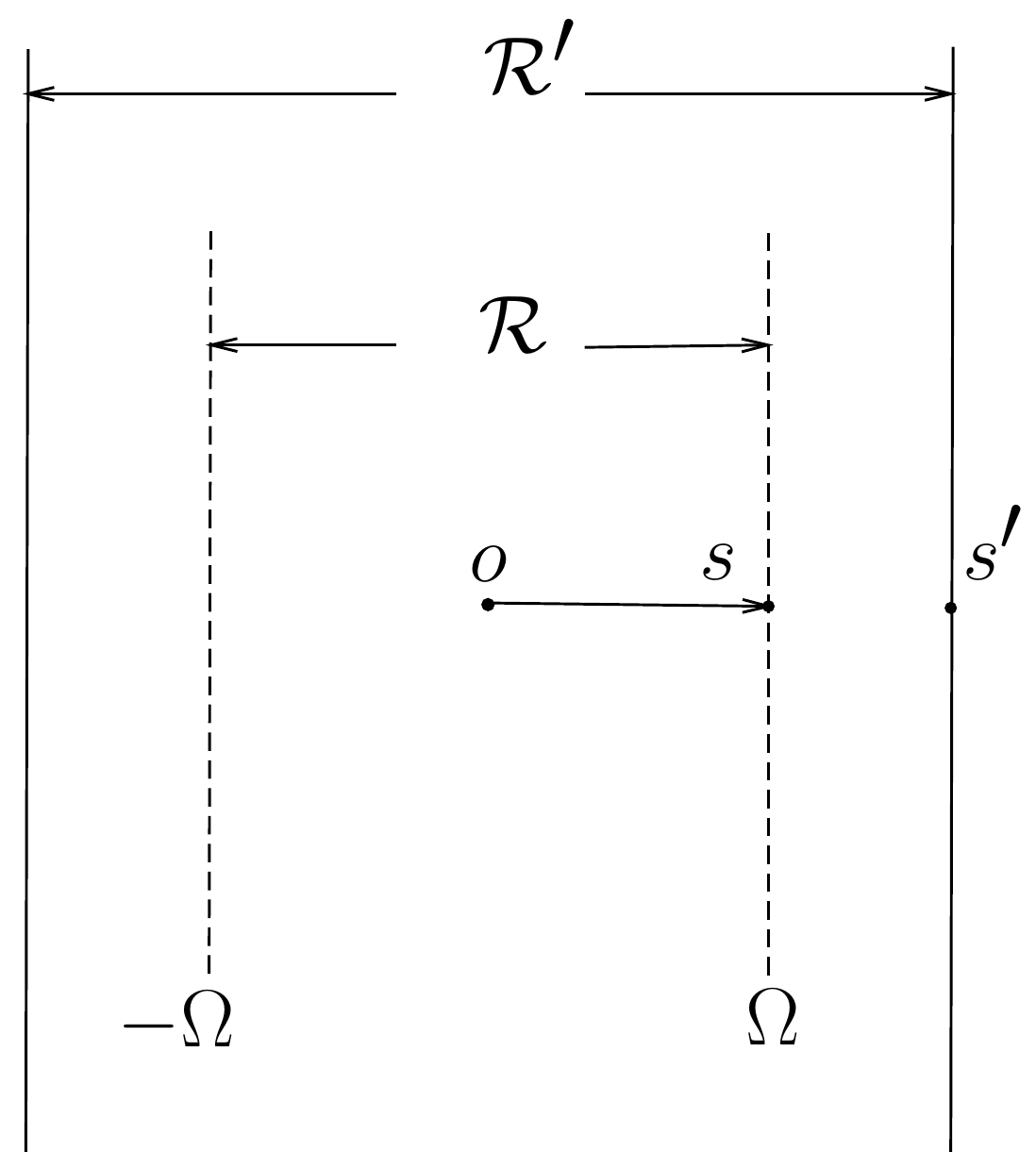}
  \vspace{-0.1in}
     \caption{An example illustrating Definition \ref{def-sa}.}
  \label{fig-defslab}
\end{minipage}
\begin{minipage}[t]{0.5\linewidth}
\centering
  \includegraphics[height=1.4in]{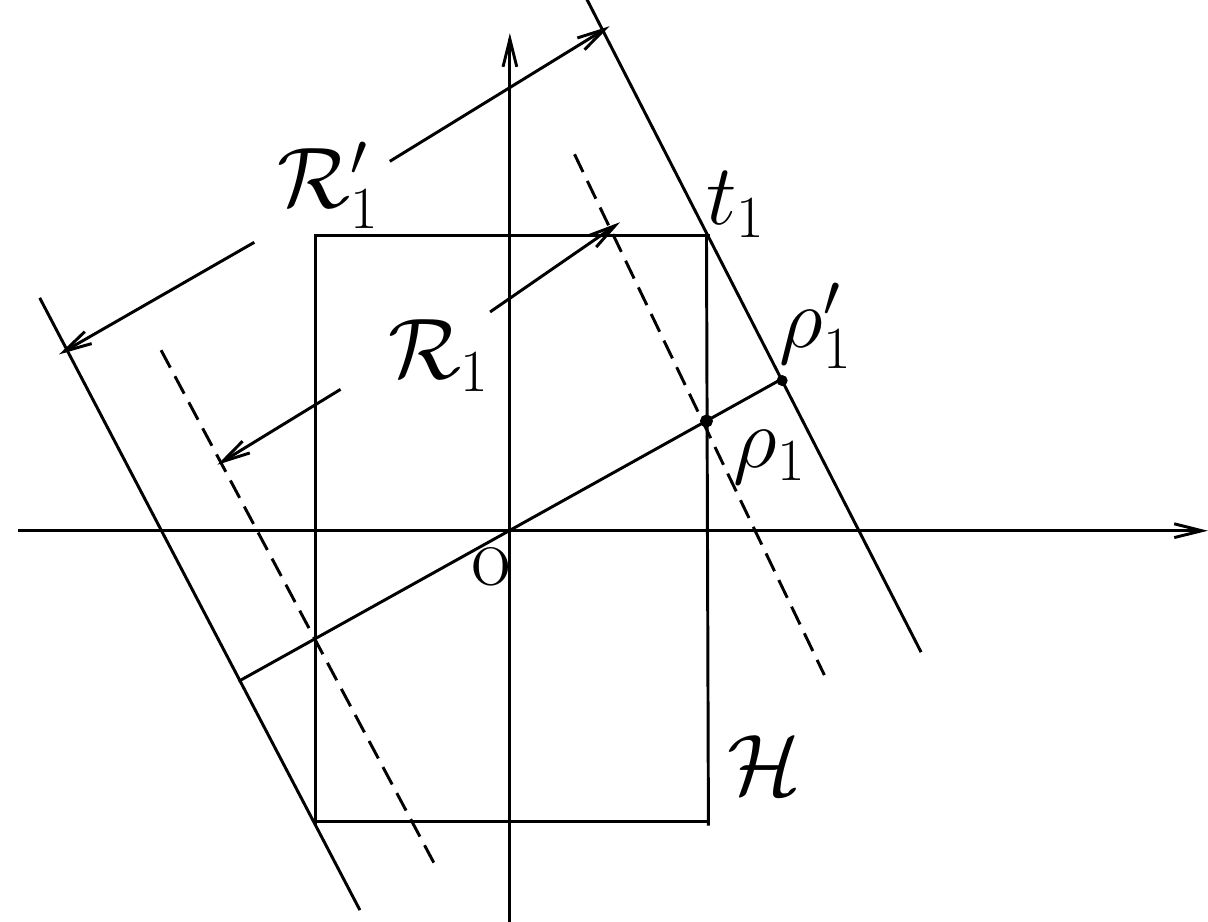}
  \vspace{-0.1in}
       \caption{An example illustrating Lemma \ref{lem-cuboid}.}
  \label{fig-cuboid}
  \end{minipage}
  \vspace{-0.2in}
\end{figure}
\vspace{-0.2in}
\begin{lemma}[Hyperbox Lemma]
\label{lem-cuboid}
Let $\mathcal{H}$ be a hyperbox in $\mathbb{R}^j$, and $o$ be its center. Let $\{f_1, \cdots, f_j\}$ be $j$ facets (i.e., $(j-1)$-dimensional faces) of $\mathcal{H}$ with different normal directions (i.e., no pair are parallel to each other), and $\rho=\{\rho_1, \cdots, \rho_j\}$ be $j$ points with each $\rho_{i}$, $1 \le i \le j$, incident to $f_{i}$. Then there exists one point $\rho_{l_0} \in \rho$  such that the slab determined by $\overrightarrow{o\rho}_{l_0}$ contains $\mathcal{H}$ after amplifying by a factor no more than $\sqrt{j}$.
%$s$ from them and another point $s'$ such that:
%\begin{enumerate}
%\item $s'$, $s$ and $o$ are collinear, and $\frac{||o-s'||}{||o-s ||}\leq \sqrt{j}$.
%
%\item Let $\Omega$ be the hyperplane passing $s'$ and perpendicular to $\overrightarrow{os}'$. Then the whole $\mathcal{H}$ locates in one side of $\Omega$.
%
%\end{enumerate}
\end{lemma}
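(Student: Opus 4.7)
The plan is to set up coordinates aligned with the box and reduce the geometric claim to a short application of Cauchy--Schwarz together with a pigeonhole step on the side lengths. First I would translate so that $o$ is the origin and rotate so that $\mathcal{H}$ is axis-aligned with half-side-lengths $a_1,\dots,a_j$, i.e., $\mathcal{H}=\prod_{l=1}^{j}[-a_l,a_l]$. Since the $j$ facets $f_1,\dots,f_j$ have pairwise non-parallel normals, each axis of $\mathcal{H}$ supplies at most one of them; relabeling, I may assume $f_i$ is the facet perpendicular to $e_i$, so the $i$-th coordinate of $\rho_i$ satisfies $|\rho_i^{(i)}|=a_i$, while $|\rho_i^{(l)}|\le a_l$ for $l\ne i$. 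In particular $\|\rho_i\|\ge a_i$.

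Next I would compute the minimum amplification factor $\lambda_i$ needed to make the slab determined by $\overrightarrow{o\rho_i}$ contain $\mathcal{H}$. With $u=\rho_i/\|\rho_i\|$, containment requires $\max_{p\in\mathcal{H}}|\langle p,u\rangle|\le \lambda_i\|\rho_i\|$, and since the support function of $\mathcal{H}$ in direction $u$ equals $\sum_l a_l |u_l|$, this gives
\[
\lambda_i \;=\; \frac{\sum_{l=1}^{j} a_l\,|\rho_i^{(l)}|}{\|\rho_i\|^{2}}.
\]
Applying Cauchy--Schwarz to the numerator yields $\sum_l a_l |\rho_i^{(l)}|\le \sqrt{\sum_l a_l^{2}}\cdot\|\rho_i\|$, and therefore
\[
\lambda_i \;\le\; \frac{\sqrt{\sum_{l=1}^{j} a_l^{2}}}{\|\rho_i\|}.
\]

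Finally I would invoke pigeonhole on the side lengths: choose $l_0$ maximizing $a_l$, so that $a_{l_0}^{2}\ge \tfrac{1}{j}\sum_l a_l^{2}$. Combined with $\|\rho_{l_0}\|\ge a_{l_0}$, this gives $\|\rho_{l_0}\|\ge\sqrt{\tfrac{1}{j}\sum_l a_l^{2}}$, and plugging into the bound above yields $\lambda_{l_0}\le \sqrt{j}$, as required.

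The arguments are all elementary; the only step that needs care is justifying the reduction to axis-aligned coordinates and correctly handling the labeling of the facets (the assumption of pairwise non-parallel normals is exactly what lets each $\rho_i$ be ``pinned'' on a distinct axis, without which the guarantee $\|\rho_i\|\ge a_i$ would fail). Once that setup is in place, the Cauchy--Schwarz plus pigeonhole argument is the whole content of the lemma.
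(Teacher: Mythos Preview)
Your proposal is correct and follows essentially the same approach as the paper: both arguments establish the bound $\lambda_i \le \sqrt{\sum_l a_l^{2}}\big/\|\rho_i\|$ and then select the index $l_0$ with maximal half-side $a_{l_0}$ to conclude $\lambda_{l_0}\le\sqrt{j}$. The only cosmetic difference is that the paper obtains the numerator bound geometrically (the half-width of the minimal containing slab is the distance from $o$ to the supporting hyperplane, hence at most the distance to a touching point of $\mathcal{H}$, hence at most the half-diagonal), whereas you compute the support function $\sum_l a_l|u_l|$ explicitly and apply Cauchy--Schwarz; the pigeonhole step is identical.
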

\begin{proof}

%\vspace{-0.1in}
% \begin{figure}[]
% \vspace{-0.2in}
% %\begin{minipage}[t]{0.5\linewidth}
%  \centering
%  \includegraphics[height=1.5in]{n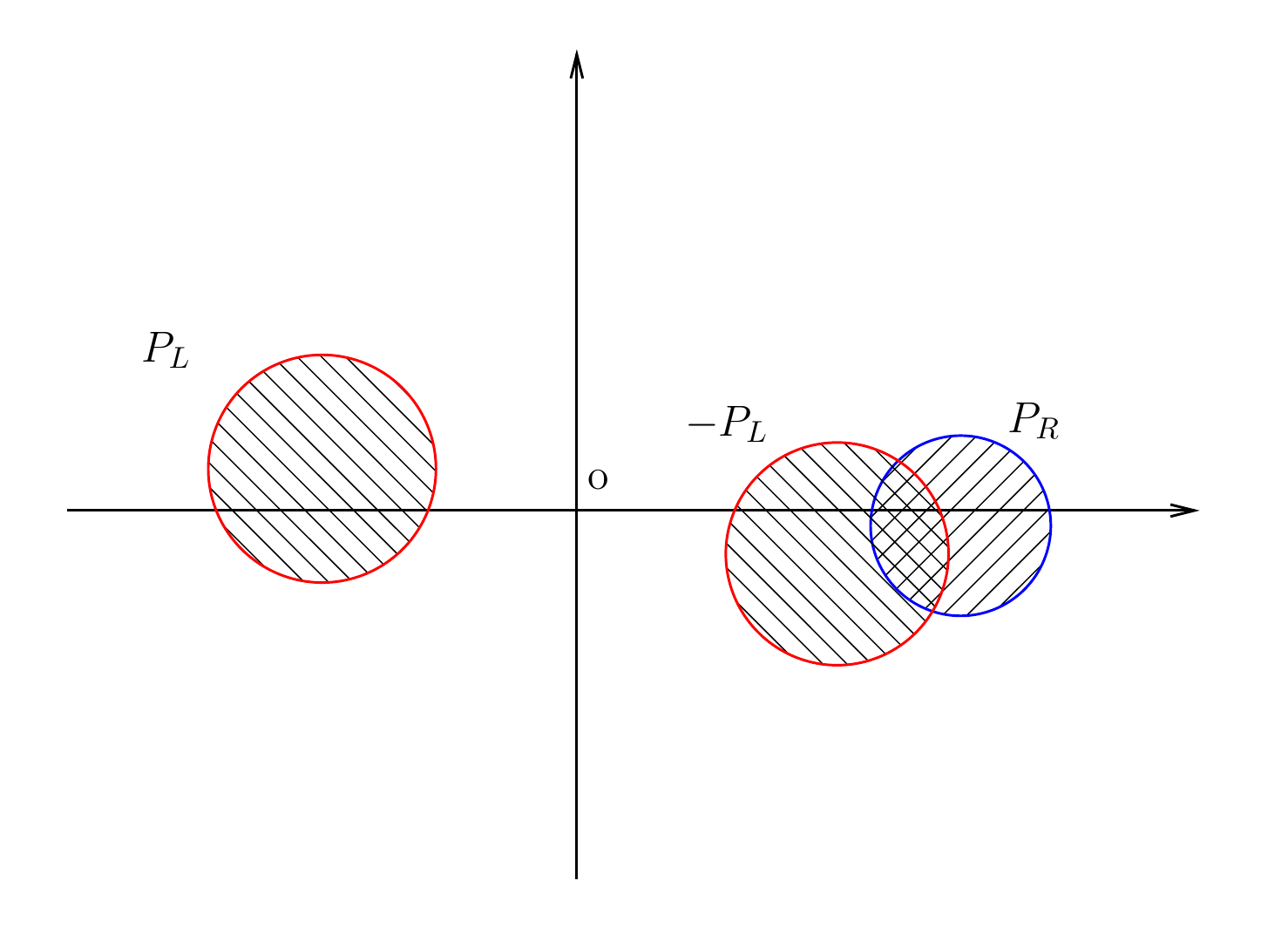}
%  \vspace{-0.1in}
%     \caption{}
%  \label{fig-cuboid}
%\end{figure}
%\vspace{-0.05in}
Let $a_1, \cdots, a_j$ be the $j$ side lengths of $\mathcal{H}$.  For each $1\le l \le j$, denote the slab  determined by $\overrightarrow{o\rho}_l$ as $\mathcal{R}_l$ (with two bounding hyperplanes $\Omega_{l}$ and $-\Omega_{l}$), and its minimal amplification, which is barely enough to contain $\mathcal{H}$, as   $\mathcal{R}'_l$ (i.e., its two bounding hyperplanes $\Omega_{l}'$ and $-\Omega_{l}'$ support  $\mathcal{H}$).
% Let  be the  , andbe the minimal amplification of $\mathcal{R}_l$ such that it barely contains (i.e., its two bounding hyperplanes $\Omega_{l}'$ and $-\Omega_{l}'$ .
%such that it is tangent with $\mathcal{H}$ (see Fig. \ref{fig-cuboid}).
%\begin{enumerate}
%\item $\rho'_l$, $\rho_l$ and $o$ are collinear.
%
%\item Let $\Omega_l$ be the hyperplane passing $\rho'_l$ and perpendicular to $\overrightarrow{o\rho_l}'$. Then the whole $\mathcal{H}$ tangent with $\Omega_l$.
%
%\end{enumerate}
Let $t_{l}$ be a point in $\Omega_{l}' \cap \mathcal{H}$ (i.e., a point on the (possibly $0$-dimensional) touching face of $\Omega'_{l}$ and $\mathcal{H}$), and $\rho'_{l}$ be the intersection point of $\Omega'_{l}$ and the supporting line of $o$ and $\rho_{l}$ (see Figure \ref{fig-cuboid}).     
%We denote the tangent point of $\mathcal{H}$ with $\mathcal{R}'_l$ as $t_l$, and $\rho'_l$ as the point collinear with $\overrightarrow{o\rho}_l$ and locating on $\mathcal{R}'_l$. 
Then we have $||o-\rho'_l||\leq ||o-t_l ||\leq \sqrt{\sum^j_{w=1} a^2_w}$, and $||o- \rho_l ||\geq a_l$ . Thus, we know that the amplification factor $\frac{||o-\rho'_l ||}{||o-\rho_l ||}\leq\frac{\sqrt{\sum^j_{w=1} a^2_w}}{a_l}$. Let $a_{l_0}=\max\{a_1, \cdots, a_j\}$, then we have $\frac{||o-\rho'_{l_0} ||}{||o-\rho_{l_0} ||}\leq\frac{\sqrt{\sum^j_{w=1} a^2_w}}{a_{l_0}}\leq\frac{\sqrt{ja^2_{l_0}}}{a_{l_0}}= \sqrt{j}$. Thus the lemma is true.
\qed
\end{proof}

%\textbf{Motivation of Cuboid Lemma:} Imagine the following scenario: $P$ is some given points set, and $\mathcal{H}$ is some unknown cuboid centered at $o$ which includes large part of $P$. We want to find a point $s'$ such that the whole $\mathcal{H}$ is bounded by $\Omega$ and $-\Omega$, where $\Omega$ is the hyperplane passing $s'$ and perpendicular to $\overrightarrow{o\hat{\rho}}$, and $-\Omega$ is the symmetric hyperplane of $\Omega$ respect to $o$. Cuboid Lemma tells us that if we can get the points set $\{\rho_1, \cdots, \rho_j\}$, then there must exist one point $\rho_{l_0}$, such that the desired $s'$ can be got via extending $\overrightarrow{o\rho_{l_0}}$ by a constant factor (i.e., $\sqrt{j}$). Note that the desired $s$ would be proved to be one point of some fractional Flat Kernel. 
\vspace{-0.23in}
\subsection{Slab Partition}
\label{sec-slab}
%Then we have the following definition:
\vspace{-0.07in}
\begin{definition}[Slab Partition]
\label{def-slab}
Let $o$ be the origin of $\mathbb{R}^j$, and $\overrightarrow{ou}_1, \cdots, \overrightarrow{ou}_j$ be the $j$ orthogonal vectors defining the coordinate system of $\mathbb{R}^j$. The following partition is called Slab Partition on $\mathbb{R}^j$: $\Pi_l=\Pi_{l-1}\cap \mathcal{R}_l$ for $1\leq l\leq j$, where $\Pi_0=\mathbb{R}^j$, $\mathcal{R}_l$ is the Slab determined by $\overrightarrow{o u}'_l$, and $u'_l$ is some point on the ray of $\overrightarrow{o u}_l$ (see Figure \ref{fig-slab}). %, such that $\frac{|P_{l-1}\cap \mathcal{R}_l |}{| P_{l-1} |}=1-\frac{\gamma}{j^2}$. 
\end{definition}
 \vspace{-0.25in}
%   \begin{figure}[]
%   \vspace{-0.2in}
%   %\begin{minipage}[t]{0.5\linewidth}
%  \centering
%  \includegraphics[height=1.5in]{}
%  \vspace{-0.1in}
%     \caption{}
%  \label{fig-slab}
%\end{figure}
% \vspace{-0.2in}

%In other words, $P_{l-1}\cap \mathcal{R}_l$ is a shrinking on the $l$-th coordinate for $P_l$, by removing the top $\frac{\gamma}{j^2}$ fraction points with largest absolute values on the $l$-th coordinate. 

%Fig. \ref{fig-slab} shows an example of Slab Partition for $2D$ case.

%Since $(1-\frac{\gamma}{j^2})^l\geq 1-\frac{l\gamma}{j^2}$, it is easy to know the following lemma:
%
%\begin{lemma}
%\label{lem-size}
%In the Slab Partition, for any $1\leq l\leq j$, $\frac{|P_l |}{|P|}\geq 1-\frac{l\gamma}{j^2}$ and $\frac{|P_l\setminus\mathcal{R}_l |}{|P|}\geq \frac{\gamma}{j^2}-\frac{(l-1)\gamma^2}{j^4}$. Further, $P_j$ is bounded by a hyper-cuboid $\mathcal{H}$, and $\frac{|P_j|}{|P|}=\frac{|P\cap\mathcal{H}|}{|P|}\geq 1-\frac{\gamma}{j}$. 
%
%\end{lemma}
%
\begin{figure}[]
\vspace{-0.15in}
\begin{minipage}[t]{0.4\linewidth}
  \centering
  \includegraphics[height=1.3in]{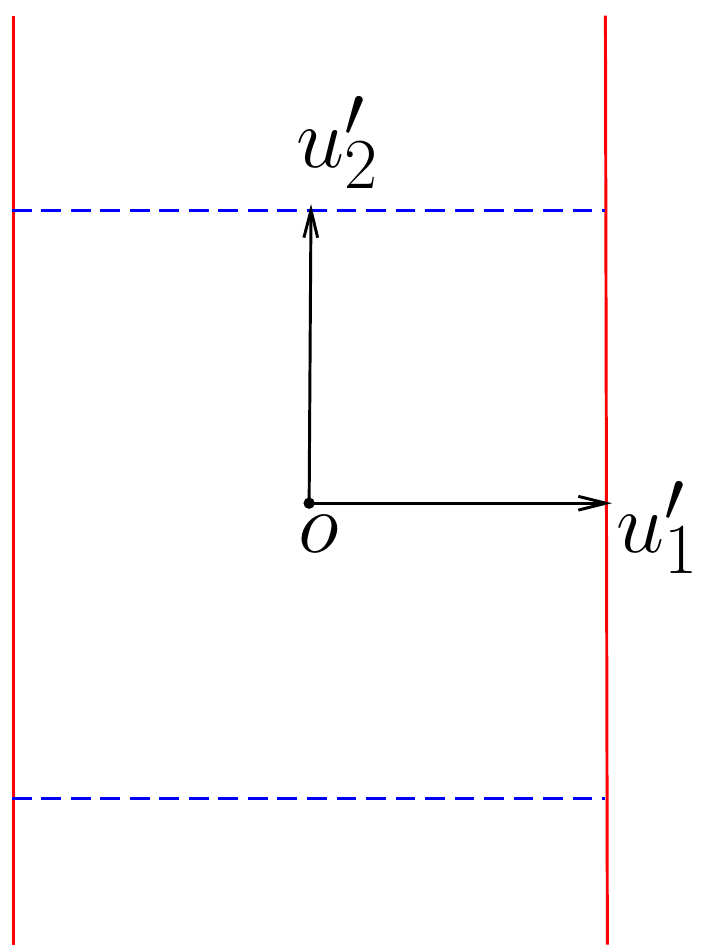}
  \vspace{-0.1in}
     \caption{An example of 2D slab partition.}
  \label{fig-slab}
\end{minipage}
\begin{minipage}[t]{0.6\linewidth}
\centering
  \includegraphics[height=1.3in]{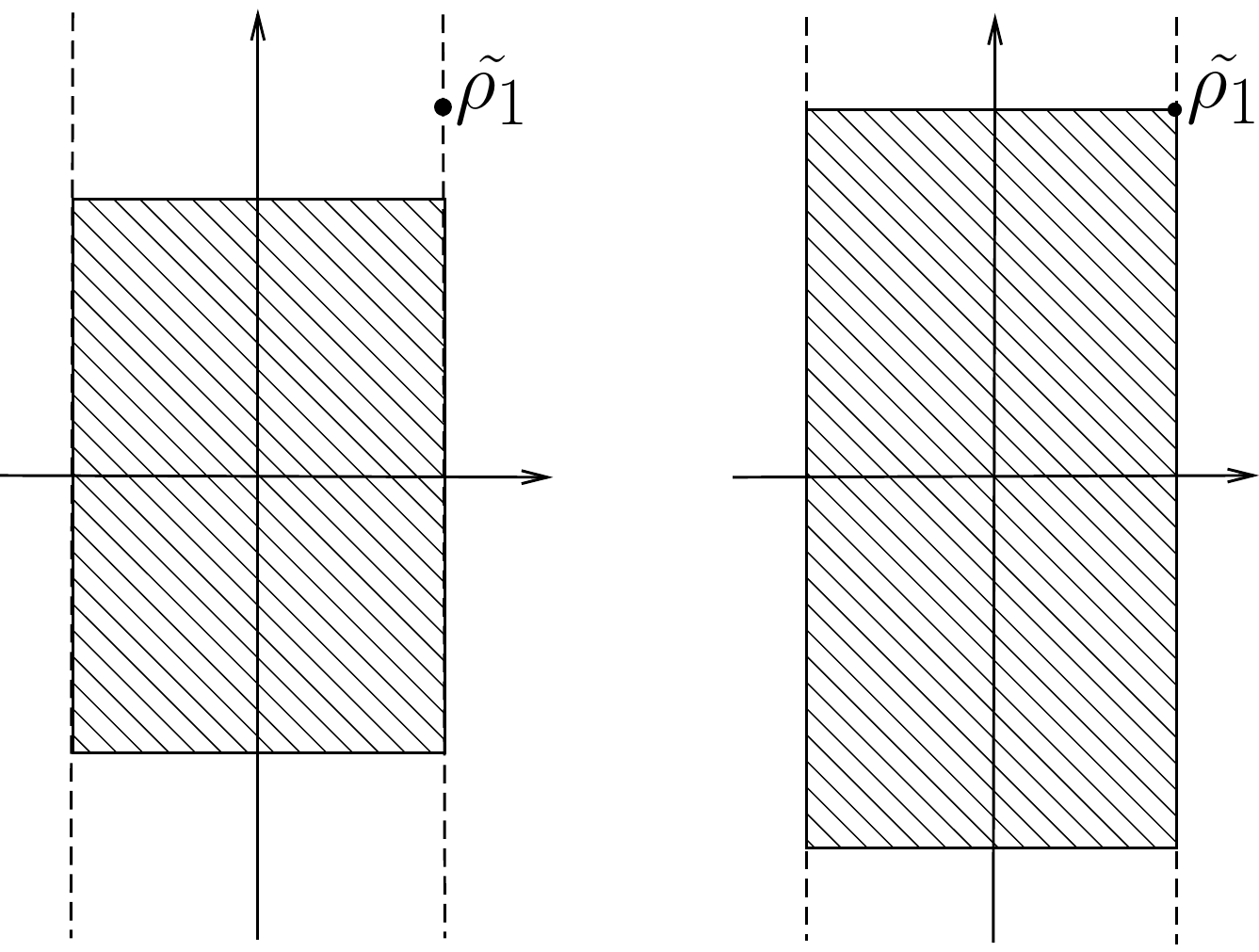}
  \vspace{-0.1in}
       \caption{The right cuboid is enlarged from the left one respect to $\rho_1$.}
  \label{fig-enlarge}
  \end{minipage}
  \vspace{-0.2in}
\end{figure}
 \vspace{-0.2in}

%Lemma \ref{lem-size} and Lemma \ref{lem-select} help us get the following lemma:

\begin{lemma}
\label{lem-slab}
Let $\{\Pi_0, \Pi_1, \cdots, \Pi_j\}$ be a slab partition in $\mathbb{R}^j$, and $\{\mathcal{R}_1, \cdots, \mathcal{R}_j\}$ be the corresponding partitioning slabs. Let $\{\rho_1, \cdots, \rho_j\}$ be the $j$ points such that $\rho_l\in\Pi_{l-1}\cap\partial\mathcal{R}_l$ for $1\leq l\leq j$, where $\partial\mathcal{R}_l$ is the  bounding hyperplane of $\mathcal{R}_l$. Then there exists a point $\rho_{l_0}$, such that the slab determined by $\overrightarrow{o\rho}_{l_0}$ contains  $\Pi_j$ after amplified by a factor of $\sqrt{j}$.
%Let $P$ be a points set in $\mathbb{R}^j$, $o$ be the origin, and $\gamma$ be a small number between $0$ and $1$. We take a random sample $\mathcal{S}$ with size $\frac{4j^2}{\gamma\epsilon}\ln\frac{j}{\epsilon}$ from $P$. We denote $-\mathcal{S}$ as the symmetric points set of $\mathcal{S}$ respect to $o$. Then with probability $(1-\epsilon)^2$, there is a subset of $-\mathcal{S}\cup \mathcal{S}$ such that it has size at least $\frac{1}{\epsilon}$, and its mean point $s$ satisfies the followings: the slab determined by $\overrightarrow{os}$ covers the whole $P_j$ via an amplification by a factor $\frac{4j^{\frac{5}{2}}\ln\frac{j}{\epsilon}}{\gamma\epsilon}$, where  $P_j$ is from the Slab Partition on $P$ according to Definition \ref{def-slab}.
%\begin{enumerate}
%\item $s'$, $s$ and $o$ are collinear, and $\frac{||o-s'||}{||o-s ||}\leq \frac{4j^{\frac{5}{2}}\ln\frac{j}{\epsilon}}{\gamma\epsilon}$.
%
%\item Let $\Omega$ be the hyperplane passing $s'$ and perpendicular to $\overrightarrow{os'}$. Then the whole $P_j$ locates in one side of $\Omega$. Where $P_j$ is from the Slab Partition on $P$.
%
%%\item $s$ is the mean of some subset with size at least $\frac{1}{\epsilon}$ from $\mathcal{U}\bigcup -\mathcal{U}$ in Algorithm $1$.
%
%
%
%
%\end{enumerate}
%
%\textbf{Note} that here $P$ locates in $\mathbb{R}^j$ rather than $\mathbb{R}^d$, which is actually a subspace of $\mathbb{R}^d$.
\end{lemma}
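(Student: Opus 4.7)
The plan is to mimic the proof of the Hyperbox Lemma (Lemma \ref{lem-cuboid}), treating $\Pi_j$ itself as the ambient hyperbox. Set $a_l := ||\overrightarrow{ou'_l}||$, so that $\Pi_j$ is an axis-aligned hyperbox centered at $o$ with half-side-lengths $a_1,\ldots,a_j$ along the orthogonal directions $u_1,\ldots,u_j$. The hypothesis $\rho_l \in \partial\mathcal{R}_l$ forces the $u_l$-coordinate of $\rho_l$ to be exactly $\pm a_l$, which is precisely the condition that $\rho_l$ lies on the affine hyperplane supporting the facet of $\Pi_j$ whose normal is $u_l$. In particular, this yields $||\overrightarrow{o\rho_l}|| \ge a_l$ for every $l$, which is the only property of $\rho_l$ used in the argument of Lemma \ref{lem-cuboid}.

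Given this, I will repeat the construction from the Hyperbox Lemma proof. For each $l$, let $\mathcal{R}'_l$ be the minimal amplification of the slab determined by $\overrightarrow{o\rho_l}$ that contains $\Pi_j$, with bounding hyperplanes $\Omega'_l$ and $-\Omega'_l$. Define $\rho'_l$ to be the intersection of $\Omega'_l$ with the supporting line of $o$ and $\rho_l$, and let $t_l$ be any point of $\Omega'_l \cap \Pi_j$; such a $t_l$ exists because by minimality $\Omega'_l$ is a supporting hyperplane of $\Pi_j$. Since $\rho'_l$ is the foot of the perpendicular from $o$ onto $\Omega'_l$ (as $\Omega'_l \perp \overrightarrow{o\rho_l}$), we obtain $||\overrightarrow{o\rho'_l}||\le ||\overrightarrow{ot_l}||\le \sqrt{\sum_{w=1}^{j} a_w^2}$, and hence the amplification factor satisfies $||\overrightarrow{o\rho'_l}||/||\overrightarrow{o\rho_l}|| \le \sqrt{\sum_w a_w^2}/a_l$.

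To finish, I will choose $l_0 := \arg\max_l a_l$, so that $\sum_w a_w^2 \le j\,a_{l_0}^2$ and the amplification factor for $\rho_{l_0}$ is bounded by $\sqrt{j\,a_{l_0}^2}/a_{l_0} = \sqrt{j}$, which yields the claim. The main (and essentially the only) subtle point is to justify that the argument of Lemma \ref{lem-cuboid} goes through in this setting even though $\rho_l$ may lie outside $\Pi_j$; indeed, the coordinates of $\rho_l$ in directions $u_{l+1},\ldots,u_j$ are not a priori restricted by $\Pi_{l-1}$, so in general $\rho_l \notin \Pi_j$. The resolution is the observation already made above: the Hyperbox Lemma's proof uses $\rho_l$ only through $||\overrightarrow{o\rho_l}||\ge a_l$, and this inequality is guaranteed solely by $\rho_l\in\partial\mathcal{R}_l$, regardless of whether $\rho_l$ belongs to $\Pi_j$.
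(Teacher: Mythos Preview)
Your argument is correct and is in fact more direct than the paper's. The paper also identifies $\Pi_j=\bigcap_{l=1}^j\mathcal{R}_l$ as a hyperbox and aims to invoke Lemma~\ref{lem-cuboid}, but it treats that lemma as a black box requiring $\rho_l$ to lie on the actual facet $f_l=\Pi_j\cap\mathcal{S}_l$. Since in general $\rho_l\notin f_l$ (its $u_{l+1},\ldots,u_j$ coordinates are unconstrained), the paper splits into two cases and, in the bad case, runs an iterative procedure that enlarges the slabs $\mathcal{R}_w$ ($w>l$) and overwrites the points $\tilde\rho_w$ so as to force the facet-incidence condition on an enlarged hyperbox, then applies Lemma~\ref{lem-cuboid} there. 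Your route avoids this entirely by re-opening the proof of Lemma~\ref{lem-cuboid} and observing that the only property of $\rho_l$ ever used is $||\overrightarrow{o\rho_l}||\ge a_l$, which already follows from $\rho_l\in\partial\mathcal{R}_l$ alone; the rest of the bound comes from $t_l\in\Pi_j$ and the perpendicularity of $\Omega'_l$ to $\overrightarrow{o\rho_l}$, neither of which depends on where $\rho_l$ sits relative to $\Pi_j$. As a bonus, your argument never uses the hypothesis $\rho_l\in\Pi_{l-1}$, so it actually establishes a slightly stronger statement; the paper's reduction procedure does rely on that hypothesis. The paper's approach buys modularity (Lemma~\ref{lem-cuboid} is quoted verbatim), while yours buys brevity and a cleaner dependence on the assumptions.
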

\begin{proof}
%We use the same notations as Definition \ref{def-slab}, and assume $\{\overrightarrow{ou}_1, \cdots, \overrightarrow{ou}_j\}$ determine the orthogonal coordinate system for $\mathbb{R}^j$. 
It is easy to see that $\Pi_j=\cap^j_{l=1}\mathcal{R}_l$, which is a hyperbox in $\mathbb{R}^j$. Thus, it is natural to use Lemma \ref{lem-cuboid} to prove the lemma. For this purpose, we let 
 %via replacing $\mathcal{H}$ by $\Pi_j$. Assume each 
 $\mathcal{S}_l$, $1\le l \le j$, be one of the bounding hyperplanes of  $\mathcal{R}_l$ with $\rho_l$ incident to it. 
 % locates on the hyperplane $\mathcal{S}_l$, which is denoted as one side boundary of $\mathcal{R}_l$. 
 For any $w<l$, from slab partition we know that the whole $\Pi_l$ locates inside $\mathcal{R}_w$. Thus, $\rho_l$ also locates inside $\mathcal{R}_w$. Let $f_l=\Pi_j\cap\mathcal{S}_l$. Thus the $j$ facets $\{f_1, \cdots, f_j\}$ of $\Pi_j$ point (i.e., their normal directions) to different directions.  
 %as in the $j$ directions, where $f_l=\Pi_j\cap\mathcal{S}_l$. 
 Note that since $f_l$ is only a subregion of $\mathcal{S}_l$, $\rho_l$ is possibly outside of $f_l$. Thus we consider the following two cases,  (a) every $\rho_l$ locates inside $f_l$ for $1\leq l\leq j$ and (b) there exists some $\rho_l$ locates outside of $f_l$.

%\end{enumerate}
%\vspace{-0.2in}
% \begin{figure}[]
% \vspace{-0.1in}
% %\begin{minipage}[t]{0.5\linewidth}
%  \centering
%  \includegraphics[height=1.5in]{p5.pdf}
%  \vspace{-0.1in}
%     \caption{The right cuboid is enlarged from the left one respect to $\rho_1$.}
%  \label{fig-enlarge}
%\end{figure}
%\vspace{-0.2in}
%

For case (a), the lemma follows from Lemma \ref{lem-cuboid} after replacing $\mathcal{H}$ by $\Pi_{j}$. For case (b), our idea is to reduce it to case (a) through the following procedure.
%the desired $s$ can be got directly from Lemma \ref{lem-cuboid}. For \textcircled{b}, we can modify it to be the case \textcircled{a} via the following process:
\begin{enumerate}
\item Initialize a set of points $\{\tilde{\rho}_1, \cdots, \tilde{\rho}_j\}$ with $\tilde{\rho}_l=\rho_l$ for $1\leq l\leq j$.
\item Set $l=1$. Do the following steps until $l>j$.
\begin{enumerate}
\item Set $w=l+1$. Do the following steps until $w>j$.
\begin{enumerate}
%\item If $\rho_l$ is not outside of $\Pi_w$, 

\item If $\tilde{\rho}_l$ is outside of $\Pi_w$, first amplify $\mathcal{R}_w$ until it touches $\tilde{\rho}_l$ (see Fig. \ref{fig-enlarge}), and then set $\tilde{\rho}_w=\tilde{\rho}_l$.
\item $w=w+1$. 
\end{enumerate}
\item $l=l+1$.
\end{enumerate}
\end{enumerate}
\begin{claim}
After the above procedure, $\{\tilde{\rho}_1, \cdots, \tilde{\rho}_j\}$ becomes a case (a)  set with respect to the  amplified $\Pi_j=\cap^j_{l=1}\mathcal{R}_l$.
\end{claim}
%\begin{proof}
To show this claim, we observe that there are two loops in the procedure. In the first loop, each $l$-th round guarantees that $\tilde{\rho}_l$ locates inside (or on the boundary) of the $l$-th facet of the enlarged $\Pi_j$. Note that $\tilde{\rho}_l$ is always inside of $\mathcal{R}_w$ for $w\leq l$.  Thus the second loop only starts from $w=l+1$. After amplifying $\mathcal{R}_w$, the original $\tilde{\rho}_w$ will no longer  be  on $\partial\mathcal{R}_w$.  Thus replacing $\tilde{\rho}_w$ by $\tilde{\rho}_l$ will keep it on the boundary of  $\partial\mathcal{R}_w$. Thus, after finishing the two loops, $\{\tilde{\rho}_1, \cdots, \tilde{\rho}_j\}$ will become a case (a) set with respect to the new $\Pi_j$.
 %\qed
%\end{proof}

% (see Fig. \ref{fig-enlarge}). After each time of enlarging, $\rho_l$ would locate on the common boundary of several faces, which are corresponding to the coordinations $\rho_l$ extends before. We denote the set of faces sharing $\rho_l$ as $f_{l_1}, \cdots, f_{l_t}$. Then we can consider $\rho_l$ as $t$ points locating at the same position, and replace $\{\rho_{l_1}, \cdots, \rho_{l_t}\}$ by them respectively. Note that $\rho_l$ also locates between $\mathcal{S}_w$ and $-\mathcal{S}_w$ for any $w<l$, so $\rho_l$ can just extend $f_l$ from $w$-th coordinations where $w>l$. That means when we enlarge $\Pi_j$ for $\rho_l$, we do not need enlarge it in the $w$-th coordinations for any $w>l$. So by the enlarging and replacing process from $l=1$ to $j$, we would get a new $\Pi_j$ and new $\{\rho_1, \cdots, \rho_j\}$, which belongs to the first case. 
% 
 
 \noindent\textbf{Note} that in case (b), the resulting $\{\tilde{\rho}_1, \cdots, \tilde{\rho}_j\}$ is actually a subset of the original  $\{\rho_1, \cdots, \rho_j\}$. Thus, we do not really need to perform the procedure to complete the reduction. 
 We only need to find the desired $\rho_{l_0}$ whose existence is ensured by Lemma \ref{lem-cuboid}. Thus, the lemma holds.
   %converting \textcircled{b} to \textcircled{a}. Instead, we can always find the desired $s$ from the original  $\{\rho_1, \cdots, \rho_j\}$.
\qed
\end{proof}

\vspace{-0.3in}
\section{$\Delta$-Rotation and Symmetric Sampling}
\label{sec-technique}
 \vspace{-0.1in}
This section introduces several key techniques used in our algorithms. Let $\mathcal{F}$ be a $j$-dimensional flat and $P$ be a set of $\mathbb{R}^{d}$ points. We denote the average squared distance from $P$ to $\mathcal{F}$ as $\delta^{2}_{P,\mathcal{F}} =\frac{1}{|P|}\sum_{p\in P}$ $||p, \mathcal{F}||^2$, where $||p, \mathcal{F}||$ is the closest distance from $p$ to $\mathcal{F}$.  

\vspace{-0.15in}
\subsection{Flat Rotation and $\Delta$-Rotation}
\label{sec-rotate}
 \vspace{-0.12in}
In this section, we discuss flat rotation, and how it affects single flat fitting. 
%At first, let us show the formal definition for flat rotation.

%\begin{proof}
\vspace{-0.05in}
\begin{figure}[]
\vspace{-0.15in}
\begin{minipage}[t]{0.55\linewidth}
  \centering
  \includegraphics[height=1in]{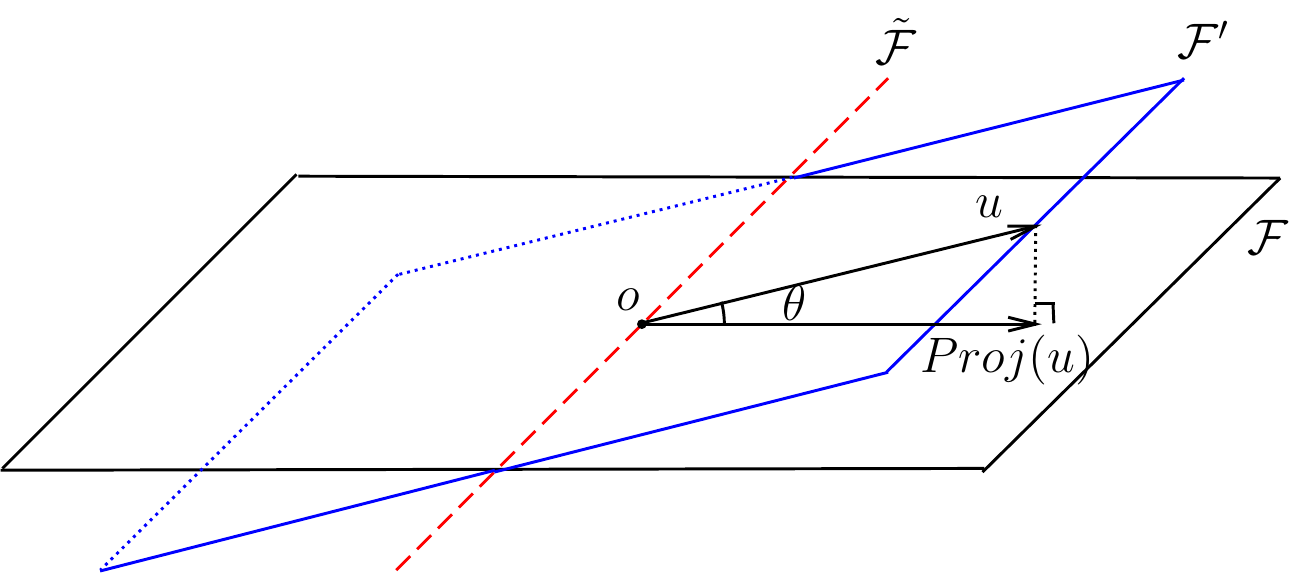}
  \vspace{-0.1in}
     \caption{An example illustrating Definition \ref{def-rotate}.}
  \label{fig-rotate}
\end{minipage}
\begin{minipage}[t]{0.45\linewidth}
\centering
  \includegraphics[height=0.9in]{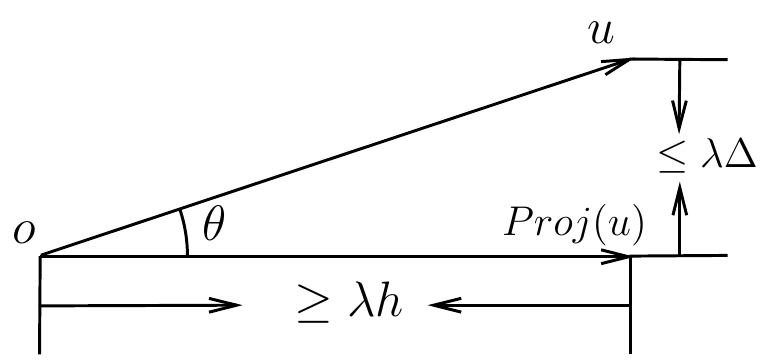}
  \vspace{-0.1in}
       \caption{An example illustrating Definition \ref{def-delta}.}
  \label{fig-delta}
  \end{minipage}
  \vspace{-0.1in}
\end{figure}
\vspace{-0.1in}
\begin{definition}[Flat Rotation]
\label{def-rotate}
Let $\mathcal{F}$ be a $j$-dimensional flat in $\mathbb{R}^d$,  $o$ be a point on $\mathcal{F}$, and $u$ be any given point in $\mathbb{R}^d$. Let $Proj(u)$ denote the orthogonal projection of $u$ on $\mathcal{F}$, and $\tilde{F}$ denote the $j-1$-dimensional face of $\mathcal{F}$ which is perpendicular to the vector $Proj(u)-o$. Then the flat $\mathcal{F}'$ spanned by $\tilde{F}$ and the vector $u-o$ is a rotation of $\mathcal{F}$ induced by the vector $u-o$, and the rotation angle $\theta$ is the angle between $u-o$ and $Proj(u)-o$ (see Fig. \ref{fig-rotate}). 
\end{definition}
 \vspace{-0.1in}

In the above definition, when there is no ambiguity about $o$, we also call the rotation is induced by $u$. 

% for illustration. 

%\begin{definition}[Coefficient of Variation (CV) Respect to Vector]
%\label{def-vector}
%Let $P$ be a points set in $\mathbb{R}^d$, and $v$ be an unit vector. Then the coefficient of variation (cv) respect to $v$ for $P$ is the coefficient of variation of $\{|<p,v>| \mid p\in P\}$. We denote it as $\mathcal{CV}(v, P)$.
%
%
%\end{definition}

\vspace{-0.1in}
\begin{definition}[$\Delta$-Rotation]
\label{def-delta}
Let $P$ be a point set and $\mathcal{F}$ be a $j$-dimensional flat in $\mathbb{R}^d$. Let $o$ be a point on $\mathcal{F}$,  $u$ be any given point in $\mathbb{R}^d$, and $h^2=\frac{1}{|P|}\sum_{p\in P}|<p-o, \frac{Proj(u)-o}{||Proj(u)-o||}>|^2$, where $Proj(u)$ is the orthogonal projection of $u$ on $\mathcal{F}$, and $<a,b>$ denotes the inner product of $a$ and $b$. Let $\mathcal{F}'$ be a rotation of $\mathcal{F}$ induced by the vector $u-o$ with angle $\theta$. Then it is a $\Delta$-rotation with respect to $P$ if $\theta \leq \arctan\frac{\Delta}{h}$.
\end{definition}
 %\vspace{-0.1in}
In the above definition, $h^{2}$ is the average squared projection length of each $p-o$  along the direction of $Proj(u)-o$.  Figure \ref{fig-delta} shows an example of $\Delta$-rotation. The following lemma shows  how the average squared distance $\delta^{2}_{P,\mathcal{F}}$ from $P$ to $\mathcal{F}$ (i.e.,  $\frac{1}{|P|}\sum_{p\in P}$ $||p, \mathcal{F}||^2$) changes after a $\Delta$-rotation.
 %\vspace{-0.1in}
\begin{lemma}
\label{lem-delta}
Let $P$ be a point set in $\mathbb{R}^{d}$, $\mathcal{F}$ be a $j$-dimensional flat, and $u$ be a point in $\mathbb{R}^d$.  If  $\mathcal{F}'$ is a $\Delta$-rotation (with respect to $P$) of $\mathcal{F}$ induced by the vector $u-o$ for some point $o\in \mathcal{F}$, then $\delta_{P,\mathcal{F}'} \le \delta_{P,\mathcal{F}} + \Delta$.
% where $\delta =  \sqrt{\frac{1}{|P|}\sum_{p\in P}||p, \mathcal{F}||^2}$ and $\delta'_{P,\mathcal{F}'}=\sqrt{\frac{1}{|P|}\sum_{p\in P}||p, \mathcal{F}'||^2}$.
%$\sqrt{\frac{1}{|P|}\sum_{p\in P}||p, \mathcal{F}'||^2} \leq\sqrt{\frac{1}{|P|}\sum_{p\in P}||p, \mathcal{F}||^2}+\Delta.$
\end{lemma}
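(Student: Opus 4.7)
The plan is to compare $||p, \mathcal{F}'||$ and $||p, \mathcal{F}||$ pointwise via the triangle inequality for distance-to-a-flat, then aggregate over $p \in P$ using Minkowski's inequality in $\ell^2$ to pass to the root-mean-square distances.

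First, for each $p \in P$, let $q_p$ denote its orthogonal projection onto $\mathcal{F}$, so that $\|p - q_p\| = ||p, \mathcal{F}||$. Let $R$ be the rigid rotation of $\mathbb{R}^d$ that fixes the $(j-1)$-dimensional axis $\tilde F$ through $o$ and carries $\mathcal{F}$ onto $\mathcal{F}'$; then $q'_p := R(q_p) \in \mathcal{F}'$, and two applications of the triangle inequality yield
\[
||p, \mathcal{F}'|| \le \|p - q'_p\| \le \|p - q_p\| + \|q_p - q'_p\| = ||p, \mathcal{F}|| + \|q_p - q'_p\|.
\]

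Next, I would evaluate $\|q_p - q'_p\|$ explicitly. Writing $\hat v = (Proj(u) - o)/||Proj(u) - o||$ and $a_p = \langle p - o, \hat v \rangle$, we can decompose $q_p - o = a_p \hat v + w_p$ with $w_p \in \tilde F$. Since $R$ fixes $\tilde F$, only the $a_p \hat v$ component moves, along a circular arc of radius $|a_p|$ subtending angle $\theta$; the chord it sweeps has length $2|a_p|\sin(\theta/2)$, so $\|q_p - q'_p\| = 2|a_p|\sin(\theta/2)$. The $\Delta$-rotation hypothesis then converts cleanly: since $\cos(\theta/2) \ge \cos\theta$ on $[0, \pi/2)$, we get $2\sin(\theta/2) = \sin\theta/\cos(\theta/2) \le \tan\theta \le \Delta/h$, whence $\|q_p - q'_p\| \le |a_p|\,\Delta/h$.

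Finally, applying Minkowski's inequality in $\ell^2$ over $P$ to the pointwise bound $||p, \mathcal{F}'|| \le ||p, \mathcal{F}|| + |a_p|\Delta/h$ gives
\[
\delta_{P, \mathcal{F}'} \le \delta_{P, \mathcal{F}} + \frac{\Delta}{h}\sqrt{\tfrac{1}{|P|}\textstyle\sum_{p \in P} a_p^2} = \delta_{P, \mathcal{F}} + \frac{\Delta}{h}\cdot h = \delta_{P, \mathcal{F}} + \Delta,
\]
using the definition $h^2 = \tfrac{1}{|P|}\sum_p a_p^2$. The one subtle choice is to bound $||p,\mathcal{F}'||$ via the rotated projection $q'_p = R(q_p)$ rather than, say, via the orthogonal projection of $p$ onto $\mathcal{F}'$ or via rotating $p$ itself: this isolates the displacement to the $\hat v$-direction of $p - o$, whose aggregate second moment is exactly the quantity $h^2$ controlled by the hypothesis; any cruder reduction would bring in the unbounded perpendicular-to-$\mathcal{F}$ components of $p$ and could not yield the clean $+\Delta$ bound.
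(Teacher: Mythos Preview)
Your proof is correct and follows essentially the same route as the paper: triangle inequality through the projection $q_p$ onto $\mathcal{F}$, a pointwise displacement bound proportional to $|a_p|\tan\theta \le |a_p|\,\Delta/h$, and then aggregation over $P$. The only notable difference is in the aggregation step---you invoke Minkowski's inequality in $\ell^2$, whereas the paper squares out $(||p,\mathcal{F}||+u_p\sin\theta)^2$ and uses $2ab\le a^2+b^2$ to arrive at the same $(\delta_{P,\mathcal{F}}+\Delta)^2$; your version is cleaner and, via Minkowski in $\ell^\tau$, would handle the $L_\tau$ extension that the paper later proves separately.
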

\begin{proof}
We use the same notations as in Definition \ref{def-delta}. For any $p\in P$, we let $u_p$ denote $|<p-o, \frac{Proj(u)-o}{||Proj(u)-o||}>|$,  and $Proj(p)$ denote its orthogonal projection on $\mathcal{F}$. Then by triangle inequality, we have
\small{
\begin{eqnarray}
||p, \mathcal{F}'||\leq ||p-Proj(p)||+||Proj(p), \mathcal{F}'||=||p, \mathcal{F}||+||Proj(p), \mathcal{F}'||. \label{for-1}
\end{eqnarray}
}
\normalsize
Meanwhile, since the rotation angle from $\mathcal{F}$ to $\mathcal{F}'$ is $\theta\leq\arctan\frac{\Delta}{h}$, we have $||Proj(p), \mathcal{F}'||=u_p \sin\theta\leq u_p \tan\theta\leq  \frac{\Delta}{h}u_p$. 
%For sake of simplicity, we denote $\delta^2=\frac{1}{|P|}\sum_{p\in P}||p, \mathcal{F}||^2$. 
Plugging this into inequality (\ref{for-1}), we get
\small{
\begin{eqnarray}
||p, \mathcal{F}'||^2 &\leq& (||p, \mathcal{F}||+u_p \sin\theta)^2=||p, \mathcal{F}||^2+2 u_p \sin\theta||p, \mathcal{F}||+(u_p \sin\theta)^2 \nonumber\\
&\leq& ||p, \mathcal{F}||^2+2  \frac{\Delta}{h}u_p ||p, \mathcal{F}||+(  \frac{\Delta}{h}u_p)^2=||p, \mathcal{F}||^2+ \frac{\Delta}{\delta_{P,\mathcal{F}}}\cdotp 2\frac{\delta_{P,\mathcal{F}}}{h}u_p ||p, \mathcal{F}||+(  \frac{\Delta}{h}u_p)^2 \nonumber\\
&\leq& ||p, \mathcal{F}||^2+\frac{\Delta}{\delta_{P,\mathcal{F}}} (||p, \mathcal{F}||^2+(\frac{\delta_{P,\mathcal{F}}}{h}u_p)^2)+(\frac{\Delta}{\delta_{P,\mathcal{F}}})^2( \frac{\delta_{P,\mathcal{F}}}{h}u_p)^2 \nonumber\\
&=&(1+\frac{\Delta}{\delta_{P,\mathcal{F}}})||p, \mathcal{F}||^2+(\frac{\Delta}{\delta_{P,\mathcal{F}}}+(\frac{\Delta}{\delta_{P,\mathcal{F}}})^2)(\frac{\delta_{P,\mathcal{F}}}{h}u_p)^2, \label{for-2}
\end{eqnarray}
}
\normalsize
where the first inequality follows from $\sin\theta\leq\frac{\Delta}{h}$, and the second inequality follows from the fact that $2ab\leq a^2+b^2$ for any pair of real numbers $a$ and $b$. Summing both sides of (\ref{for-2}) over $p$,  we have
\small{
\begin{eqnarray}
\sum_{p\in P}||p, \mathcal{F}'||^2 &\leq& (1+\frac{\Delta}{\delta_{P,\mathcal{F}}})\sum_{p\in P}||p, \mathcal{F}||^2+(\frac{\Delta}{\delta_{P,\mathcal{F}}}+(\frac{\Delta}{\delta_{P,\mathcal{F}}})^2)\sum_{p\in P}(\frac{\delta_{P,\mathcal{F}}}{h}u_p)^2 \nonumber\\
&=&(1+\frac{\Delta}{\delta_{P,\mathcal{F}}})\sum_{p\in P}||p, \mathcal{F}||^2+(\frac{\Delta}{\delta_{P,\mathcal{F}}}+(\frac{\Delta}{\delta_{P,\mathcal{F}}})^2)\frac{\delta^2_{P,\mathcal{F}}}{h^2}\sum_{p\in P}(u_p)^2. \label{for-3}
\end{eqnarray}
}
\normalsize
Since $h^2=\frac{1}{|P|}\sum_{p\in P}(u_p)^2$, $\delta^2_{P,\mathcal{F}}=\frac{1}{|P|}\sum_{p\in P}||p, \mathcal{F}||^2$, and $\delta^2_{P,\mathcal{F}'}=\frac{1}{|P|}\sum_{p\in P}||p, \mathcal{F'}||^2$. (\ref{for-3}) becomes
\small{
\begin{eqnarray}
\delta^{2}_{P,\mathcal{F}'}&=&\frac{1}{|P|}\sum_{p\in P}||p, \mathcal{F}'||^2\leq(1+\frac{\Delta}{\delta_{P,\mathcal{F}}})\frac{1}{|P|}\sum_{p\in P}||p, \mathcal{F}||^2+(\frac{\Delta}{\delta_{P,\mathcal{F}}}+(\frac{\Delta}{\delta_{P,\mathcal{F}}})^2)\frac{1}{|P|}\sum_{p\in P}||p, \mathcal{F}||^2 \nonumber\\
&=&(1+\frac{\Delta}{\delta_{P,\mathcal{F}}})^2\frac{1}{|P|}\sum_{p\in P}||p, \mathcal{F}||^2=(\delta_{P,\mathcal{F}}+\Delta)^2. \label{for-4}
\end{eqnarray}
}
\normalsize
Thus, the lemma is true.
\qed
\end{proof}

\vspace{-0.25in}
\subsection{Symmetric Sampling}
\label{sec-symmetric}

\noindent\textbf{Algorithm Symmetric-Sampling}
\newline \textbf{Input:} A set $\mathcal{S}=\{s_{1}, s_{2}, \ldots, s_{m}\}$ of $\mathbb{R}^{d}$ points and a single point $o$ in $\mathbb{R}^{d}$.
\newline \textbf{Output:} A new point set $\overline{\mathcal{S}}$.
 \begin{enumerate}
  \item Initialize $\overline{\mathcal{S}}=\emptyset$.
 
  \item Construct a new point set $-\mathcal{S}=\{2o-s_{1}, \cdots, 2o-s_{m}\}$, which is the set of symmetric points of $\mathcal{S}$ (i.e., symmetric about $o$).

  \item For each subset of $\mathcal{S}\cup -\mathcal{S}$, add its mean point  into $\overline{\mathcal{S}}$.
  \end{enumerate}

Below is the main theorem about Algorithm Symmetric-Sampling.
\vspace{-0.1in}
\begin{theorem}
\label{the-alg1}
Let $P$ be a set of  $\mathbb{R}^d$ points,  and $\mathcal{S}$ be its random sample of size $r=\frac{4j^2}{\gamma\epsilon}\ln\frac{j}{\epsilon}$,  where $0<\gamma \le 1$ and $\epsilon >0$ are two small numbers. Let $\mathcal{F}$ be a $j$-dimensional flat,  $o$ be a given point on $\mathcal{F}$,  and  $\overline{\mathcal{S}}$ be the set of points returned by Algorithm Symmetric-Sampling on $\mathcal{S}$ and $o$.
%We take a random sample $\mathcal{S}$ with size $r=\frac{4j^2}{\gamma\epsilon}\ln\frac{j}{\epsilon}$ from $P$, and run Algorithm $1$ on it with $o$. 
Then with probability $(1-\epsilon)^4$,  $\overline{\mathcal{S}}$ contains one point $s$ such that the hyperplane $\mathcal{F}'$ rotated from $\mathcal{F}$ and induced by $\overrightarrow{os}$  satisfies the following inequality,
$|P'|\delta^{2}_{P',\mathcal{F}'} \le  (1+5\sqrt{j}r)^2 |P| \delta^{2}_{P,\mathcal{F}}$,
%$\sum_{p\in P'}||p, \mathcal{F}'||^2\leq (1+5\sqrt{j}r)^2\sum_{p\in P}||p, \mathcal{F}||^2,$
where $P'$ is a subset of $P$ with size $|P'|\geq (1-\frac{\gamma}{j})|P|$.
%$\Delta$-rotation for a subset $P'\in P$, where $\Delta=(1+\sqrt{\eta\epsilon})\sqrt{j}r \frac{1}{|P|}\delta$, $\delta^2=\frac{1}{|P|}\sum_{p\in P}||p, \mathcal{F}||^2$, and $|P'|\geq (1-\frac{\gamma}{j})|P|$.
\end{theorem}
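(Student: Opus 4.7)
The plan is to combine slab partitioning with symmetric sampling and the $\Delta$-rotation lemma in three stages, and then extract the desired $s\in\overline{\mathcal{S}}$. First I would project $P$ orthogonally onto $\mathcal{F}$ to obtain a $j$-dimensional point set $P^{\ast}$, and set up an axis-aligned slab partition $\{\Pi_0,\ldots,\Pi_j\}$ centered at $o$ inside $\mathcal{F}$, choosing each slab half-width so that at most a $\gamma/j^2$ fraction of $P^{\ast}$ lies outside any single slab. A union bound then makes $\Pi_j$ contain the projections of at least $(1-\gamma/j)|P|$ points, and I would take $P'$ to be exactly this preimage, producing the cardinality bound in the statement.

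Next I would invoke Algorithm Symmetric-Sampling to manufacture the $j$ boundary points $\rho_1,\ldots,\rho_j$ required by Lemma \ref{lem-slab}. For each axis $l$, the target boundary $\partial\mathcal{R}_l\cap\Pi_{l-1}$ is hit by the centroid of an appropriate subpopulation $T_l\subseteq P\cup -P$; the reflected copy $-\mathcal{S}$ is essential here because the centroid of an ordinary subset of $\mathcal{S}$ cannot reach targets in the negative half of a slab. An Inaba--Katoh--Imai style concentration bound \cite{IKI} applied to the restriction of $\mathcal{S}\cup -\mathcal{S}$ to $T_l$, with sample size $r$, guarantees that some subset mean in $\overline{\mathcal{S}}$ lies very close to the true centroid of $T_l$; the $\ln(j/\epsilon)$ factor built into $r$ is what buys a union bound over the $j$ axes, and the stated probability $(1-\epsilon)^4$ should emerge as a product over a constant number of such concentration events.

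With the $\rho_l$'s in hand, Lemma \ref{lem-slab} selects an index $l_0$ so that the slab determined by $\overrightarrow{o\rho}_{l_0}$, amplified by a factor at most $\sqrt{j}$, contains all of $\Pi_j$. This translates directly into a lower bound on $\|Proj(\rho_{l_0})-o\|$ in terms of the widths of $\Pi_j$, which in turn is comparable to the quantity $h$ of Definition \ref{def-delta} up to the $\sqrt{j}$ amplification factor; meanwhile the perpendicular offset $\|\rho_{l_0}-Proj(\rho_{l_0})\|$ is controlled via an IKI-style variance bound in terms of $\delta_{P,\mathcal{F}}$ and $r$. Combining these two estimates makes the rotation of $\mathcal{F}$ induced by $\overrightarrow{os}$ with $s=\rho_{l_0}$ a $\Delta$-rotation for a $\Delta$ proportional to $\sqrt{j}\cdot r\cdot\delta_{P,\mathcal{F}}$, and Lemma \ref{lem-delta} then squares up to the claimed $(1+5\sqrt{j}\,r)^2$ factor after the trivial replacement $|P'|\le|P|$ on the right-hand side.

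The main obstacle, I expect, will be coordinating two competing requirements on the \emph{same} candidate $\rho_{l_0}$: its projection on $\mathcal{F}$ must be \emph{far} from $o$ (so that $h$ is large), while the point itself must be \emph{close} to $\mathcal{F}$ (so that $\Delta$ is small). These pull in opposite directions, and resolving the tension is precisely where symmetric sampling (for a rich pool of candidates along every axis) and the Hyperbox Lemma (for selecting the axis whose $h$ is largest, up to a $\sqrt{j}$ factor) cooperate. A secondary subtlety is ensuring that the $\sqrt{j}$-amplification of the chosen slab does not drag points outside $P'$ into the estimate; this can be arranged by picking the initial slab widths a constant factor tighter than the amplified widths eventually need to be, at the cost of absorbing this slack into the $5\sqrt{j}\,r$ coefficient.
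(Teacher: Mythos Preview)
Your overall architecture matches the paper: project onto $\mathcal{F}$, build a slab partition losing a $\gamma/j^2$ fraction per axis to define $P'$, use symmetric sampling to produce candidates $\rho_1,\dots,\rho_j$, apply Lemma~\ref{lem-slab} to select $\rho_{l_0}$, bound the perpendicular offset, and finish with Lemma~\ref{lem-delta}. That skeleton is correct.

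The gap is in how you propose to manufacture the $\rho_l$'s. You want to find a subpopulation $T_l\subseteq P\cup(-P)$ whose centroid lies on $\partial\mathcal{R}_l\cap\Pi_{l-1}$ and then use Inaba--Katoh--Imai concentration to place a subset mean of $\mathcal{S}\cup(-\mathcal{S})$ nearby. This does not work: there is no natural $T_l$ whose centroid sits on the slab boundary, and more importantly IKI only controls $\|\text{sample mean}-\text{population mean}\|$ in terms of the population variance; it gives no \emph{lower bound} on the $l$-th coordinate of the sample mean, which is what Lemma~\ref{lem-slab} needs. If the mass of $P\cap\Pi_{l-1}$ is concentrated near the $l$-th axis origin, the population centroid of any reasonable $T_l$ will also sit near the origin, and IKI pins $\rho_l$ there too.

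The paper's mechanism is different and purely combinatorial. It first uses Lemma~\ref{lem-select} (not IKI) to guarantee that the sample $\mathcal{S}$ contains at least one point from each shell $\Pi_{l-1}\setminus\Pi_l$; this is where the $\ln(j/\epsilon)$ and $j^2/\gamma$ factors in $r$ are spent. Then it sets $B_l$ to be the points of $(-A_l)\cup A_l$ with nonnegative $l$-th coordinate and takes $\rho_l$ to be its mean. Because one point of $B_l$ has $l$-th coordinate at least $a_l$ (the half-width of $\mathcal{R}_l$) and $|B_l|\le r$, a crude average gives $l(\rho_l)\ge a_l/r$. So the $\rho_l$'s do \emph{not} sit on the original $\partial\mathcal{R}_l$; they define a \emph{second} slab partition $\{\mathcal{R}'_l\}$ whose widths are within a factor $r$ of the original ones, and Lemma~\ref{lem-slab} is applied to this second partition. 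The factor $r$ in the final $5\sqrt{j}\,r$ is literally the sample size entering through this $1/|B_l|$ averaging, not an IKI error term. The role of the reflection $-\mathcal{S}$ is to let you fold the sample so that every point of $B_l$ has nonnegative $l$-th coordinate, preventing cancellation in this average; it is not about reaching ``targets in the negative half.''

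IKI (Lemma~\ref{lem-dis}) is used only once, and exactly where you said it would be used later: to bound the perpendicular offset $\|\rho_{l_0}-Proj(\rho_{l_0})\|\le 5\,\delta_{P,\mathcal{F}}$ (the paper's Claim~1). The two probability events---Lemma~\ref{lem-select} succeeding on all shells, and IKI succeeding for the perpendicular component---each cost $(1-\epsilon)^2$, giving the stated $(1-\epsilon)^4$. Your ``secondary subtlety'' about amplification dragging in extra points does not arise, since $P'=P\cap\Pi_j$ is fixed before any amplification and the bound $\|Proj(\rho_{l_0})-o\|\ge h/(\sqrt{j}\,r)$ is derived with $h$ computed over $P'$.
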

\vspace{-0.05in}
Before proving  Theorem \ref{the-alg1}, we first introduce the following two lemmas.
\vspace{-0.05in}
 \begin{lemma}
\label{lem-select}
Let $S$ be a set of $n$ elements, and $S'$ be a subset of $S$ with size
%Given a set $S$ with $n$ element, and $S'\subset S$,
$|S'|=\alpha n$. If randomly select $\frac{t}{\ln(1+\alpha)}\ln\frac{t}{\eta}=O(\frac{t}{\alpha}\ln\frac{t}{\eta})$ elements from $S$, with probability at least $1-\eta$, the sample contains at least $t$ elements from $S'$ (see Appendix for proof).
\end{lemma}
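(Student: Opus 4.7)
The plan is a simple union-bound / partition argument in the spirit of a coupon-collector analysis. Set $N := \frac{t}{\ln(1+\alpha)}\ln\frac{t}{\eta}$ and partition the $N$ draws into $t$ disjoint rounds, each consisting of $k := N/t = \frac{1}{\ln(1+\alpha)}\ln\frac{t}{\eta}$ samples (ignoring for the moment that $N/t$ may need to be rounded up, which costs at most one extra sample per round and does not affect the bound). The key observation is that if every one of the $t$ rounds contains at least one element of $S'$, then the full sample automatically contains at least $t$ elements of $S'$, one per round. So it suffices to upper-bound the probability that some round misses $S'$ entirely by $\eta$.

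For a single round, I will bound the probability of missing $S'$ by $(1-\alpha)^k$. If the sampling is with replacement this is immediate. If it is without replacement, the probability equals the hypergeometric expression $\binom{n-\alpha n}{k}/\binom{n}{k} = \prod_{i=0}^{k-1}\frac{n-\alpha n - i}{n-i}$, and each factor $\frac{n-\alpha n - i}{n-i}$ is at most $\frac{n-\alpha n}{n}=1-\alpha$, so the product is still at most $(1-\alpha)^k$. The numerical punchline is two standard inequalities chained together: first $1-\alpha \le e^{-\alpha}$, which gives $(1-\alpha)^k \le e^{-\alpha k} = \exp\!\bigl(-\tfrac{\alpha}{\ln(1+\alpha)}\ln\tfrac{t}{\eta}\bigr)$; second, $\ln(1+\alpha)\le \alpha$ for $\alpha>0$ (immediate from concavity of $\ln$, or the Taylor series), which yields $\alpha/\ln(1+\alpha)\ge 1$. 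Combining, the per-round miss probability is at most $e^{-\ln(t/\eta)} = \eta/t$.

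A union bound over the $t$ rounds then shows that the probability some round fails to hit $S'$ is at most $t\cdot(\eta/t)=\eta$, which is exactly the claimed bound of $1-\eta$ on the success probability. The asymptotic form $N = O(\frac{t}{\alpha}\ln\frac{t}{\eta})$ stated in the lemma follows from $\ln(1+\alpha)=\Theta(\alpha)$ on $(0,1]$. There is no real obstacle here; the only point that needs a little care is the without-replacement subtlety, since the $t$ rounds are not mutually independent when drawing without replacement, but the union bound only requires the marginal per-round bound, and the hypergeometric comparison above supplies it uniformly. The rest is bookkeeping.
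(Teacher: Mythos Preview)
Your proof is correct and follows essentially the same route as the paper: partition the sample into $t$ rounds of size $\frac{1}{\ln(1+\alpha)}\ln\frac{t}{\eta}$, bound the per-round miss probability by $\eta/t$, and combine. The only cosmetic differences are that the paper bounds $(1-\alpha)^k$ directly via $-\ln(1-\alpha)=\ln(1+\frac{\alpha}{1-\alpha})\ge \ln(1+\alpha)$ rather than through $e^{-\alpha}$, and aggregates the rounds with $(1-\eta/t)^t\ge 1-\eta$ instead of a union bound; your extra remark on sampling without replacement is not in the paper but is a welcome clarification.
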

\vspace{-0.05in}
The following lemma has been proved in \cite{IKI}. 
\vspace{-0.05in}
\begin{lemma}[\cite{IKI}]
\label{lem-dis}
Let $S$ be a set of $n$ points in $\mathbb{R}^d$ space,  and $T$ be a subset with cardinality $m$ randomly selected from $S$.  Let $\overline{x}(S)$ and $\overline{x}(T)$ be the mean points of $S$ and $T$ respectively. With probability $1-\eta$, $||\overline{x}(S)-\overline{x}(T)||^2<\frac{1}{\eta m}Var^{0}(S)$, where $Var^{0}(S)$ $=\frac{\sum_{s\in S}||s-\overline{x}(S)||^2}{n}$.
\end{lemma}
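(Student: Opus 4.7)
The plan is to prove this as a standard second-moment/Chebyshev-style concentration statement for the sample mean viewed as a random vector in $\mathbb{R}^d$. First I would reduce the vector statement to a scalar one by working with the squared norm $\|\overline{x}(T)-\overline{x}(S)\|^2$ directly. Writing $\overline{x}(T)=\frac{1}{m}\sum_{i=1}^m X_i$ where $X_1,\dots,X_m$ are the random elements drawn from $S$, the first key observation is $\mathbb{E}[X_i]=\overline{x}(S)$ for each $i$, so $\mathbb{E}[\overline{x}(T)]=\overline{x}(S)$; hence $\|\overline{x}(T)-\overline{x}(S)\|^2$ is simply the squared deviation of the sample mean from its expectation.

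Next I would compute, or upper bound, the expectation of this squared deviation. Expanding coordinate-wise, $\mathbb{E}\|\overline{x}(T)-\overline{x}(S)\|^2=\sum_{k=1}^d \mathrm{Var}(\overline{x}(T)_k)$, and each coordinate-variance decomposes as $\frac{1}{m^2}\sum_{i,j}\mathrm{Cov}((X_i)_k,(X_j)_k)$. If the sample is drawn with replacement, the cross terms vanish by independence, yielding $\frac{1}{m}\cdot\frac{1}{n}\sum_{s\in S}\|s-\overline{x}(S)\|^2=\frac{1}{m}Var^0(S)$ after summing over $k$. If the sample is drawn without replacement, the off-diagonal covariances are negative and produce the familiar finite-population factor $(1-\tfrac{m-1}{n-1})$, which only strengthens the bound, so in either sampling model
\[
\mathbb{E}\bigl[\|\overline{x}(T)-\overline{x}(S)\|^2\bigr]\;\le\;\frac{1}{m}Var^0(S).
\]

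Finally I would apply Markov's inequality to the non-negative random variable $\|\overline{x}(T)-\overline{x}(S)\|^2$ with threshold $\frac{1}{\eta m}Var^0(S)$:
\[
\Pr\Bigl[\|\overline{x}(T)-\overline{x}(S)\|^2\ge \tfrac{1}{\eta m}Var^0(S)\Bigr]\;\le\;\frac{\mathbb{E}\|\overline{x}(T)-\overline{x}(S)\|^2}{\frac{1}{\eta m}Var^0(S)}\;\le\;\eta,
\]
and take complements to obtain the desired high-probability bound $\|\overline{x}(S)-\overline{x}(T)\|^2<\frac{1}{\eta m}Var^0(S)$ with probability at least $1-\eta$.

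The proof is essentially a two-line calculation once the right decomposition is in place, so there is no deep obstacle; the only mild subtlety is being careful about the sampling model (with vs.\ without replacement). I would handle this by explicitly writing the covariance expansion so that the without-replacement case inherits the bound from the with-replacement case via the non-positivity of the off-diagonal covariance terms, making the statement robust to whichever convention the ambient algorithm uses.
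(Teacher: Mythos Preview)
Your argument is correct and is exactly the standard second-moment computation followed by Markov's inequality that underlies this lemma. Note, however, that the paper does not supply its own proof: the lemma is quoted from \cite{IKI} and used as a black box, so there is nothing in the paper to compare against beyond the citation. Your write-up would serve perfectly well as the omitted proof, and your care in covering both sampling-with-replacement and sampling-without-replacement (via the negative off-diagonal covariances) is a nice touch that makes the bound hold regardless of which convention the surrounding algorithm adopts.
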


%Now, we prove Theorem \ref{the-alg1}.

\begin{proof}[\textbf{of Theorem \ref{the-alg1}}]
%Let $\delta^2=\frac{1}{|P|}\sum_{p\in P}||p, \mathcal{F}||^2$, and 
%sake of simplicity, we denote $\delta^2=\frac{1}{|P|}\sum_{p\in P}||p, \mathcal{F}||^2$. 
% To prove this theorem, we first assume that there exists such a 
 First, imagine that if we can show the existence of (1)  a subset $P'\subset P$ with size $|P'|\geq (1-\frac{\gamma}{j})|P|$ and  (2) a $j$-flat $\mathcal{F}'$ which is a $\Delta$-rotation (with $\Delta=\frac{5\sqrt{j}r}{\sqrt{1-\gamma/j}}\delta_{P,\mathcal{F}}$) of $\mathcal{F}$ with respect to $P'$, then by Lemma \ref{lem-delta}, we have $\delta_{P',\mathcal{F}'} \le \delta_{P',\mathcal{F}} + \Delta$.  
% \begin{eqnarray}
%\delta_{P',\mathcal{F}'} \le \delta_{P,\mathcal{F}} + \Delta \label{for-5}
%\end{eqnarray}
%%$$\textbf{(5).}\sqrt{\frac{1}{|P'|}\sum_{p\in P'}||p, \mathcal{F}'||^2} \leq\sqrt{\frac{1}{|P'|}\sum_{p\in P'}||p, \mathcal{F}||^2}+\Delta.$$
Meanwhile, since $\sum_{p\in P'}||p, \mathcal{F}||^2\leq \sum_{p\in P}||p, \mathcal{F}||^2$ and $|P'|\geq (1-\frac{\gamma}{j})|P|$, we have $\delta^{2}_{P',\mathcal{F}} \le \frac{1}{1-\gamma/j}\delta^{2}_{P,\mathcal{F}}$. Combining the two inequalities, we have  $ \delta_{P',\mathcal{F}'} \le (\sqrt{\frac{1}{1-\gamma/j}}+\frac{5\sqrt{j}r}{\sqrt{1-\gamma/j}}) \delta_{P,\mathcal{F}} \Rightarrow |P'|\delta^{2}_{P',\mathcal{F}'} \le  (1+5\sqrt{j}r)^2 |P| \delta^{2}_{P,\mathcal{F}}$.
This means that we only need to focus on proving the existence of such $P'$ and $\mathcal{F}'$.
% is a $\Delta$-rotation of $\mathcal{F}$ respect to $P'$. 

%\begin{eqnarray}
%\delta^{2}_{P',\mathcal{F}} \le \frac{1}{1-\frac{\gamma}{j}}\delta^{2}_{P,\mathcal{F}} \label{for-6}
%\end{eqnarray}
%%$$\textbf{(6). }\frac{1}{|P'|}\sum_{p\in P'}||p, \mathcal{F}||^2\leq \frac{1}{1-\frac{\gamma}{j}}\frac{1}{|P|}\sum_{p\in P}||p, \mathcal{F}||^2.$$

%Combining \textbf{(5)} and \textbf{(6)}, we have
%$$\sqrt{\frac{1}{|P'|}\sum_{p\in P'}||p, \mathcal{F}'||^2} \leq (\sqrt{\frac{1}{1-\frac{\gamma}{j}}}+\frac{5\sqrt{j}r}{\sqrt{1-\frac{\gamma}{j}}})\sqrt{\frac{1}{|P|}\sum_{p\in P}||p, \mathcal{F}||^2}$$
%$$\Longrightarrow \sum_{p\in P'}||p, \mathcal{F}'||^2\leq (1+5\sqrt{j}r)^2\sum_{p\in P}||p, \mathcal{F}||^2.$$

%Thus, in the following part we just need focus on proving the existence of $P'$, and that $\mathcal{F}'$ is a $\Delta$-rotation of $\mathcal{F}$ respect to $P'$. 

Without loss of generality, we assume that the given point $o$ is the origin. To prove the theorem, we first assume that all points of $P$ locate on $\mathcal{F}$, which
would be the case if project all points of $P$ onto $\mathcal{F}$ or equivalently each point in this case can be viewed as the projection of some point of $P$ in $\mathbb{R}^{d}$. 
%and 

%then show the existence of $P'$ and $\mathcal{F}'$ even if $P$ is not in $\mathcal{F}$.

%Then, consider the following two cases: (a) $P$ is on $\mathcal{F}$, and (b) $P$ is in $\mathbb{R}^{d}$.   
%%%%%%%%%%%%%%%%%%%%%%%%%%%%%%%%%%%%%%%%%%%%%%%%%%%%%%%%%%%%%%%%%%%%%%%
%We first consider a simple case in which all points of $P$ locate on $\mathcal{F}$.
%, (i.e., a $j$-dimensional subspace of $\mathbb{R}^d$). 

Now we consider the case that $P$ is in $\mathcal{F}$.  First, we  construct a slab partition  $\{\Pi_0, \Pi_1, \cdots, \Pi_j\}$ for the $j$-dimensional subspace spanned by 
%that there is a Slab Partition on 
$\mathcal{F}$ with $\{\mathcal{R}_1, \cdots, \mathcal{R}_j\}$ being the corresponding slabs such that $|P\cap\Pi_l|=(1-\frac{\gamma}{j^2})|P\cap\Pi_{l-1}|$. 
Clearly, this can be easily obtained by iteratively (starting from $l=1$) selecting the slab $R_{l}$ as the one which exclude the $\frac{\gamma}{j^2}|P \cap \Pi_{l-1}|$ points whose $l$-th coordinate have the largest absolute value.
%with the largest absolute values on their $l$-th dimension coordinate in the subspace of  $\mathcal{F}$.
%as $P\cap\Pi_l$ is simply a shrinking on the $l$-th coordinate for $P\cap\Pi_{l-1}$, i.e., removing the top $\frac{\gamma}{j^2}|P|$ points with $l$-coordinate has the 
%fraction points with largest absolute values on the $l$-th coordinate. 

From the slab partition, it is easy to see that 
%It is easy to know that 
for any $0\leq l\leq j$,  $\frac{|P\cap\Pi_l|}{|P|}=(1-\frac{\gamma}{j^2})^l\geq 1-\frac{\gamma}{j}$, and $\frac{|P\cap(\Pi_{l-1}\setminus\Pi_l) |}{|P|}\geq \frac{\gamma}{j^2}(1-\frac{\gamma}{j})$.  Thus, if  we set  $t=\frac{1}{\epsilon}$ and $\eta=\frac{\epsilon}{j}$, and use Lemma \ref{lem-select}  to take a random sample from $P$ of size $\frac{t\ln\frac{t}{\eta}}{1-\gamma/j}=\frac{\frac{1}{\epsilon}\ln\frac{j}{\epsilon^2}}{1-\gamma/j}$,  with probability $(1-\frac{\epsilon}{j})^j \geq 1-\epsilon$, the sample contains at least $\frac{1}{\epsilon}$ points from each $P\cap\Pi_l$. Also, if we set  $t=1$, and use Lemma \ref{lem-select} to take a random sample from $P$ of size $\frac{t\ln\frac{t}{\eta}}{(\gamma/j^2)(1-\gamma/j)}=\frac{\ln\frac{j}{\epsilon}}{(\gamma/j^2)(1-\gamma/j)}$, with probability $(1-\frac{\epsilon}{j})^j \geq 1-\epsilon$, the sample contains at least $1$ point from each $P\cap(\Pi_{l-1}\setminus\Pi_l)$. This means that if we take a 
%In total, if we take a 
random sample $\mathcal{S}$ of size $\frac{4j^2}{\gamma\epsilon}\ln\frac{j}{\epsilon}\geq \max\{\frac{\frac{1}{\epsilon}\ln\frac{j}{\epsilon^2}}{1-\gamma/j},\frac{\ln\frac{j}{\epsilon}}{(\gamma/j^2)(1-\gamma/j)}\}$, then 
%and denote $A_l=\mathcal{S}\cap(P\cap\Pi_{l-1})$ for $1\leq l\leq j$, then 
with probability $(1-\epsilon)^2$, we have $\frac{4j^2}{\gamma\epsilon}\ln\frac{j}{\epsilon}\geq|A_l|\geq \frac{1}{\epsilon}$ and  $|A_{l}\setminus\Pi_{l}|\geq 1$ for any $l$ (by the fact that $|\mathcal{S}\cap P\cap(\Pi_{l-1}\setminus\Pi_l)|\geq 1$), where $A_l=\mathcal{S}\cap(P\cap\Pi_{l-1})$.

Let $l( p)$ be the $l$-th coordinate of a point $p$, 
%We denote the $l$-th coordinate of any point $p$ as $l(p)$, 
$B_l=\{p\mid p\in-A_l\cup A_l, l(p)\geq 0\}$, and $\rho_l$  be the mean point of $B_l$ for $1\leq l\leq j$. Note that $\rho_l$ is contained in the output of Algorithm Symmetric-Sampling.
%In the following proof, we will prove that the desired $s$ can be found from $\{\rho_1, \cdots, \rho_j\}$. 

We define another sequence of slabs $\{\mathcal{R}'_1, \cdots, \mathcal{R}'_j\}$, where each $\mathcal{R}'_l$ is the slab axis parallel to $\mathcal{R}_l$ and with $\rho_l$ incident to one of its bounding hyperplanes. These slabs induce another slab partition on $\mathcal{F}$ with $\Pi'_l=\Pi'_{l-1}\cap\mathcal{R}'_l$. By Lemma \ref{lem-slab}, we know that there exists one point $\rho_{l_0}$ such that the slab determined by $\overrightarrow{o\rho}_{l_0}$ contains $\Pi'_j$ after amplified by a factor of $\sqrt{j}$. 

By the fact that $|A_l\setminus\Pi_{l}|\geq 1$ and the symmetric property of $-A_l\cup A_l$, we know that $|B_l\setminus\Pi_{l}|\geq 1$. Denote the width of $\Pi_l$ and $\Pi'_l$ in the $l$-th dimension as $a_l$ and $a'_l$ respectively. Then, $\frac{a'_l}{a_l}\geq \frac{1}{|B_l|}\geq\frac{\gamma\epsilon}{4j^2 \ln\frac{j}{\epsilon}} $ (note that $|B_l|=|A_l|$). 
 %For sake of simplicity, we 
Let $r=\frac{4j^2 \ln\frac{j}{\epsilon}}{\gamma\epsilon} $. Then,  $\Pi_l$ and $\Pi'_l$ differ in width (in the $l$-th dimension) by a factor no more than $r$. Thus, by the slab partition procedure, the difference of the width between $\Pi_j$ and $\Pi'_j$ is no more than a factor of $r$ in any direction on $\mathcal{F}$. As a result, the slab determined by $\overrightarrow{o\rho}_{l_0}$ contains  $\Pi_j$ after amplified by a factor of $\sqrt{j}r$.

%Thus, we can first assume $\mathcal{H}=\mathcal{H}'$, and prove the extension factor is $\sqrt{j}$ rather than $\sqrt{j}r$, then for the $\mathcal{H}\neq\mathcal{H}'$ case, the extension factor would be no more than $\sqrt{j}r$. So below we assume $\mathcal{H}=\mathcal{H}'$

%%%%%%%%%%%%%%%%%%%%%%%%%%%%%%%%%%%%%%%%%%%%%%%%%%%%%%%%%%

%By Lemma \ref{lem-slab}, if the whole $P$ locates on $\mathcal{F}$ (i.e., $j$-dimensional subspace), then the output from Algorithm $1$ contains a point $s$, such that the slab determined by $\overrightarrow{os}$ covers the whole $P_j$ via an amplification by a factor $\frac{4j^{\frac{5}{2}}\ln\frac{j}{\epsilon}}{\gamma\epsilon}$, where  $P_j$ is from the Slab Partition on $P$ according to Definition \ref{def-slab}. It is easy to know that if we assign $P'=P_j$, the rotation formed by $\overrightarrow{os}$ is a $\Delta$-rotation of $\mathcal{F}$ respect to $P'$, and $|P'|=|P_j|\geq (1-\frac{\gamma}{j})|P|$. 

Now, we come back to the case that $P$ locates in $\mathbb{R}^d$ rather than $\mathcal{F}$. First we let $P'=P\cap\Pi_j$.  Then, we have $|P'|\geq (1-\frac{\gamma}{j})|P|$. Note that we can always use the projection of $P$ whenever it is not in $\mathcal{F}$. Thus we can still select $P'$ by using slab partition on $\mathcal{F}$. This ensures the existence of $P'$. Next, we prove the existence of $\mathcal{F}'$.   
%Now let us focus on $\rho_{l_0}$, and come back to the case that $P$ locates in  

In this case, $\rho_{l_0}$ may not be in $\mathcal{F}$.  We will prove the rotation induced by $\overrightarrow{o\rho}_{l_0}$ is a $\Delta$-rotation of $\mathcal{F}$ with respect to $P\cap\Pi_j$. 
%For sake of simplicity, we denote $P'=P\cap\Pi_j$, then $|P'|\geq (1-\frac{\gamma}{j})|P|$. 
We let $Proj(\rho_{l_0})$ denote the projection of $\rho_{l_0}$ on $\mathcal{F}$, $\overrightarrow{u}=\frac{Proj(\rho_{l_0})-o}{||Proj(\rho_{l_0})-o||}$, and $h^2=\frac{1}{|P'|}\sum_{p\in P'}|<p, \overrightarrow{u}>|^2$. By the way how $\rho_{l_0}$ is generated,
% in Algorithm Symmetric-Sampling, 
we know that $||Proj(\rho_{l_0})-o||\geq \frac{1}{\sqrt{j}r}\max_{p\in P'}\{|<p, \overrightarrow{u}>|\}\geq \frac{1}{\sqrt{j}r} h$. Thus, in order to prove it is a $\frac{5\sqrt{j}r}{\sqrt{1-\gamma/j}}\delta_{P,\mathcal{F}}$-rotation, we just need to prove $||\rho_{l_0}-Proj(\rho_{l_0})||\leq \frac{5}{\sqrt{1-\gamma/j}}\delta_{P,\mathcal{F}}$. Recall that $\rho_{l_0}$ is the mean point of $B_{l_0}$, and $|B_{l_0}|=|A_{l_0}|\geq\frac{1}{\epsilon}$. Then we have the following claim (see Appendix for proof). 

\begin{claim}[1]
With probability $(1-\epsilon)^2$, $||\rho_{l_0}-Proj(\rho_{l_0})||\leq 5\delta_{P,\mathcal{F}}$.
\end{claim}

%We put the proof for Claim (1) at the end. 
Claim (1) implies that $||\rho_{l_0}-Proj(\rho_{l_0})||\leq 5\delta_{P,\mathcal{F}}\leq \frac{5}{\sqrt{1-\gamma/j}}\delta_{P,\mathcal{F}}$, which means that the hyperplane $\frac{5\sqrt{j}r}{\sqrt{1-\gamma/j}}\delta_{P,\mathcal{F}}$-rotated from $\mathcal{F}$ is the desired $\mathcal{F}'$.
 
 As for success probability, since the success probability of containing $\rho_{l_0}$ is $(1-\epsilon)^2$ as shown in previous analysis, and the success probability for Claim (1) is also $(1-\epsilon)^2$, the success probability for Theorem \ref{the-alg1} is thus $(1-\epsilon)^4$.
\qed
\end{proof}

 \vspace{-0.3in}
%Combining Lemma \ref{lem-symmetric} and Lemma \ref{lem-slab}, we can get the following theorem.
\section{Approximation Algorithm for Projective Clustering}
\label{sec-general}
 \vspace{-0.1in}
This section presents a $(1+\epsilon)$-approximation algorithm for the projective clustering problem. For ease of understanding, we first give an outline of the algorithm.
% in Section \ref{sec-outline}.
 \vspace{-0.2in}
\subsection{Algorithm Outline}
 \label{sec-outline}
  \vspace{-0.1in} 
  
As mentioned in previous section, the objective of our approximation algorithm for the projective clustering problem is to determine $k$ $j$-dimensional flats such that $(1-\gamma)|P|$ of the input points in $P$ can be fit into the obtained $k$ flats, and the total objective value is no more than $(1+\epsilon)\delta^{2}_{opt}$, where $\delta^{2}_{opt}$ is the objective value of an optimal solution for all points in $P$.  Let   $\mathcal{C}=\{C_1, \cdots, C_k\}$ be the $k$ clusters in an optimal solution for $P$. 
%Assume the optimal $k$ clusters are for $P$. 
 Our approach only consider those clusters (called {\em large clusters}) in $\mathcal{C}$ with size at least $\frac{\gamma}{2k}|P|$, since the union of the remaining clusters has a total size no more than $\frac{\gamma}{2}|P|$. Also, 
 for each large cluster $C_{i}$, our approach generates a flat to fit $(1-\frac{\gamma}{2})|C_{i}|$ of its points. Thus, in total we fit at least $(1-\frac{\gamma}{2})^2|P| \geq (1-\gamma)|P|$ of points to the resulting $k$ flats. 

Consider a large cluster $C_{i}$. 
%Now let us consider each cluster $C_i$ larger than $\frac{\gamma}{2k}|P|$ separately. 
Let $\mathcal{F}_i$ be its optimal fitting flat. 
%Assume the optimal fitting flat for $C_i$ is $\mathcal{F}_i$. 
It is easy to see that $\mathcal{F}_i$ passes through the mean point $o_{i}$ of $C_i$.
 %which is denoted as $o_i$. 
 Let $\delta^2_i=\frac{1}{|C_i|}\sum_{p\in C_i}||p, \mathcal{F}_i||^2$ be the optimal objective value of $C_{i}$. 
 To emulate the behavior of $C_{i}$, we first assume that we know the exact position of 
 %At first, we assume that we could get the exact position of 
 $o_i$. If we run Algorithm Symmetric-Sampling on $o_i$ and a random sample of $C_{i}$ (note that since $C_{i}$ is a large cluster, by Lemma \ref{lem-select}, we can obtain enough points from $C_{i}$ by randomly sampling  $P$ directly), 
 by Theorem \ref{the-alg1}, we can get a point $s$, such that $\overrightarrow{o_i s}$ induces a $\Delta$-rotation for $\mathcal{F}_i$ with respect to a subset of $C_{i}$ with at least $(1-\frac{\gamma}{2j})|C_{i}|$ points, where $\Delta=\frac{5\sqrt{j}r}{\sqrt{1-\gamma/j}}\delta_i$. If we recursively run Algorithm Symmetric-Sampling $j$ times,  we obtain a sequence of $\Delta$-rotations and $j$ vectors which form a $j$-dimensional flat $\mathcal{F}'_i$ such that $\frac{1}{|C'_i|}\sum_{p\in C'_i}||p, \mathcal{F}'_i||^2\leq (1+5\sqrt{j}r)^j \delta^2_i$ (i.e., $\mathcal{F}'_i$ induces a $(1+5\sqrt{j}r)^j$-approximation for $C_{i}$), where $C'_i$ is a subset of $C_i$ with at least $(1-\frac{\gamma}{2j})^j|C_{i}|\ge 1-\frac{\gamma}{2}|C_{i}|$ points. 
 %(See the  {\em Recursively Project Algorithm} in Section \ref{sec-rpa}.) 
 %This progress is formally described as {\em Recursively Project Algorithm} in Section \ref{sec-rpa}. 
Since Algorithm Symmetric-Sampling enumerates all subsets of $-\mathcal{S}\cup\mathcal{S}$, the above recursive procedure (called  {\em Algorithm Recursive-Projection}; see Section \ref{sec-rpa}.) forms  a hierarchical tree.  

%To make the above procedure work, we need to determine the position of $o_{i}$ and sample  points from $C_{i}$. 

%Since $C_{i}$ is a large cluster (i.e., $\frac{|C_i|}{|P|}\geq\frac{\gamma}{2k}$), by Lemma \ref{lem-select} we know that  each random sampling can give us 
%%doing sampling in Recursively Project Algorithm, 
%%we can get
 %enough points from $C_i$ by increasing the sample size. This means that the above recursive procedure can be performed $j$ times with each on a random sample. 
 % which also enables us to find an approximate point of $o_{i}$. 
%At the end, we 
 %%Thus, if we 
%%Note that the whole algorithm is just based on \textbf{uniform} random sampling.  Thus, if we 
%%run Algorithm Recursive-Projection $k$ time, we 
%obtain $j$ vectors, which jointly form a flat  and  with ratio $(1+5\sqrt{j}r)^j$. 
%Furthermore, 

From the proof of Theorem \ref{the-rpa}, we will know that the approximation ratio (i.e., $(1+5\sqrt{j}r)^{2j}$) is solely determined by the $\Delta$-rotation. In other words, if we could reduce the value of $\Delta$ from $\frac{5\sqrt{j}r}{\sqrt{1-\gamma/j}}\delta_i$ to $\frac{\epsilon}{2j}\delta_i$,  the approximation ratio would be reduced to $(1+\frac{\epsilon}{2j})^{2j}\approx 1+\epsilon$. 
To achieve this,  our idea is to find 
%The reduction can be done via finding 
a point closer to $\mathcal{F}_i$ to induce the desired $\Delta$-rotation.
%we used for rather than $s$ used for (recall that $s$ is the point forms $\Delta$-rotation). 
Our idea is to draw a ball $\mathcal{B}$ centered at every candidate point which induces the $\Delta$-rotation, and build a grid inside $\mathcal{B}$. The grid ensures the existence of one grid point close enough to $\mathcal{F}_i$. 
By using the linear time dimension reduction technique in \cite{DV07},  we can reduce the dimensionality of the projective clustering problem from $d$ to $(\frac{kj}{\epsilon})^{O(1)}$. This enables us to reduce the complexity of the grid. Thus, the approximation ratio can be reduced to 
%provide a dimension reduction technique for projective clustering by a linear running time (on $n$ and $d$) algorithm: from $\mathbb{R}^d$ to $\mathbb{R}^{d'}$, where $d'=(\frac{kj}{\epsilon})^{O(1)}$. This dimension reduction technique will help us %reduce the size and running time for constructing the grid (linear on $d$ rather than exponential). Thus, we can reduce the approximation ratio to 
 $1+\epsilon$ in linear time.
 % for each $1\leq i\leq k$.
 % respectively. 

Now, the only remaining  issue is how to find the exact position of $o_i$. From Lemmas \ref{lem-dis} and \ref{lem-select}, we can find an approximate mean point  $\overline{o}_i$ for $o_i$. Thus, by translating $\mathcal{F}_i$ to pass though $\overline{o}_i$, we can show that $dist\{\mathcal{F}_i, \overline{\mathcal{F}}_i\}\leq O(\epsilon)\frac{1}{|C_i|}\sum_{p\in C_i}||p, \mathcal{F}_i||^2$. Combining this with the following Lemma \ref{lem-translate}, we can obtain a good approximation  $\overline{\mathcal{F}}_i$ for $\mathcal{F}_i$.

%%, which is denoted as . 
%If we denote $\overline{\mathcal{F}}_i$ as a translate of $\mathcal{F}_i$ such that passing $\overline{o}_i$, then $dis\{\mathcal{F}_i, \overline{\mathcal{F}}_i\}\leq O(\epsilon)\frac{1}{|C_i|}\sum_{p\in C_i}||p, \mathcal{F}_i||^2$. If we run Recursively Project Algorithm with $\overline{o}_i$ rather than $o_i$, then we could get an approximation for $\overline{\mathcal{F}}_i$ rather than $\mathcal{F}_i$. But by the following Lemma \ref{lem-translate}, it would be also a good approximation for $\mathcal{F}_i$.  

 \vspace{-0.1in}  
 \begin{lemma}
 \label{lem-translate}
 Let $P$ be a point set,  $\mathcal{F}$ be a $j$-dimensional flat, and $\overline{\mathcal{F}}$ be a translation of $\mathcal{F}$  in $\mathbb{R}^d$. Then,
 $\sqrt{\frac{1}{|P|}\sum_{p\in P}||p, \overline{\mathcal{F}}||^2}\leq \sqrt{\frac{1}{|P|}\sum_{p\in P}||p, \mathcal{F}||^2}+dist\{\mathcal{F}, \overline{\mathcal{F}}\}.$
 
 %Here $||\mathcal{F}, \overline{\mathcal{F}}||$ is the distance between $\mathcal{F}$ and $\overline{\mathcal{F}}$.
 \end{lemma}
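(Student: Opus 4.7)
My plan is to prove the lemma by combining a pointwise triangle inequality with Minkowski's inequality (the triangle inequality for the $\ell_2$ norm on $\mathbb{R}^{|P|}$). First I would decompose the translation vector $v$ taking $\mathcal{F}$ to $\overline{\mathcal{F}}$ into the component $v_{\parallel}$ parallel to $\mathcal{F}$ and the component $v_{\perp}$ orthogonal to $\mathcal{F}$. Because translating a flat by a vector parallel to it leaves the flat invariant, $\overline{\mathcal{F}}=\mathcal{F}+v_{\perp}$, and consequently $\mathrm{dist}\{\mathcal{F},\overline{\mathcal{F}}\}=\|v_{\perp}\|$.

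Next I would establish the pointwise bound $\|p,\overline{\mathcal{F}}\|\leq \|p,\mathcal{F}\|+\mathrm{dist}\{\mathcal{F},\overline{\mathcal{F}}\}$ for every $p\in P$. For this, let $q$ be the orthogonal projection of $p$ onto $\mathcal{F}$; then $q+v_{\perp}\in\overline{\mathcal{F}}$, so the triangle inequality gives $\|p,\overline{\mathcal{F}}\|\leq \|p-(q+v_{\perp})\|\leq \|p-q\|+\|v_{\perp}\| = \|p,\mathcal{F}\|+\mathrm{dist}\{\mathcal{F},\overline{\mathcal{F}}\}$.

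Finally I would square, sum over $p\in P$, and divide by $|P|$. Writing $a_p=\|p,\mathcal{F}\|$ and $d=\mathrm{dist}\{\mathcal{F},\overline{\mathcal{F}}\}$, the pointwise bound gives
\[
\sum_{p\in P}\|p,\overline{\mathcal{F}}\|^{2}\;\leq\;\sum_{p\in P}(a_p+d)^{2}.
\]
Applying Minkowski's inequality to the vectors $(a_p)_{p\in P}$ and $(d)_{p\in P}$ in $\mathbb{R}^{|P|}$ yields
\[
\sqrt{\sum_{p\in P}(a_p+d)^{2}}\;\leq\;\sqrt{\sum_{p\in P}a_p^{2}}+\sqrt{|P|}\,d,
\]
and dividing by $\sqrt{|P|}$ produces the claimed inequality. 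I do not expect a serious obstacle here; the only subtlety is the initial reduction to an orthogonal translation, which ensures that $\mathrm{dist}\{\mathcal{F},\overline{\mathcal{F}}\}$ is realised uniformly as the distance between corresponding points of the two flats.
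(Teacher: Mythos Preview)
Your proof is correct and follows the same overall skeleton as the paper: establish the pointwise bound $\|p,\overline{\mathcal{F}}\|\le\|p,\mathcal{F}\|+\mathrm{dist}\{\mathcal{F},\overline{\mathcal{F}}\}$ via the triangle inequality, then pass to the root-mean-square. The difference lies only in how the second step is executed. The paper expands $(\|p,\mathcal{F}\|+\|\mathcal{F},\overline{\mathcal{F}}\|)^2$, introduces the ratio $c=\|\mathcal{F},\overline{\mathcal{F}}\|\big/\sqrt{\frac{1}{|P|}\sum_{p}\|p,\mathcal{F}\|^2}$, bounds the cross term by $c\bigl(\|p,\mathcal{F}\|^2+\frac{1}{|P|}\sum_{p}\|p,\mathcal{F}\|^2\bigr)$, and then collapses everything to $(1+c)^2\cdot\frac{1}{|P|}\sum_{p}\|p,\mathcal{F}\|^2$; this is, in effect, a hands-on derivation of Minkowski's inequality for the special case where one of the two vectors is constant. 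You instead invoke Minkowski directly, which is shorter and more transparent. Your preliminary remark decomposing the translation into parallel and orthogonal parts is also a nice piece of extra rigor that the paper leaves implicit when it simply asserts the pointwise triangle inequality.
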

  \vspace{-0.1in} 
% The proof of Lemma \ref{lem-translate} is similar with that of Lemma \ref{lem-delta}, i.e., using triangle inequality. The only difference is just replacing $||p, \overline{\mathcal{F}}||\leq ||p-Proj(p)||+||Proj(p), \overline{\mathcal{F}}||$ by $||p, \overline{\mathcal{F}}||\leq ||p, \mathcal{F}||+dis\{\mathcal{F}, \overline{\mathcal{F}}\}$. (see Appendix) 

\vspace{-0.15in}
\subsection{Recursive Projection Algorithm}
\label{sec-rpa}
 \vspace{-0.05in}
%We now present  the recursively projection  algorithm. 
  
 \noindent\textbf{Algorithm Recursive-Projection}
\newline \textbf{Input:} A point set $P$ and a single point $o\in \mathbb{R}^d$, $0<\gamma\leq 1$, and $1\leq j\leq d$.
\newline \textbf{Output:} A tree $\mathcal{T}$ of height $j$ with each node $v$ associated with a point $t_{v}$ and a flat $f_{v}$ in $\mathbb{R}^d$.
 \begin{enumerate} 
 %\item Take a random sample from $P$ with size of $\frac{1}{\epsilon}$, and compute the mean point of the sample as $o$.
 
 \item Initialize $\mathcal{T}$ as a tree with a single root node associated with no point. The flat associated by the root is $\mathbb{R}^d$.
 
 \item Starting from the root, for each node $v$,  grow it in the following way
 
 \begin{enumerate}
 \item If the height of $v$ is $j$, it is a leaf node.
 
 \item Otherwise, 
 
 \begin{enumerate}
 
 \item Project $P$ onto $f_{v}$, and denote the projection as $\tilde{P}$. 
 
 \item Take a random sample $Q$ from $\tilde{P}$ with size $r=\frac{4j^2}{\gamma\epsilon}\ln\frac{j^2}{\epsilon}$, and run Algorithm Symmetric-Sampling on $Q$ and $o$ to obtain $2^{2r}$ points  $\overline{Q}$ as the output. 
 
 \item Create $2^{2r}$ children for $v$,  with each child associated with one point in $\overline{Q}$.
 % the output of Algorithm $1$ respectively. 
 
 \item Let $o$ be the mean point of $Q$. For each child $u$, let $t_{u}$ be its associated point;  associate $u$ the flat which is  the subspace of $f_{v}$ perpendicular to $\overrightarrow{ot_{u}}$ and with one less dimension than  $f_{v}$.
 
 \end{enumerate}
 
 \end{enumerate}
  
 \end{enumerate}
 \vspace{-0.05in}  
\noindent \textbf{Running Time}: It is easy to see that there are $O(2^{2rj})$ nodes in the output tree $\mathcal{T}$.  Each node costs $O(2^{2r}nd)$ time. Thus, the total running time  is $O(2^{2r(j+1)}nd)$.
  \vspace{-0.05in} 
\begin{theorem}
\label{the-rpa}
With probability $(1-\epsilon)^4$, the output $\mathcal{T}$ from Algorithm Recursive-Projection contains one root-to-leaf path such that the $j$ points associated with the path determine a flat $\mathcal{F}$ satisfying
inequality $\frac{1}{|P'|}\sum_{p\in P'}||p, \mathcal{F}||^2\leq (1+5\sqrt{j}r)^{2j}\frac{1}{|P|}\sum_{p\in P}||p, \mathcal{F}_{opt}||^2,$
where $P'$ is a subset of $P$ with at least $(1-\gamma)|P|$ points  and $\mathcal{F}_{opt}$ is the optimal fitting flat for $P$ among all $j$-dimensional flats passing through $o$.
\end{theorem}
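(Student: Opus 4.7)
The plan is to induct on the recursion depth, applying Theorem~\ref{the-alg1} at each of the $j$ levels to peel off one direction of the target flat. Let $\mathcal{F}_{opt}$ be the optimal $j$-flat through $o$ and $\delta^{2}_{opt}=\frac{1}{|P|}\sum_{p\in P}\|p,\mathcal{F}_{opt}\|^{2}$. I aim to exhibit a root-to-leaf path $v_{0}\to\cdots\to v_{j}$ whose associated points $t_{1},\ldots,t_{j}$ determine the claimed flat: by construction each $\overrightarrow{ot_{l}}$ lies in $f_{v_{l-1}}$ (the subspace perpendicular to all previously chosen directions), so the $j$ vectors are mutually orthogonal and span a $j$-flat $\mathcal{F}$ through $o$.

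At level~$1$ the working flat is $\mathbb{R}^{d}$. Apply Theorem~\ref{the-alg1} with $\gamma/j$ in place of $\gamma$ (so that after $j$ applications the loss is at most $1-(1-\gamma/j)^{j}\le\gamma$ in the fraction of retained points) and a suitably reduced per-level error parameter; the algorithm's enlarged sample size $r=\frac{4j^{2}}{\gamma\epsilon}\ln\frac{j^{2}}{\epsilon}$ is precisely the bound needed to accommodate these substitutions. The theorem then produces a point $t_{1}$ in the symmetric-sampling output and a $j$-flat $\mathcal{F}^{(1)}$, obtained by rotating $\mathcal{F}_{opt}$ about $\overrightarrow{ot_{1}}$, satisfying
\[
|P_{1}|\,\delta^{2}_{P_{1},\mathcal{F}^{(1)}}\;\le\;(1+5\sqrt{j}\,r)^{2}\,|P|\,\delta^{2}_{opt},\qquad|P_{1}|\ge(1-\gamma/j)|P|.
\]
Crucially, Definition~\ref{def-rotate} ensures $\overrightarrow{ot_{1}}\in\mathcal{F}^{(1)}$, so the projection of $\mathcal{F}^{(1)}$ onto $f_{v_{1}}$ is a genuine $(j-1)$-flat through $o$; this becomes the new target in the subtree rooted at~$v_{1}$. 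Iterating the argument down the tree for $j$ levels and multiplying the per-level inequalities yields
\[
\sum_{p\in P'}\|p,\mathcal{F}\|^{2}\;\le\;(1+5\sqrt{j}\,r)^{2j}\sum_{p\in P}\|p,\mathcal{F}_{opt}\|^{2},\qquad|P'|\ge(1-\gamma/j)^{j}|P|\ge(1-\gamma)|P|,
\]
which delivers the claimed approximation after dividing by $|P'|$ on the left and $|P|$ on the right (the residual factor $|P|/|P'|\le 1/(1-\gamma)$ is absorbed into the per-level constants exactly as in the proof of Theorem~\ref{the-alg1}).

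The main obstacle is controlling the overall failure probability. A single application of Theorem~\ref{the-alg1} succeeds with probability $(1-\epsilon)^{4}$, but we need the joint success of $j$ sequential applications to still reach $(1-\epsilon)^{4}$. This is exactly why the recursive algorithm inflates the sample size to $\ln\frac{j^{2}}{\epsilon}$ rather than the $\ln\frac{j}{\epsilon}$ of Theorem~\ref{the-alg1}: the extra logarithmic room lets us invoke the theorem with a per-level error parameter roughly $\epsilon/j$, and a union bound over the $j$ levels then recovers the stated $(1-\epsilon)^{4}$. A secondary technical point is verifying that every projection step really drops one dimension from the current optimal flat; this follows automatically from the flat-rotation definition, which places the chosen direction inside the rotated flat so that projecting perpendicular to it removes one dimension cleanly.
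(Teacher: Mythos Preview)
Your proposal is correct and takes essentially the same approach as the paper: iteratively apply Theorem~\ref{the-alg1} along a root-to-leaf path with per-level error parameter $\epsilon/j$, intersect the rotated target flat with the orthogonal complement of the chosen direction to drop one dimension, and chain the $j$ resulting inequalities; the paper carries out the same scheme, making the bookkeeping explicit via auxiliary sequences $\{\tilde{\mathcal{F}}_l\},\{\mathcal{F}_l\},\{P_l\}$ and Claims~2--4, and obtains the probability bound as the product $(1-\epsilon/j)^{4j}\ge(1-\epsilon)^4$ rather than via a union bound. One small note: the $(1-\gamma/j)$ retention per level is already the conclusion of Theorem~\ref{the-alg1} when invoked with parameter $\gamma$, so your explicit substitution of $\gamma$ by $\gamma/j$ is unnecessary (though harmless).
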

\begin{proof}
Without loss of generality, we assume that $o$ is the origin of $\mathbb{R}^d$. Since all the related flats in the algorithm pass through $o$, we can view each flat as a subspace in $\mathbb{R}^d$. 

From the algorithm, we know that for any node $v$ at level $l$ of $\mathcal{T}$, $1\leq l\leq j$, there is a corresponding implicit point set $\tilde{P}$, which is the projection of $P$ on $f_v$. There is also an implicit flat $f_v\cap\mathcal{F}_{opt}$ in $f_v$. By Theorem \ref{the-alg1}, we know that there is one child of $v$, denoted as $v'$, such that the rotation for $f_v\cap\mathcal{F}_{opt}$ induced by $\overrightarrow{o t}_{v'}$ is a $\Delta$-rotation with respect to a subset $\tilde{P}'$ of  $\tilde{P}$ with size $|\tilde{P}'|\geq (1-\frac{\gamma}{j})|\tilde{P}|$, where $\Delta=\frac{5\sqrt{j}r}{\sqrt{1-\gamma/j}}\delta_{\tilde{P},f_{v\cap\mathcal{F}_{opt}}}$. 
%\textbf{Note} that at this moment, we ignore the success probability, and will discuss it later. 
Thus, if we always select such children (satisfying the above condition) from root to leaf,  we have a path with nodes $\{v_0, v_1, \cdots, v_j\}$, where $v_0$ is the root, and $v_{l+1}$ is the child of $v_l$. Correspondingly, a sequence of implicit point sets  $\{\tilde{P}_0, \tilde{P}_1, \cdots, \tilde{P}_j\}$ and a sequence of flats $\{\tilde{\mathcal{F}}_0, \tilde{\mathcal{F}}_1, \cdots, \tilde{\mathcal{F}}_j\}$
%,  of implicit point sets and flats 
can also be obtained, which have the following properties.
 \vspace{-0.07in}
 \begin{enumerate}
\item Initially, $\tilde{\mathcal{F}}_0=\mathcal{F}_{opt}$, $\tilde{P}_0=P$.

\item For any $1\leq l\leq j$, $\tilde{\mathcal{F}}_l$ is the rotation of $\tilde{\mathcal{F}}_{l-1}\cap f_{v_{l-1}}$ induced by $\overrightarrow{o t}_{v_{l}}$,  $\tilde{P}_l$ is a subset of the projection of $\tilde{P}_{l-1}$ on $f_{v_{l-1}}$ with size at least $(1-\frac{\gamma}{j})|\tilde{P}_{l-1}|$, and  $\tilde{\mathcal{F}}_l$ is a $\Delta$-rotation with respect to $\tilde{P}_{l}$ (see Figure \ref{fig-project} ).
\end{enumerate}
 \vspace{-0.07in}
 %See Fig. \ref{fig-project} for an illustration. 
Note that since both $\tilde{\mathcal{F}}_l$ and $\tilde{\mathcal{F}}_{l-1}\cap f_{v_{l-1}}$ locate on $f_{v_{l-1}}$,  they are all perpendicular to $\overrightarrow{o t}_{v_{l-1}}$. The following claim reveals the
%Let us first confirm the 
dimensionality of each $\tilde{\mathcal{F}}_l$ (see Appendix for proof).
\vspace{-0.1in}
  \begin{figure}[]
  \vspace{-0.18in}
  %\begin{minipage}[t]{0.5\linewidth}
  \centering
  \includegraphics[height=1.2in]{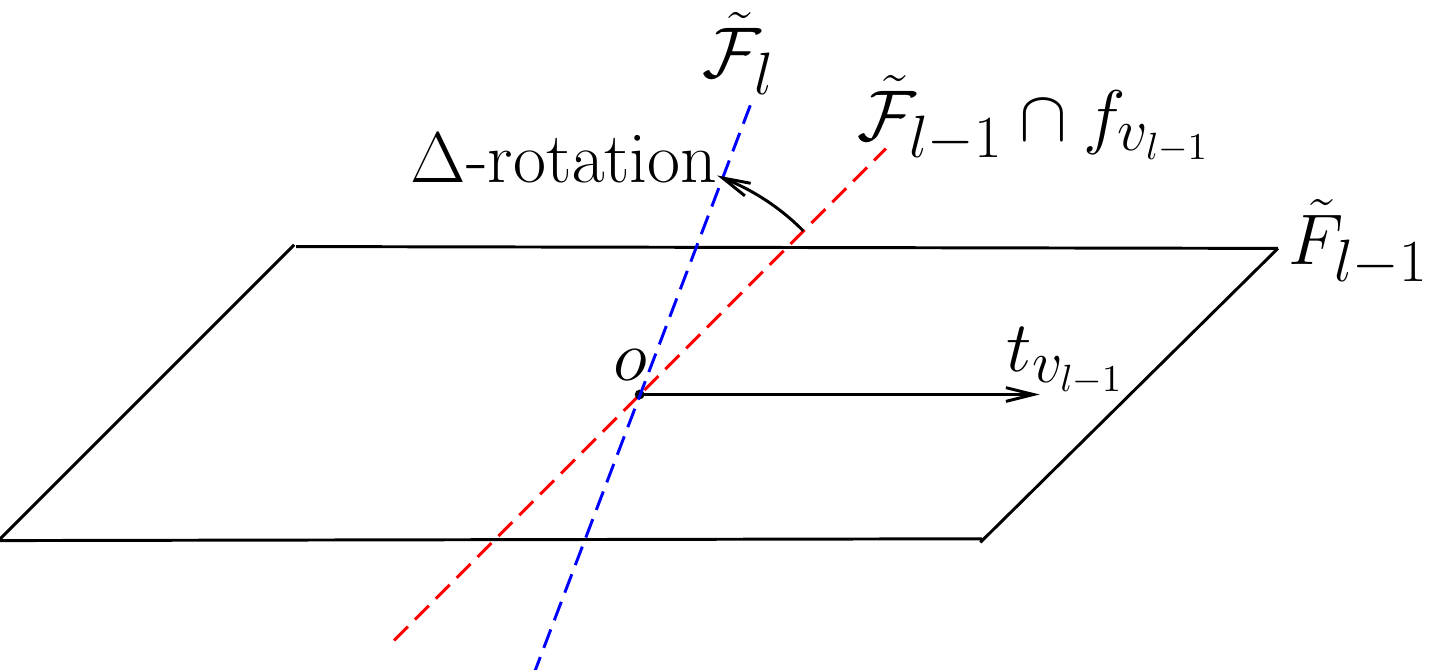}
  \vspace{-0.18in}
     \caption{An example illustrating how $\tilde{\mathcal{F}}_{l-1}$ evolves into $\tilde{\mathcal{F}}_l$.}
  \label{fig-project}
  \vspace{-0.18in}
  \end{figure}
\vspace{-0.2in}
% by the following claim:

\begin{claim}[\textbf{2}]
For $1\leq l\leq j$, $\tilde{\mathcal{F}}_l$ is a $(j-l+1)$-dimensional subspace.
\end{claim}
\vspace{-0.1in}
By Lemma \ref{lem-delta}, we can easily have the following claim.
\begin{claim}[\textbf{3}]
For any $1\leq l\leq j$, $\sum_{\tilde{p}\in\tilde{P}_l}||\tilde{p},\tilde{ \mathcal{F}}_l||^2\leq (1+5\sqrt{j}r)^2\sum_{\tilde{p}\in\tilde{P}_l}||\tilde{p}, \tilde{\mathcal{F}}_{l-1}\cap f_{v_{l-1}}||^2$.
\end{claim}
\vspace{-0.07in}
We construct as follows two other sequences, $\{P_1, \cdots, P_j\}$ and $\{\mathcal{F}_1,\cdots, \mathcal{F}_j \}$, for point sets and flats respectively.
\vspace{-0.1in}
\begin{enumerate}
\item Initially, $\mathcal{F}_1=\tilde{\mathcal{F}}_1$, $P_1=\tilde{P}_1$.
\item For any $2\leq l\leq j$, $\mathcal{F}_l=span\{\tilde{\mathcal{F}}_l, \overrightarrow{ot}_{v_1}, \cdots, \overrightarrow{o t}_{v_{l-1}}\}$, and $P_l$ is the corresponding point set of $\tilde{P}_{l}$ mapped back from $f_{v_{l}}$ to $\mathbb{R}^d$.
\end{enumerate}
\vspace{-0.08in}
From the above construction for $\mathcal{F}_l$, we have the following claim.
\vspace{-0.07in}
\begin{claim}[\textbf{4}]
For any point $\tilde{p}\in \tilde{P}_l$, let $p$ denote the corresponding point in $\mathbb{R}^d$. Then $||p, \mathcal{F}_l||=||\tilde{p}, \tilde{\mathcal{F}}_l||$. 
\end{claim}
 \vspace{-0.03in}
From Claim (\textbf{2}), we know that each $\mathcal{F}_l$ is a $j$-dimensional subspace. From the algorithm, we know that $span\{\tilde{\mathcal{F}}_{l-1}\cap f_{v_{l-1}}, \overrightarrow{o t}_{v_{l-1}}\}=\tilde{\mathcal{F}}_{l-1}$ (see Fig. \ref{fig-project}), which implies $span\{\tilde{\mathcal{F}}_{l-1}\cap f_{v_{l-1}}, \overrightarrow{ot}_{v_1}, \cdots, $ $\overrightarrow{o t}_{v_{l-1}}\}=\mathcal{F}_{l-1}$. Then by Claim (\textbf{3}) and (\textbf{4}), we have the following inequality
\small{
\begin{eqnarray}
\sum_{p\in P_l}||p, \mathcal{F}_l||^2\leq (1+5\sqrt{j}r)^2\sum_{p\in P_l}||p, \mathcal{F}_{l-1}||^2.\label{for-9}
\end{eqnarray}
}
\normalsize
 \vspace{-0.07in}
By the definition of $P_l$, we have $P_j\subset P_{j-1}\subset \cdots \subset P_1\subset P_0=P$. Thus,
\small{
\begin{eqnarray}
 \sum_{p\in P_{l}}||p, \mathcal{F}_{l-1}||^2\leq\sum_{p\in P_{l-1}}||p, \mathcal{F}_{l-1}||^2. \label{for-10}
 \end{eqnarray}
 }
 \normalsize
  \vspace{-0.1in}
 Combining (\ref{for-9}) and (\ref{for-10}), we have
 \vspace{-0.02in} 
 \small{
 \begin{eqnarray} 
 \sum_{p\in P_l}||p, \mathcal{F}_l||^2\leq (1+5\sqrt{j}r)^2\sum_{p\in P_{l-1}}||p, \mathcal{F}_{l-1}||^2.
 \end{eqnarray}
 }
 \normalsize
  \vspace{-0.15in}
 Recursively using the above inequality, we have $\sum_{p\in P_j}||p, \mathcal{F}_j||^2\leq (1+5\sqrt{j}r)^{2j}\sum_{p\in P}||p, \mathcal{F}_{opt}||^2$. From the definition of $P_j$, we know that $|P_j|\geq (1-\frac{\gamma}{j})^j |P|\geq (1-\gamma)|P|$. Furthermore, by Claim (\textbf{2}), we know that $\tilde{\mathcal{F}}_j$ is a $1$-dimension flat (i.e., the single line spanned by $\overrightarrow{o t}_{v_{j}}$), hence $\mathcal{F}_j=span\{\overrightarrow{ot}_{v_1}, \cdots, \overrightarrow{o t}_{v_{j}}\}$. Thus, if setting $P'=P_j$, and $\mathcal{F}=\mathcal{F}_j$, we have $\frac{1}{|P'|}\sum_{p\in P'}||p, \mathcal{F}||^2\leq (1+5\sqrt{j}r)^{2j}\frac{1}{|P|}\sum_{p\in P}||p, \mathcal{F}_{opt}||^2$. 
 
\noindent\textbf{Success probability:} Each time we use Theorem \ref{the-alg1}, the success probability is $(1-\frac{\epsilon}{j})^4$ (we replace $\epsilon$ by $\frac{\epsilon}{j}$ to increase the sample size). Thus, the total success probability is $(1-\frac{\epsilon}{j})^{4j}\geq (1-\epsilon)^4$.
\qed
\end{proof}

\vspace{-0.23in}
\subsection{Algorithm for Projective Clustering}
 \label{sec-pc}
\vspace{-0.05in}
 \noindent\textbf{Algorithm Projective-Clustering}
\newline \textbf{Input:} A set of points  $P$ in $\mathbb{R}^d$, positive integers $k$, $j$, and two positive numbers $0<\epsilon, \gamma \leq 1$. 
\newline \textbf{Output:} An approximate solution for $(k,j)$-projective clustering
\begin{enumerate}

\item Use the dimension reduction technique in \cite{DV07} to reduce the dimensionality from $d$ to $d'=(\frac{kj}{\epsilon})^{O(1)}$.
%\item Assign the sample size $r=\frac{2kt}{\gamma}\ln(2kt)$, where $t=\frac{8j^2}{\gamma\epsilon}\ln\frac{kj^2}{\epsilon}$.

\item Set the sample size $r=\frac{2kt}{\gamma}\ln(2kt)$, where $t=\frac{8j^2}{\gamma\epsilon}\ln\frac{kj^2}{\epsilon}$.
\item Running Algorithm Recursive-Project  $k$ time with sample size $r$. Denote the $k$ output trees as $\{\mathcal{T}_1, \cdots, \mathcal{T}_k\}$.

\item Enumerate the combinations of all $k$ flats from the $k$ trees: Select one flat yielded by a root-to-leaf path from each $\mathcal{T}_l$ for $1\leq l\leq k$, and compute the objective value of these $k$ flats.  Let $\mathcal{L}$ be the smallest objective value among all the combinations.
\item Re-run Algorithm Recursive-Projection $k$ time with the following modification: In Step $2(b)$, for each point $p$ of the $2^{2r}$ points returned by  Algorithm Symmetric-Sampling, build a ball $\mathcal{B}$ centered at $p$ and with radius $r_{\mathcal{B}}$, and construct a grid inside $\mathcal{B}$. For each grid point, create a node, associate it  with the grid point, and make it as a sibling of the node containing $p$.   

%\item Redo the enumeration as the above step. For each combination of the $k$ Flat Kernels, call Algorithm $2$ for each Flat Kernel separately with $\mathcal{L}$ and $h$.

\item Enumerate the combinations of all $k$ flats from the $k$ output trees of the above step. Find the $k$ flats with the smallest objective value for $P$ and output them as the solution.

\end{enumerate}

In the above algorithm, the radius $r_{\mathcal{B}}$ and the density of the grid are chosen in a way so that there exists a grid point which induces a $\frac{\epsilon}{4j}$-rotation for the clustering points. Thus, we can further reduce the approximation ratio to $(1+\epsilon)$, and have the following theorem.
%By previous analysis, we immediately have the following theorem. 
Detailed analysis on the algorithm and the theorem  is left in Section \ref{sec-parameter} of the Appendix.
\vspace{-0.05in}
\begin{theorem}
\label{the-ptas}
Let $P$ be a set of $\mathbb{R}^{d}$ points in a $(k,j)$-projective clustering instance. Let $Opt$ be the optimal objective value on $P$. With constant probability and in $O(2^{poly(\frac{kj}{\epsilon\gamma})}nd)$ time, Algorithm Projective-Clustering outputs an approximate solution $\{\mathcal{F}_1, \cdots, \mathcal{F}_k\}$ such that each $\mathcal{F}_l$ is a $j$-flat, and $\frac{1}{|P'|}\sum_{p\in P'}\min_{1\leq l\leq k}||p, \mathcal{F}_l||^2\leq (1+\epsilon) Opt$, where $P'$ is a subset of $P$ with at least $(1-\gamma)|P|$ points. 
%The running time is $O(2^{poly(\frac{kj}{\epsilon\gamma})}nd)$.
\end{theorem}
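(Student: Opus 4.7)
The plan is to combine (i) the dimension reduction of \cite{DV07} to bring the ambient dimension down to $d'=(kj/\epsilon)^{O(1)}$ without changing the cost beyond a $(1+\epsilon)$ factor, (ii) the recursive projection of Theorem \ref{the-rpa} applied once per cluster, (iii) a grid refinement that upgrades the per-cluster approximation from $(1+5\sqrt{j}r)^{2j}$ to $(1+\epsilon)$, and (iv) a mean estimation step via Lemma \ref{lem-dis} together with the translation bound of Lemma \ref{lem-translate}. The clusters considered are only the \emph{large} clusters $C_i$ of an (unknown) optimum, namely those with $|C_i|\ge \frac{\gamma}{2k}|P|$; the union of the remaining clusters has size at most $\frac{\gamma}{2}|P|$ and will be discarded as part of the $\gamma|P|$ allowed loss.

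First I would fix a large optimum cluster $C_i$ with optimum flat $\mathcal{F}_i$ through its mean $o_i$ and optimum cost $\delta_i^2$. By Lemma \ref{lem-select}, a uniform sample of $P$ of the stated size $r=\frac{2kt}{\gamma}\ln(2kt)$ with $t=\frac{8j^2}{\gamma\epsilon}\ln\frac{kj^2}{\epsilon}$ contains, with constant probability, at least $t$ points from $C_i$, simultaneously for every large $C_i$. Hence, for each $C_i$, Algorithm Recursive-Projection can be supplied with a uniform sample of $C_i$ of size $\frac{4j^2}{\gamma\epsilon}\ln\frac{j^2}{\epsilon}$, and by Lemma \ref{lem-dis} the mean $\overline{o}_i$ of an independent subsample of that size is within $O(\sqrt{\epsilon/j})\,\delta_i$ of $o_i$ in expectation. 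By Lemma \ref{lem-translate} translating $\mathcal{F}_i$ to pass through $\overline{o}_i$ inflates the root-mean-square cost on $C_i$ by only an additive $O(\sqrt{\epsilon/j})\,\delta_i$, which is absorbed by the final $(1+\epsilon)$ factor.

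Next, I would feed $\overline{o}_i$ and the sample of $C_i$ to Algorithm Recursive-Projection. By Theorem \ref{the-rpa}, with probability $(1-\epsilon/j)^4$ per call, some root-to-leaf path of the output tree $\mathcal{T}_i$ produces a $j$-flat $\mathcal{F}'_i$ such that on a subset $C'_i\subseteq C_i$ of size at least $(1-\gamma/2)|C_i|$ one has $\frac{1}{|C'_i|}\sum_{p\in C'_i}\|p,\mathcal{F}'_i\|^2\le (1+5\sqrt{j}r)^{2j}\delta_i^2$. This is only a constant-factor approximation, so I would invoke the grid refinement of Step 5: around every candidate point $p$ produced by Symmetric-Sampling, place a ball of radius $r_\mathcal{B}=\Theta(\sqrt{j}r\,\delta_i)$ and inside it an $\epsilon/(8j\sqrt{j}r)$-dense grid. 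With $d'$ polylogarithmic in $kj/\epsilon$, the grid has $2^{\mathrm{poly}(kj/(\epsilon\gamma))}$ points, and one grid point is guaranteed to lie within distance $\frac{\epsilon}{4j}\cdot\frac{\delta_i}{\sqrt{j}r}\cdot h$ of the true rotation vector, i.e., induces a $\frac{\epsilon}{4j}\delta_i$-rotation by Definition \ref{def-delta}. Recursing $j$ times as in Theorem \ref{the-rpa}, the accumulated approximation factor becomes $(1+\frac{\epsilon}{4j})^{2j}\le 1+\epsilon$ on the retained subset.

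Finally, I would show that the enumeration of combinations in Step 6 finds such a path simultaneously for every large $C_i$, since the $k$ trees are independent and the enumeration is exhaustive over all $(2^{2rj}\cdot G)^k$ combinations (where $G$ is the grid size per node). The output cost on $P'=\bigcup_i C'_i$, with $|P'|\ge (1-\gamma/2)^2|P|\ge (1-\gamma)|P|$, satisfies $\sum_{p\in P'}\min_l \|p,\mathcal{F}_l\|^2 \le \sum_i \sum_{p\in C'_i}\|p,\mathcal{F}'_i\|^2 \le (1+\epsilon)\sum_i |C_i|\delta_i^2 \le (1+\epsilon)\,Opt\cdot|P|$. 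The running time is $O(2^{\mathrm{poly}(kj/(\epsilon\gamma))}\,nd)$, dominated by the $k$ independent tree constructions times the grid enlargement factor, plus the $O(nd)$ cost of the initial dimension reduction and the $O(nd)$ cost of evaluating each candidate combination on $P$. Success probability is the intersection of Lemma \ref{lem-select}, Lemma \ref{lem-dis}, and Theorem \ref{the-rpa} across all $k$ clusters, which is a positive constant after appropriate choice of the success parameter.

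The main obstacle, in my view, is the grid refinement step: one must verify that a grid of the stated density inside a ball of the stated radius indeed contains, among the exponentially many candidate points generated by Symmetric-Sampling, at least one that induces a $\frac{\epsilon}{4j}$-rotation of $\mathcal{F}_i$ rather than merely a $\Delta$-rotation with $\Delta=\Theta(\sqrt{j}r)\delta_i$. This requires a careful accounting of (a) the projection geometry in the reduced space of dimension $d'=(kj/\epsilon)^{O(1)}$, (b) the quantity $h$ appearing in Definition \ref{def-delta} so that Cartesian-grid spacing translates correctly into angular rotation, and (c) the recursion, so that a $\frac{\epsilon}{4j}$-rotation at each of the $j$ levels really does compose into a total $(1+\epsilon)$ factor via the $(1+\frac{\epsilon}{4j})^{2j}$ bound inherited from the proof of Theorem \ref{the-rpa}.
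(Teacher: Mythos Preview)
Your proposal follows the paper's architecture closely: dimension reduction via \cite{DV07}, per-cluster sampling via Lemma \ref{lem-select}, recursive projection via Theorem \ref{the-rpa}, mean estimation plus Lemma \ref{lem-translate}, grid refinement, and exhaustive enumeration. The composition of the per-cluster bounds into the final inequality on $P'$ is also carried out as in the paper.

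There is, however, a genuine gap in your grid-refinement step. You set $r_\mathcal{B}=\Theta(\sqrt{j}r\,\delta_i)$, but $\delta_i$ is the \emph{unknown} optimal cost of cluster $C_i$; the algorithm has no access to it. The paper's resolution is precisely the two-pass structure you cite but do not exploit: Steps 3--4 are run first to produce a \emph{computable} objective value $\mathcal{L}$ satisfying $\sqrt{Opt}\le \sqrt{\mathcal{L}}\le (1+5\sqrt{j}r)^j\sqrt{Opt}$, and Step 5 then takes $r_\mathcal{B}=5\sqrt{j}r\sqrt{\mathcal{L}}$. Because $\sqrt{\mathcal{L}}$ can exceed $\sqrt{Opt}$ by the factor $(1+5\sqrt{j}r)^j$, this overshoot must be compensated in the grid density: the paper's grid step carries an extra $(1+5\sqrt{j}r)^j$ in the denominator (so that the grid length is roughly $\frac{\epsilon}{4j\,t\,(1+5\sqrt{j}r)^j\sqrt{d'}}\,r_\mathcal{B}$), not the $\epsilon/(8j\sqrt{j}r)$ you propose. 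With the correct parameters one obtains $\|\pi-Proj(t_{v_l})\|\le \frac{\epsilon}{4jt}\sqrt{Opt}$, and the bound $\frac{1}{t}\le \frac{h'}{h}\le 1$ (derived from the slab-partition analysis) then converts this absolute-distance bound into the angular bound $\frac{\|\pi-Proj(t_{v_l})\|}{h'}\le \frac{\epsilon}{4j}\frac{\sqrt{Opt}}{h}$ needed for the $\frac{\epsilon}{4j}$-rotation. You allude to this $h$-to-$h'$ conversion in your obstacle paragraph but do not carry it through.

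In short, the missing idea is that the first pass supplies the numerical scale $\mathcal{L}$ on which the second-pass ball radius and grid density are calibrated; without it, your stated $r_\mathcal{B}$ and grid spacing are not algorithmically defined, and the claimed $\frac{\epsilon}{4j}$-rotation does not follow.
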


\vspace{-0.15in}
\subsection{Extensions.}
\vspace{-0.05in}

We present two main extensions. See Appendix for details. 
%Due to space limit, see Appendix for more details. \\

\textbf{Linear time PTAS for regular projective clustering:} For points with bounded Coefficient of Variation (CV) (we call such problem as regular projective clustering), we show that our approach leads to a linear time PTAS solution.  
%regular case (see Definition \ref{def-rpc}), the Projective Clustering Algorithm in Section \ref{sec-pc} yields PTAS. 
The main idea is that since the CV is bounded, the point from symmetric sampling algorithm is far enough to $o$. This implies that the each $\Delta$-rotation from Algorithm Recursive-Projection is for the whole set $P$ rather than a subset $P'$. Thus we can fit all points of $P$ into the resulting $k$ flats within the same approximate ratio.  %See the Appendix for details.

\textbf{$L_\tau$ sense Projective clustering:} We show that our approach can be  extended to $L_\tau$ sense projective clustering for any integer $1\leq \tau<\infty$ and achieve similar results for both general and regular projective clustering. The key idea is to define the $L_\tau$ sense $\Delta$-rotation, and prove a result similar to Lemma \ref{lem-delta}. In other words, the symmetric sampling technique can also yield a $L_\tau$ sense $\Delta$-rotation  with 

\vspace{-0.1in}
\small{
$$(\frac{1}{|P|}\sum_{p\in P}||p, \mathcal{F}'||^\tau)^{1/\tau}\leq(\frac{1}{|P|}\sum_{p\in P}||p, \mathcal{F}||^\tau)^{1/\tau}+\Delta.$$
}
\normalsize

%\textbf{Selection for sample size $\hat{m}$.} 

%By Lemma \ref{lem-stronger}, the third and forth steps cost $\O(2^{2\hat{m}jk}((h\sqrt{d'})^{jd'})nd')$-time. 

\newpage

%Now, let us consider such kind of scenario: $Q=\{q_{1}, \cdots, q_{n}\}$ is a points set in $\mathbb{R}^j$, and the origin is the mean point of $Q$. We denote the coordinate of each $q_{i}$ to be $(x^i_{1}, \cdots, x^{i}_{j})$. And for each $1\leq l\leq j$, the coefficient of variation of $\{|x^1_{l}|, \cdots, |x^n_{l}|\}$ is $cv_{l}$.  
%
%\begin{lemma}
%\label{lem-cv}
%If $\max\{cv_{1}, \cdots, cv_{j}\}$ is bounded by some constant $r$, then for any unit vector $\overrightarrow{v}$ in $\mathbb{R}^j$, the coefficient of variation of $\{|<\overrightarrow{v}, q_{1}>|, \cdots, |<\overrightarrow{v}, q_{n}>|\}$ is bounded by $\sqrt{j}r$. Here $<a,b>$ denotes the inner product of $a$ and $b$.
%
%
%
%\end{lemma}
%\begin{proof}
%
%We denote $\overrightarrow{v}=(v_{1}, \cdots, v_{j})$, and $\sum^{j}_{l=1}v^{2}_{l}=1$. For any $1\leq l\leq j$, we also denote $\mu_{l}=\frac{1}{n}\sum^{n}_{i=1}|x^{i}_{l}|$, and $\delta_{l}=\sqrt{\frac{1}{n}\sum^{n}_{i=1}(|x^{i}_{l}|-\mu_{l})^2}$. Then from the given information, we know that $cv_{l}=\frac{\delta_{l}}{\mu_{l}}\leq r$.
%
%First, we know that $\frac{1}{n}\sum^{n}_{i=1} |<\overrightarrow{v}, q_{i}>|=\frac{1}{n}\sum^{n}_{i=1}|\sum^{j}_{l=1}v_{l}x^{i}_{l}|$
%\qed
%\end{proof}

\newpage

\section{Proof of Lemma \ref{lem-select} }
\begin{proof}
If we randomly select $z$ elements from $S$, then it is easy to know that with probability $1-(1-\alpha)^z$, there is at least one element from the sample belonging to $S'$. If we want the probability  $1-(1-\alpha)^z$  equal to $1-\eta/t$, $z$ has to be $\frac{\ln\frac{t}{\eta}}{\ln\frac{1}{1-\alpha}}=\frac{\ln\frac{t}{\eta}}{\ln(1+\frac{\alpha}{1-\alpha})}\leq\frac{\ln\frac{t}{\eta}}{\ln(1+\alpha)}$.

Thus if we perform $t$ rounds of random sampling with each round selecting $\frac{\ln\frac{t}{\eta}}{\ln(1+\alpha)}$ elements,  we get at least $t$ elements from $S'$ with probability at least $(1-\eta/t)^t\geq 1-\eta$.
%We can consider each element in $S$ as an variable; if it belongs to $S'$, the corresponding variable is equal to $1$, %and otherwise it is  $0$. Then the expected value is equal to $\alpha$, the variance is equal to $\alpha(1-\alpha)$. If %we randomly select $m$ elements from $S$ and there are $T$ elements from $S'$, then by Markov inequality we have
%$$Pr(|\frac{T}{m}-\alpha|\geq\frac{\alpha}{2})\leq\frac{4\alpha(1-\alpha)}{m\alpha^2}.$$
%If we let $m=\frac{4(1-\alpha)}{\alpha\lambda}$, then with probability at least $1-\lambda$ we have %$T\geq\frac{\alpha}{2}m=\frac{2(1-\alpha)}{\lambda}$.
\qed
\end{proof}

\section{Proof for Claim 1 in Theorem \ref{the-alg1}}
\begin{proof}[\textbf{of Claim (1)}]
We first reduce the space from $\mathbb{R}^d$ to $\mathcal{F}^\bot$, which is a $(d-j)$-dimensional subspace. For simplicity, we use the same notations for points in $\mathcal{F}^\bot$ as in $\mathbb{R}^d$. It is easy to know that during the space reduction from $\mathbb{R}^d$ to $\mathcal{F}^\bot$, $Proj(\rho_{l_0})$ is projected to the origin, and $||\rho_{l_0}-Proj(\rho_{l_0})||$ is equal to $||\rho_{l_0}-o||$ in the subspace. We let $B^1_{l_0}=B_{l_0}\cap -\mathcal{S}$, and $B^2_{l_0}=B_{l_0}\cap \mathcal{S}$. Correspondingly, we let $-B^1_{l_0}$ denote the symmetric point set of $B^1_{l_0}$ with respect to $o$. Without loss of generality, we assume that $|B^1_{l_0}|\geq |B^2_{l_0}|$. %Separately, 
Let  $s_1$ and $s_2$ be the mean points of $B^1_{l_0}$ and $B^2_{l_0}$ respectively, and $\alpha_1=\frac{|B^1_{l_0}|}{|B_{l_0}|}$ and $\alpha_2=\frac{|B^2_{l_0}|}{|B_{l_0}|}$.  Since $\rho_{l_0}$ is the mean point of $B_{l_0}$,  we have $\rho_{l_0}=\alpha_1 s_1+\alpha_2 s_2$.  Thus,
\begin{eqnarray}
||\rho_{l_0}-o||&=&||\alpha_1 s_1+\alpha_2 s_2-o||=||\alpha_1(2o-s_1)+\alpha_2 s_2-2\alpha_1 o+2\alpha_1 s_1 -o|| \nonumber\\
&=&||\alpha_1(2o-s_1)+\alpha_2 s_2-o-2\alpha_1(2o-s_1-o)|| \nonumber\\
&\leq& ||\alpha_1(2o-s_1)+\alpha_2 s_2-o||+2\alpha_1||2o-s_1-o||, \label{for-7}
\end{eqnarray}
where the last inequality follows from triangle inequality. 

Note that $2o-s_1$ is the mean point of $-B^1_{l_0}$,  $\alpha_1(2o-s_1)+\alpha_2 s_2$ is the mean point of $-B^1_{l_0}\cup B^2_{l_0}$, and  $-B^1_{l_0}\cup B^2_{l_0}$ is a sample from $P$ of size at least $\frac{1}{\epsilon}$. Let $\pi$ be the mean point of $P$, and $a^2=\frac{1}{|P|}\sum_{p\in P}||p-\pi||^2$ (note that the current space is $\mathcal{F}^\bot$). Then by Lemma \ref{lem-dis}, we know that with probability $1-\eta$, $||\alpha_1(2o-s_1)+\alpha_2 s_2-\pi||\leq \sqrt{\epsilon/\eta}a$. Similarly, since $-B^1_{l_0}$ is a sample from $P$ with size at least $\frac{1}{2\epsilon}$ (by $|B^1_{l_0}|\geq |B^2_{l_0}|$), we have $||2o-s_1-\pi||\leq \sqrt{2\epsilon/\eta}a$ with probability $1-\eta$. Since, $\delta^2_{P,\mathcal{F}}=\frac{1}{|P|}\sum_{p\in P}||p-o||^2=a^2+||\pi-o||^2$,  with a total probability $(1-\eta)(1-\eta)$, we have
\begin{eqnarray}
&&||\alpha_1(2o-s_1)+\alpha_2 s_2-o||+2\alpha_1||2o-s_1-o|| \nonumber\\
&\leq& ||\alpha_1(2o-s_1)+\alpha_2 s_2-\pi||+||\pi-o||+2\alpha_1(||2o-s_1-\pi||+||\pi-o||) \nonumber\\
&\leq& \sqrt{\epsilon/\eta}a+2\alpha_1\sqrt{2\epsilon/\eta}a+(1+2\alpha_1)||\pi-o|| \nonumber\\
&=&(1+2\sqrt{2})\sqrt{\epsilon/\eta}a+3||\pi-o|| \nonumber\\
&\leq& \sqrt{(1+2\sqrt{2})^2\epsilon/\eta+9}\sqrt{a^2+||\pi-o||^2}=\sqrt{(1+2\sqrt{2})^2\epsilon/\eta+9}\delta_{P,\mathcal{F}}, \label{for-8}
\end{eqnarray}

where the first inequality follows from triangle inequality, and the last inequality follows from the fact that $x_1y_1+x_2y_2\leq \sqrt{x^2_1+x^2_2}\sqrt{y^2_1+y^2_2}$ for any four real numbers $x_1,x_2,y_1,y_2$. 

Setting $\eta=\epsilon$,  by (\ref{for-7}) and (\ref{for-8}), we have $||\rho_{l_0}-o||\leq\sqrt{18+4\sqrt{2}}\delta<5\delta_{P,\mathcal{F}} $, with probability $(1-\epsilon)^2$.
\qed
\end{proof}

\section{Proof of Claim 2 in Theorem \ref{the-rpa}}
\begin{proof}
We prove this claim by induction. For the base case (i.e., $l=1$), $\tilde{\mathcal{F}}_1$ has the same dimensionality as $\tilde{\mathcal{F}}_0\cap f_{v_0}=\mathcal{F}_{opt}$ (note that $f_{v_0}=\mathbb{R}^d$). Hence $\tilde{\mathcal{F}}_1$ is a $j$-dimensional subspace. Then we assume that $\tilde{\mathcal{F}}_w$ is a $(j-w+1)$-dimensional subspace for any $w\leq l$ (i.e., induction hypothesis). Now, we consider the case of $l+1$. Since $\tilde{\mathcal{F}}_{l+1}$ is only a rotation of $\tilde{\mathcal{F}}_{l}\cap f_{v_{l}}$, they have the same dimensionality. Also, from the algorithm, we know that $\tilde{\mathcal{F}}_{l}\cap f_{v_{l}}$ is the subspace in $\tilde{\mathcal{F}}_{l}$ which is perpendicular to $Proj(\overrightarrow{o t}_{v_{l}})$, where $Proj(\overrightarrow{ot}_{v_{l}})$ is the projection of $\overrightarrow{ot}_{v_{l}}$ on $\tilde{\mathcal{F}}_{l}$. Thus, $\tilde{\mathcal{F}}_{l}\cap f_{v_{l}}$ is a $(j-l+1-1)$-dimensional subspace, which implies that $\tilde{\mathcal{F}}_{l+1}$ is also a $(j-l)$-dimensional subspace. Hence, Claim (\textbf{2}) is proved.
\qed
\end{proof}

\section{Proof of Lemma \ref{lem-translate}}

 \begin{proof}
%For sake of simplicity, we denote
For simplicity, we let $||\mathcal{F}, \overline{\mathcal{F}}||$ denote $dist\{\mathcal{F}, \overline{\mathcal{F}}\}$. By triangle inequality, for any $p\in P$, we have $||p, \overline{\mathcal{F}}||\leq ||p, \mathcal{F}||+||\mathcal{F}, \overline{\mathcal{F}}||$. Thus, 
\begin{eqnarray} 
 \frac{1}{|P|}\sum_{p\in P}||p, \overline{\mathcal{F}}||^2 &\leq& \frac{1}{|P|}\sum_{p\in P}( ||p, \mathcal{F}||+||\mathcal{F}, \overline{\mathcal{F}}||)^2 \nonumber \\
 &=&\frac{1}{|P|}\sum_{p\in P}(||p, \mathcal{F}||^2+2||p, \mathcal{F}|| \times ||\mathcal{F},\overline{\mathcal{F}}||+||\mathcal{F}, \overline{\mathcal{F}}||^2). \label{for-11}
 \end{eqnarray}
 
Let $c=\frac{||\mathcal{F}, \overline{\mathcal{F}}||}{\sqrt{\frac{1}{|P|}\sum_{p\in P}||p, \mathcal{F}||^2}}$. Then, we have
\begin{eqnarray}
2||p, \mathcal{F}||\times ||\mathcal{F}, \overline{\mathcal{F}}||\leq c(||p, \mathcal{F}||^2+\frac{1}{|P|}\sum_{p\in P}||p, \mathcal{F}||^2). \label{for-12}
\end{eqnarray}

Combining (\ref{for-11}) and (\ref{for-12}), we have 
\begin{eqnarray*}
 \frac{1}{|P|}\sum_{p\in P}||p, \mathcal{F}'||^2&\leq&\frac{1}{|P|}\sum_{p\in P}(||p, \mathcal{F}||^2+c(||p, \mathcal{F}||^2+\frac{1}{|P|}\sum_{p\in P}||p, \mathcal{F}||^2)+||\mathcal{F}, \overline{\mathcal{F}}||^2)\\
 &=&\frac{1+2c}{|P|}\sum_{p\in P}||p, \mathcal{F}||^2+||\mathcal{F}, \overline{\mathcal{F}}||^2=(1+c)^2\frac{1}{|P|}\sum_{p\in P}||p, \mathcal{F}||^2\\
  &=&(\sqrt{\frac{1}{|P|}\sum_{p\in P}||p, \mathcal{F}||^2}+||\mathcal{F}, \overline{\mathcal{F}}||)^2.
\end{eqnarray*}
 Thus, the lemma is true.
 \qed
 \end{proof}

\section{Some Details Analysis for Algorithm Projective-Clustering}
\label{sec-parameter}

\noindent\textbf{ Sample size and success probability:}  Theorem \ref{the-rpa} enables us to find a good flat for the whole point set $P$. To find $k$ good flats, one for each cluster, 
we need to increase the probability from $(1-\epsilon)^4$ to $(1-\frac{\epsilon}{k})^4$, and replace $\gamma$ by $\frac{\gamma}{2}$. Thus, for each cluster, we need 
$t=\frac{8j^2}{\gamma\epsilon}\ln\frac{kj^2}{\epsilon}$ points. By Lemma \ref{lem-select}, we know that if we want a random sample containing at least $t$ points from one cluster with probability $1-\frac{1}{2k}$, we need to sample $r=\frac{t\ln2kt}{\alpha}$ points  from $P$, where $\alpha$ is the fraction of the cluster in $P$. Note that since our algorithm only focuses on emulating the behavior of large clusters, $\alpha\geq\frac{\gamma}{2k}$. This means that  it is sufficient to set the sample size $r$ to be $\frac{2kt}{\gamma}\ln 2kt$. The total success probability is therefore $((1-\frac{1}{2k})(1-\frac{\epsilon}{k})^4)^k\geq \frac{1}{2}(1-\epsilon)^4$.

\noindent\textbf{Radius and grid density:} Let $Opt$ be the optimal objective value of  projective clustering  on $P$. By Theorem \ref{the-rpa}, we know that Step $4$ in the algorithm outputs an objective value $\mathcal{L}\leq (1+5\sqrt{j}r)^{2j} Opt$. From the proof of Theorem \ref{the-rpa}, we know that  on the path generating the resulting flat, 
%by Theorem \ref{the-rpa}, 
if each node incurs  a $\frac{\epsilon}{4j}$-rotation rather than a $\Delta$-rotation, the approximation ratio will become $(1+\frac{\epsilon}{4j})^{2j}\leq 1+\epsilon$ (instead of $(1+5\sqrt{j}r)^{2j}$). Thus, to reduce $5\sqrt{j}r$ to $\frac{\epsilon}{4j}$, we can build a grid around the point associated with each node $v$ in the tree $\mathcal{T}$ generated by Algorithm Recursive-Projection.  For each grid point, add a  
node as a new sibling of $v$ (see Step $5$) and associate it with the grid point.  The problem is how to determine the density of the grid so as to generate the desired approximation ratio.

To determine the density of the grid, we first have
%Now, let us determine the density of the grid. 
%At first, we have:
\begin{eqnarray}
 1\leq \frac{\sqrt{\mathcal{L}}}{\sqrt{Opt}}\leq (1+5\sqrt{j}r)^j .\label{for-6}
\end{eqnarray}

We use the same notations as in Theorem \ref{the-rpa}. Let $v_{l}$ be the current node in Algorithm Recursive-Projection, and $Proj(t_{v_{l}}$ be the projection of $t_{v_l}$ on $f_{v_l}\cap\mathcal{F}_{opt}$.
% as $Proj(t_{v_l})$. 
Further, let $h^2=\frac{1}{|\tilde{P}_l|}\sum_{p\in\tilde{P}_l}||<p, Proj(t_{v_l})>||^2$, and $h'=||o-Proj(t_{v_l})||$. By Definition \ref{def-delta}, we have
\begin{eqnarray}
 \frac{||t_{v_l}-Proj(t_{v_l}) ||/h'}{\sqrt{Opt}/h}\leq 5\sqrt{j}r. \label{for-27}
\end{eqnarray}
%\frac{\epsilon}{4j}\leq
%Note that we assume the left side inequality holds, otherwise, it would be already a $\frac{\epsilon}{4j}$-rotation before constructing the grid. 

Combining (\ref{for-6}) and (\ref{for-27}), we have
\begin{eqnarray}
\frac{5\sqrt{j}r\frac{h'}{h}\sqrt{\mathcal{L}}}{||t_{v_l}-Proj(t_{v_l}) ||}\geq 1.\label{for-28}
%\leq \frac{\epsilon}{4j}(1+\lambda)^j \lambda.$$
\end{eqnarray}
By Lemma \ref{lem-slab}, we know that $\frac{1}{t}\leq \frac{h'}{h}\leq 1$ (note that $t=\frac{8j^2}{\gamma\epsilon}\ln\frac{kj^2}{\epsilon}$, as discussed previously). Hence, we can set the radius $r_{\mathcal{B}}$ of $\mathcal{B}$ to be $5\sqrt{j}r\sqrt{\mathcal{L}}$. Thus, $r_{\mathcal{B}}>||t_{v_l}-Proj(t_{v_l}) ||$,  which implies that $Proj(t_{v_l})$ locates inside $\mathcal{B}$. If we set $\epsilon_0=\frac{\epsilon}{4j(1+5\sqrt{j}r_{\mathcal{B}})^j\lambda t}$, and construct a grid inside $\mathcal{B}$ with grid length $\frac{\epsilon_0}{\sqrt{d}} r_{\mathcal{B}}$, then there is one grid point $\pi$ satisfying the following inequality.
\begin{eqnarray}
||\pi-Proj(t_{v_l})||&\leq&\sqrt{d(\frac{\epsilon_0}{\sqrt{d}} r_{\mathcal{B}})^2}=\epsilon_0 r_{\mathcal{B}}=\frac{\epsilon5\sqrt{j}r\sqrt{\mathcal{L}}}{4j(1+5\sqrt{j}r)^j\lambda t} \nonumber \\
&\leq& \frac{\epsilon}{4jt}\sqrt{Opt}, \label{for-30}
\end{eqnarray}
 where the last inequality follows from (\ref{for-6}). Combining inequalities (\ref{for-30}) and $\frac{1}{t}\leq \frac{h'}{h}$, we have $\frac{||\pi-Proj(t_{v_l})||}{h'}\leq \frac{\epsilon}{4j}\frac{\sqrt{Opt}}{h}$, which implies that it  induces a $\frac{\epsilon}{4j}$-rotation. 
 
\noindent\textbf{Running time.} The dimension reduction step costs $O(nd\cdot$poly$(\frac{kj}{\epsilon}))$ time, and resulting problem has diemsion $d'=$poly$(\frac{kj}{\epsilon})$. Step $3$ and $4$ take $O(k2^{2rkj}nd')$ time. Note that in Step $5$, the complexity of the grid inside $\mathcal{B}$ is $(\frac{\sqrt{d'}}{\epsilon_0})^{d'}$. Thus, the complexity of each $\mathcal{T}_l$ will increase to $O((2^{2r}(\frac{\sqrt{d'}}{\epsilon_0})^{d'})^j)$. Hence, the total running time is $2^{poly(\frac{kj}{\epsilon\gamma})}nd$.

\section{PTAS for Regular Projective Clustering}
\label{sec-regular}

This section first introduces the {\em regular}  projective clustering problem, and then presents a PTAS for it. 
%\subsection{Coefficient of Variation and Regular Case}
%\label{sec-def}
We start our discussion with  a concept used in statistics.

\begin{definition}[Coefficient of Variation (CV)]
\label{def-cv}
Let $x$ be a random variable, and $\mu=E[x]$ be its expection. The coefficient of variation of $x$ is denoted as $\frac{\sqrt{E[(x-\mu)^2]}}{E[|x-\mu|]}$.
%Let $\mathcal{X}=\{x_{1}, \cdots, x_{n}\}$ be a set of $n$ nonnegative numbers. The coefficient of variation of $\mathcal{X}$ is the ratio of the standard variance $\delta$ to the the mean $\mu$: $\frac{\delta}{\mu}$.
\end{definition}

%\noindent\textbf{Explanations on CV: }
CV of a single variable aims to measure the dispersion of the variable in a way that does not depend on the variable's actual value. 
%measurement unit. 
The higher the CV, the greater the dispersion is in the variable. Distributions with $CV<1$ (such as an Erlang distribution) are considered as low-variance, while those with $CV>1$ (such as a hyper-exponential distribution) are considered as high-variance. Note that many commonly encountered distributions has constant CV (e.g., Gaussian distribution \footnote{Let $f(x)=\frac{1}{\sqrt{2\pi\delta^2}}e^{-\frac{x^2}{2\delta^2}}$ be any Gaussian distribution with mean point at the origin. Then, $E[|x|]=\int_{-\infty}^{+\infty}|x|\frac{1}{\sqrt{2\pi\delta^2}}e^{-\frac{x^2}{2\delta^2}} \,dx =2\int_{0}^{+\infty}
x\frac{1}{\sqrt{2\pi\delta^2}}e^{-\frac{x^2}{2\delta^2}}\, dx=\frac{2\delta}{\sqrt{2\pi}}.$ Since $E[x^2]=\delta^2$, we have $CV=\frac{\sqrt{E[x^2]}}{E[|x|]}=\sqrt{\frac{\pi}{2}}$.}).

\begin{lemma}
\label{lem-bound}
Let $\mathcal{X}=\{x_{1}, \cdots, x_{n}\}$ be a set of $n$ numbers with  coefficient of variation $\omega$, and $S=\{x_{i_{1}}, \cdots, x_{i_{m}}\}$ be a random sample of $\mathcal{X}$. 
Then, for any positve constant $\eta$, $Prob(\frac{\sum^{m}_{l=1}|x_{i_l}-\mu|}{m}\ge (1-\eta\sqrt{\frac{\omega^2-1}{m}})\frac{h}{\omega})\ge 1-\frac{1}{\eta^2},$ where
$\mu=\frac{1}{n}\sum^n_{i=1}x_i$ and $h^2=\frac{1}{n}\sum^n_{i=1}(x_i-\mu)^2$.
%then for any positive constant $\eta$:
%for any $t\geq r$, $Prob(\frac{\sum_{1\leq j\leq m}x_{i_{j}}}{m}\geq \frac{\delta}{t})\geq 1-\frac{R^2-1}{m}(\frac{t}{t-R})^2$.
\end{lemma}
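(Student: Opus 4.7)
The plan is to reduce the statement to a one-sided Chebyshev (or Markov–Chebyshev) bound applied to the random variables $Y_l := |x_{i_l}-\mu|$, since the given CV hypothesis directly controls both the mean and the variance of $Y_l$.

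First I would unfold the definition of coefficient of variation. By Definition \ref{def-cv} applied to the empirical distribution on $\mathcal{X}$, we have $E[|x_{i_l}-\mu|] = h/\omega$ and $E[(x_{i_l}-\mu)^2] = h^2$, where the expectation is over a uniformly random index. Consequently
$$\mathrm{Var}(Y_l) \;=\; E[Y_l^{2}] - (E[Y_l])^{2} \;=\; h^{2} - \frac{h^{2}}{\omega^{2}} \;=\; \frac{h^{2}(\omega^{2}-1)}{\omega^{2}}.$$
Treating the sample $S$ as independent draws (this is the standard assumption in the companion results like Lemma \ref{lem-dis}; if one is sampling without replacement, the variance is only smaller, so the same bound goes through), the sample average $\bar Y := \frac{1}{m}\sum_l Y_l$ has mean $h/\omega$ and variance $\frac{h^{2}(\omega^{2}-1)}{m\omega^{2}}$.

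Next I would apply the one-sided Chebyshev inequality to $\bar Y$: for every $t>0$,
$$\Pr\bigl(\bar Y \le E[\bar Y] - t\bigr) \;\le\; \frac{\mathrm{Var}(\bar Y)}{t^{2}} \;=\; \frac{h^{2}(\omega^{2}-1)}{m\omega^{2}t^{2}}.$$
Choosing $t = \eta\cdot\frac{h}{\omega}\sqrt{\frac{\omega^{2}-1}{m}}$ makes the right-hand side equal to $1/\eta^{2}$, and the threshold becomes
$$E[\bar Y] - t \;=\; \frac{h}{\omega} - \eta\,\frac{h}{\omega}\sqrt{\frac{\omega^{2}-1}{m}} \;=\; \Bigl(1-\eta\sqrt{\tfrac{\omega^{2}-1}{m}}\Bigr)\frac{h}{\omega},$$
which is exactly the bound stated in the lemma. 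Taking complements gives the desired inequality with probability at least $1-1/\eta^{2}$.

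I do not expect a substantive obstacle here; the only subtlety worth checking is that $\omega \ge 1$ always holds (by Cauchy–Schwarz applied to $E[|x-\mu|\cdot 1]\le \sqrt{E[(x-\mu)^{2}]}$), so $\omega^{2}-1\ge 0$ and the square root is real, and that one is willing to model the random sample as i.i.d.\ uniform draws so that the variance of the mean scales as $1/m$; the paper's use of Lemma \ref{lem-dis} indicates this is the intended regime. The entire argument is a two-line moment computation followed by a single Chebyshev step.
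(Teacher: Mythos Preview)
Your proposal is correct and mirrors the paper's own proof: both compute the mean $h/\omega$ and variance $h^2-\tilde\mu^2=h^2(\omega^2-1)/\omega^2$ of the variables $|x_i-\mu|$, then apply Chebyshev to the sample mean with the same choice of deviation parameter. The only cosmetic difference is that the paper introduces the notation $\tilde\mu=\frac{1}{n}\sum_i|x_i-\mu|$ and substitutes $\tilde\mu=h/\omega$ at the end, whereas you work with $h/\omega$ from the outset; your extra remarks on $\omega\ge 1$ and sampling with versus without replacement are not in the paper but are welcome sanity checks.
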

\begin{proof}
Let $\tilde{\mu}=\frac{1}{n}\sum^n_{i=1}|x_{i}-\mu|$. Then, from the definition of CV, we know that $\omega=\frac{\sqrt{\frac{1}{n}\sum^n_{i=1}(x_i-\mu)^2}}{\frac{1}{n}\sum^n_{i=1}|x_{i}-\mu|}=\frac{h}{\tilde{\mu}}$, which implies $h=\omega\tilde{\mu}$.

Since the variance of $\{|x_{i}-\mu| \mid 1\leq i\leq n\}$ is $\frac{1}{n}\sum^n_{i=1}(|x_i-\mu|-\tilde{\mu})^2=h^2-\tilde{\mu}^2$, and $S$ is the random sample from $\mathcal{X}$ with size $m$, we know that the expected value and variance of $\frac{\sum^{m}_{l=1}|x_{i_l}-\mu|}{m}$ are $\tilde{\mu}$ and $\frac{h^2-\tilde{\mu}^2}{m}$ respectively. By Markov inequality, we know
\begin{eqnarray}
Prob(|\frac{\sum^{m}_{l=1}|x_{i_l}-\mu|}{m}-\tilde{\mu}|\ge \eta\sqrt{\frac{h^2-\tilde{\mu}^2}{m}})\leq \frac{1}{\eta^2}. \label{for-33}
\end{eqnarray}
Meanwhile, since 
\begin{eqnarray*}
|\frac{\sum^{m}_{l=1}|x_{i_l}-\mu|}{m}-\tilde{\mu}|\le \eta\sqrt{\frac{h^2-\tilde{\mu}^2}{m}}\Longrightarrow \frac{\sum^{m}_{l=1}|x_{i_l}-\mu|}{m}\ge\tilde{\mu}-\eta\sqrt{\frac{h^2-\tilde{\mu}^2}{m}},
\end{eqnarray*}
we have
\begin{eqnarray}
Prob(\frac{\sum^{m}_{l=1}|x_{i_l}-\mu|}{m}\ge\tilde{\mu}-\eta\sqrt{\frac{h^2-\tilde{\mu}^2}{m}})\ge Prob(|\frac{\sum^{m}_{l=1}|x_{i_l}-\mu|}{m}-\tilde{\mu}|\le \eta\sqrt{\frac{h^2-\tilde{\mu}^2}{m}}). \label{for-34}
\end{eqnarray}

Combining (\ref{for-33}) and (\ref{for-34}), we have
\begin{eqnarray*}
Prob(\frac{\sum^{m}_{l=1}|x_{i_l}-\mu|}{m}\ge\tilde{\mu}-\eta\sqrt{\frac{h^2-\tilde{\mu}^2}{m}})\ge 1-\frac{1}{\eta^2}.
\end{eqnarray*}

Recall that $h=\omega\tilde{\mu}$. If we replace $\tilde{\mu}$ by $\frac{h}{\omega}$,  the above inequality becomes
\begin{eqnarray*}
Prob(\frac{\sum^{m}_{l=1}|x_{i_l}-\mu|}{m}\ge (1-\eta\sqrt{\frac{\omega^2-1}{m}})\frac{h}{\omega})\ge 1-\frac{1}{\eta^2}.
\end{eqnarray*}
%First, we know that $Prob(\frac{\sum_{1\leq j\leq m}x_{i_{j}}}{m}\geq \frac{\delta}{t})\geq Prob(|\frac{\sum_{1\leq j\leq m}x_{i_{j}}}{m}-\mu|\leq \mu-\frac{\delta}{t})$. By Lemma \ref{pro}, we have $Prob(|\frac{\sum_{1\leq j\leq m}x_{i_{j}}}{m}-\mu|\leq \mu-\frac{\delta}{t})\geq 1-\frac{\delta^2-\mu^2}{m}(\frac{1}{\mu-\frac{\delta}{t}})^2$$
%$$=1-\frac{R^2-1}{m}(\frac{t}{t-R})^2$.Combining the two inequalities, we have the lemma.
\qed
\end{proof}

Using coefficient of variation, we introduce the  regular projective clustering problem.

%\textbf{Notations. }
Let $P$ be a point set in $\mathbb{R}^d$, $\mathcal{F}_{opt}$ be its optimal $j$-dimensional flat fitting, and $o$ be its mean point. 
%We denote the mean point of $P$ as $o$. 
It is easy to see that $o$ locates on $\mathcal{F}_{opt}$.

\begin{definition}[Regular Single $j$-Flat Fitting]
\label{def-rff}
A single $j$-flat fitting problem with input point set $P$ is regular if for any direction 
 %$P$ is a regular case with some constant $\omega$, if for any direction 
 $\overrightarrow{v}$, 
 %comes from $o$, the mean point of $P$, 
 the coefficient of variation of $\{<\overrightarrow{op}, \overrightarrow{v}>\mid p\in P\}$ is bounded by some constant $\omega$. $\omega$ is called the regular factor of $P$.
  %Here $<\overrightarrow{a},\overrightarrow{b}>$ denotes the inner product of $\overrightarrow{a}$ and $\overrightarrow{b}$.
\end{definition}

%\textbf{Note}: In the above definition, it is about the coefficient of variation of the absolute values $\{|<\overrightarrow{op_{1}}, \overrightarrow{v}>|, \cdots, |<\overrightarrow{op_{n}}, \overrightarrow{v}>|\}$ rather than $\{<\overrightarrow{op_{1}}, \overrightarrow{v}>, \cdots, <\overrightarrow{op_{n}}, \overrightarrow{v}>\}$. Since the mean of  $\{<\overrightarrow{op_{1}}, \overrightarrow{v}>, \cdots, <\overrightarrow{op_{n}}, \overrightarrow{v}>\}$ is zero, it makes the cv to be infinity.

\begin{definition}[Regular $(k,j)$-Projective Clustering]
\label{def-rpc}
%Let $P$ be an instance of 
A $(k,j)$-projective clustering  problem with input point set $P$ and  optimal clusters $\{C_{1}, \cdots, C_{k}\}$  is regular  if each $C_{i}$ is a regular  single $j$-flat fitting problem.
%, and the size $|C_{i}|$ is at least $\alpha n$, where $\alpha$ is some constant between $0$ and $\frac{1}{k}$.
\end{definition}

%\subsection{Recursively Project Algorithm}
%\label{sec-rpa}
%In this section, we would introduce the recursively project algorithm for the approximate solution for the regular case of single flat fitting. From Lemma \ref{ker}, we know that if we can find the Flat Kernel of the given points set, we can get an approximate flat to fit it. The \textbf{main idea} is to find the $j$ points of the Flat Kernel one by one. In each time, we find one point belongs to the Flat Kernel. Then project the whole points set into  orthogonal subspace depends on this point, and find the other points of the Flat Kernel in this subspace. 
%
%At first, we show the fact which we would use in this whole section:\\
%
%\textbf{Fact 1}: The optimal $j$-dimensional fitting flat for $P$ pass the mean point of $P$.\\
%
%
%
%
%We introduce the following algorithms which would be called in the recursively project algorithm. \\

The following theorem is a counterpart of Theorem \ref{the-alg1} for Regular projective clustering. 
  
\begin{theorem}
\label{the-ralg1}
Let $P=\{p_1, \cdots, p_n\}$ be the $\mathbb{R}^{d}$ input point set of a regular single $j$-flat fitting problem with mean point $o$ and regular factor $\omega$,  and $\mathcal{S}$ be its random sample of size $m=\frac{\omega^2-1}{\epsilon^2}$, where $\epsilon>0$ is a small constant. Let  $\mathcal{F}_{opt}$ be a $j$-dimensional flat, and $\overline{\mathcal{S}}$ is the set of points returned by Algorithm Symmetric-Sampling on $\mathcal{S}$ and $o$. 
%and $o$ be the mean point of $P$. Assume $P$ is a Regular case for single $j$-flat fitting respect to constant $\omega$. We take a random sample $\mathcal{S}$ with size $m=\frac{\omega^2-1}{\epsilon^2}$ from %$P$, and run Algorithm $1$ on it with $o$. 
Then with probability $(1-\epsilon)(1-\frac{\epsilon^2}{\omega^2-1})^2$, $\overline{\mathcal{S}}$ contains one point $s$ such that  the flat $\mathcal{F}'$ rotated from $\mathcal{F}_{opt}$ and induced by $\overrightarrow{os}$ satisfies the following inequality,
%forms another flat $\mathcal{F}'$, such that the rotation is a $\frac{5\sqrt{j}\omega}{1-\sqrt{\epsilon}}$-rotation, i.e., 
%$\delta^{2}_{P,\mathcal{F}'} \leq (1+\frac{5\sqrt{j}\omega}{1-\sqrt{\epsilon}})^2 \delta_{P,\mathcal{F}_{opt}}$
$$\sum_{p\in P}||p, \mathcal{F}'||^2\leq (1+\frac{5\sqrt{j}\omega}{1-\sqrt{\epsilon}})^2\sum_{p\in P}||p, \mathcal{F}_{opt}||^2.$$
  \end{theorem}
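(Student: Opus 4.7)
The plan is to adapt the proof of Theorem \ref{the-alg1} by replacing its slab-partition construction (which had to discard a $\gamma$ fraction of points) with an application of Lemma \ref{lem-bound}, which is exactly the regularity-based analogue. As in the earlier proof, I first reduce to the case where $P$ already lies on $\mathcal{F}_{opt}$, identify $\mathcal{F}_{opt}$ with $\mathbb{R}^j$, and place the origin at $o$; the general case where $P\subset\mathbb{R}^d$ follows by selecting the candidate direction in the projected copy and then lifting back, bounding the transversal component $\|\rho_{l_0}-\mathrm{Proj}(\rho_{l_0})\|$ in terms of $\delta_{P,\mathcal{F}_{opt}}$ via a Claim~(1)-style argument.

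Inside $\mathbb{R}^j$, fix any orthonormal basis $u_1,\ldots,u_j$ at $o$ and set $h_l^2=\tfrac{1}{|P|}\sum_{p\in P}\langle p,u_l\rangle^2$. Mirroring Theorem \ref{the-alg1}, let $\mathcal{S}$ be the uniform sample of size $m=(\omega^2-1)/\epsilon^2$ and, for each $l$, let $\rho_l\in\overline{\mathcal{S}}$ be the mean of those points of $-\mathcal{S}\cup\mathcal{S}$ whose $l$-th coordinate is nonnegative, so that $\langle\rho_l,u_l\rangle=\tfrac{1}{m}\sum_{p\in\mathcal{S}}|\langle p,u_l\rangle|$. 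The regularity assumption bounds the coefficient of variation of $\langle\cdot,u_l\rangle$ by $\omega$, so Lemma \ref{lem-bound} applied with $\eta=1/\sqrt{\epsilon}$ gives $\langle\rho_l,u_l\rangle\ge (1-\sqrt{\epsilon})h_l/\omega$ with the claimed probability. This single inequality replaces the whole slab-partition argument of Theorem \ref{the-alg1} and, crucially, holds for the \emph{full} set $P$ rather than for a subset $P'$, which is why no outliers need to be discarded in the regular setting.

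Next I invoke the Hyperbox Lemma (Lemma \ref{lem-cuboid}) on the axis-aligned box whose $l$-th half-side is $h_l(1-\sqrt{\epsilon})/\omega$, with $\rho_l$ playing the role of the incident point on the $l$-th facet, to obtain an index $l_0$ such that the slab determined by $\overrightarrow{o\rho_{l_0}}$ contains that box after amplification by $\sqrt{j}\omega/(1-\sqrt{\epsilon})$. Lifting $\rho_{l_0}$ back to $\mathbb{R}^d$ and controlling $\|\rho_{l_0}-\mathrm{Proj}(\rho_{l_0})\|$ by a constant multiple of $\delta_{P,\mathcal{F}_{opt}}$ (exactly as in Claim~(1) of Theorem \ref{the-alg1}, whose argument does not use regularity and carries over verbatim after adjusting Lemma \ref{lem-dis}-type probabilities to the regular setting), Definition \ref{def-delta} then identifies the rotation induced by $\overrightarrow{o\rho_{l_0}}$ as a $\Delta$-rotation of $\mathcal{F}_{opt}$ with respect to all of $P$, for $\Delta=\tfrac{5\sqrt{j}\omega}{1-\sqrt{\epsilon}}\,\delta_{P,\mathcal{F}_{opt}}$. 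Applying Lemma \ref{lem-delta} and squaring yields the advertised bound, and combining the three independent success events (one Lemma \ref{lem-bound} invocation for finding $\rho_{l_0}$ and two for the Claim~(1)-style transversal bound) produces the probability $(1-\epsilon)(1-\epsilon^2/(\omega^2-1))^2$.

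The principal obstacle is the switch from the outlier-throwing slab partition to the regularity-based bound: in Theorem \ref{the-alg1} one had complete control over slab widths by construction, whereas here one must show that \emph{every} axis direction simultaneously admits a sample-mean lower bound of order $h_l/\omega$ from the \emph{same} sample $\mathcal{S}$. I expect this to work because Lemma \ref{lem-bound} applies direction-by-direction with identical parameters, so the factor $(1-\sqrt{\epsilon})/\omega$ appears uniformly in every direction and the Hyperbox Lemma absorbs the $j$-dependence into its $\sqrt{j}$ amplification rather than into a union-bound degradation of the probability. A secondary subtlety is verifying that Claim~(1) still closes in the regular setting even though Lemma \ref{lem-dis} used there is not regularity-aware; here one only needs a bound on $\|\overline{x}(S)-\overline{x}(T)\|$ in terms of the variance, which holds for any point set.
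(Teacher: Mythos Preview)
Your overall strategy is sound --- replace the slab partition of Theorem~\ref{the-alg1} by Lemma~\ref{lem-bound} so that no points are discarded, bound the transversal component $\|\rho_{l_0}-\mathrm{Proj}(\rho_{l_0})\|$ via the Claim~(1) argument, and finish with Lemma~\ref{lem-delta}. The gap is in the Hyperbox step. In Theorem~\ref{the-alg1} the box $\Pi_j$ actually \emph{contains} $P'$, so ``amplified slab contains $\Pi_j$'' immediately upper-bounds $\max_{p\in P'}|\langle p,\vec u\rangle|$ and hence the quantity $h$ in Definition~\ref{def-delta}. Your box, with half-sides $h_l(1-\sqrt{\epsilon})/\omega$, is \emph{smaller} than the RMS widths $h_l$ and certainly does not contain $P$; consequently ``the amplified slab of $\overrightarrow{o\rho_{l_0}}$ contains that box'' says nothing about $h=\bigl(\tfrac{1}{|P|}\sum_p\langle p,\vec v\rangle^2\bigr)^{1/2}$, and the $\Delta$-rotation bound does not follow. (A secondary issue: your $\rho_l$ need not be incident to the facets, since their off-axis coordinates are uncontrolled, so even the hypothesis of Lemma~\ref{lem-cuboid} is not met.)

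The paper bypasses Lemma~\ref{lem-cuboid} entirely. It fixes coordinates on $\mathcal{F}_{opt}$, sets $\delta_l^2=\tfrac1n\sum_i (x^i_l)^2$, deterministically picks the direction of largest variance (WLOG $l_0=1$), and constructs a \emph{single} candidate $y$ by symmetrizing $\mathcal S$ along that one axis, so that $y_1=\tfrac1m\sum_{p\in\mathcal S}|x^p_1|$. One application of Lemma~\ref{lem-bound} gives $y_1\ge(1-\sqrt\epsilon)\delta_1/\omega$ with probability $1-\epsilon$. The $\sqrt{j}$ factor then comes not from any box/slab geometry but from the elementary Cauchy--Schwarz estimate $h^2=\tfrac1n\sum_i\langle p_i,\vec v\rangle^2\le\sum_l\delta_l^2\le j\,\delta_1^2$, which combined with $\|\mathrm{Proj}(y)\|\ge y_1$ yields $\|\mathrm{Proj}(y)\|/h\ge(1-\sqrt\epsilon)/(\sqrt j\,\omega)$ directly. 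This is exactly the conclusion your Hyperbox detour was aiming for; swapping that paragraph for the Cauchy--Schwarz step closes the gap, uses one random event rather than (potentially) $j$, and recovers the stated probability $(1-\epsilon)(1-\epsilon^2/(\omega^2-1))^2$ without any union-bound loss.
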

%  Before prove Lemma \ref{lem-symmetric}, we first show the following Lemma \ref{lem-symmetric} and Lemma \ref{lem-dis}, which would be used in the proof.
  
  \vspace{-0.15in}
     \begin{figure}[]
     \vspace{-0.15in}
     %\begin{minipage}[t]{0.5\linewidth}
  \centering
  \includegraphics[height=1.8in]{}
  \vspace{-0.1in} 
    \caption{An example illustrating Theorem \ref{the-ralg1}}
  \label{fig-symmetric}
  \vspace{-0.15in}
  \end{figure}
 \vspace{-0.15in} 
  \begin{proof}
  Without loss of generality, we assume that $o$ is the origin. Since $\mathcal{F}_{opt}$ passes through  $o$, $\mathcal{F}_{opt}$ passes through the origin. 
  Thus, we can assume $\mathcal{F}_{opt}$ is the $j$-dimensional subspace spanned by the first $j$ dimensions. %We denote the coordinate of 
  Let $(x^i_1, \cdots, x^i_d)$ be the coordinates of each point $p_{i}\in P$,  and $\delta^2_{l}=\frac{1}{n}\sum^n_{i=1}(x^i_l)^2$ for $1\leq l\leq j$. %and $\delta_1\geq \delta_2\cdots\geq\delta_j$.
   
We consider the division $P=P_{L}\cup P_{R}$ (see Fig. \ref{fig-symmetric}), where $P_{L}=\{p_{i}\in P| x^i_1< 0\}$ and $P_{R}=\{p_{i}\in P| x^i_1\geq 0\}$. Let $-P_{L}=\{-p_{i}|p_{i}\in P_{L}\}$. Consider the point set $-P_{L}\cup P_{R}$. Since $P$ is  regular  with a regular factor  $\omega$ (which is a positive constant)  and the mean of $\{x^1_l, \cdots, x^n_l\}$ is $0$ (due to the fact that  the mean point of $P$ is the origin), the coefficient of variation of $\{x^1_l, \cdots, x^n_l\}$ is no more than $\omega$, for all $1\leq l\leq j$. 

For the sample set $\mathcal{S}$,
 %is a sample from $P$ with size $m$. 
 we define a subset $T$  of $-\mathcal{S}\cup\mathcal{S}$ as $T=(-\mathcal{S}\cap -P_L)\cup(\mathcal{S}\cap P_R)=-(\mathcal{S}\cap P_L)\cup(\mathcal{S}\cap P_R)$. Since $\mathcal{S}=(\mathcal{S}\cap P_L)\cup(\mathcal{S}\cap P_R)$, it is easy to see that $|T|=|\mathcal{S}|=m$. 
 Let $y$ be the mean point of $T$ with 
 %Assume the mean point of $T$ is $y$, which has 
 coordinates $(y_1, \cdots, y_d)$. Since Algorithm Symmetric-Sampling enumerates the mean points of all subsets of $-\mathcal{S}\cup\mathcal{S}$,  $y$ is clearly  in $\overline{\mathcal{S}}$. If we denote the projection of $y$ on $\mathcal{F}_{opt}$ as $Proj(y)$,  then it is easy to know that $Proj(y)=(y_1, \cdots, y_j, 0, \cdots, 0)$. Let $\overrightarrow{v}$ be the unit vector $\frac{Proj(y)}{||Proj(y)||}=(\frac{y_1}{\sqrt{y^2_1+\cdots +y^2_j}}, \cdots, \frac{y_j}{\sqrt{y^2_1+\cdots +y^2_j}}, 0, \cdots, 0)$ .
 %as $\overrightarrow{v}$. 

Let $\delta^2_{opt}=\frac{1}{n}\sum^n_{i=1}||p_{i}, \mathcal{F}_{opt}||^2$. Below, we prove that $y$ is the desired point which induces a
% be assigned to be the desired point $s$. In other words, we will prove $y$ forms a 
$\Delta$-rotation for $\mathcal{F}_{opt}$ with $\Delta=\frac{5\sqrt{j}\omega}{1-\sqrt{\epsilon}}\delta_{opt}$.

By  Definition \ref{def-delta}, we know that in order to prove that $y$ induces a $\frac{5\sqrt{j}\omega}{1-\sqrt{\epsilon}}\delta_{opt}$-rotation, we just need to show that $\frac{||y-Proj(y)||}{||Proj(y)||}\leq \frac{5\sqrt{j}\omega}{1-\sqrt{\epsilon}}\frac{\delta_{opt}}{\sqrt{\frac{1}{n}\sum^n_{i=1}|<p_{i}, \overrightarrow{v}>|^2}}$. In other words, we need to prove two things: (a) $\frac{||Proj(y)||}{\sqrt{\frac{1}{n}\sum^n_{i=1}|<p_{i}, \overrightarrow{v}>|^2}}$ is larger than certain value, and (b) $\frac{||y-Proj(y)||}{ \delta_{opt}}$ is smaller than certain value.

 In order to prove (a), we have 
  \begin{eqnarray}
   \frac{1}{n}\sum^n_{i=1}|<p_{i}, \overrightarrow{v}>|^2=\frac{1}{n}\sum^n_{i=1}(\sum^j_{l=1}\frac{y_l x^i_l}{\sqrt{y^2_1+\cdots, y^2_j}})^2. \nonumber\\
    \leq \frac{1}{n}\sum^n_{i=1}( (\frac{\sum^j_{l=1} y^2_l}{y^2_1+\cdots +y^2_j})(\sum^j_{l=1}(x^i_l)^2))=\frac{1}{n}\sum^n_{i=1}\sum^j_{l=1}(x^i_l)^2=\sum^j_{l=1} \delta^2_{l}, \label{for-35}
   \end{eqnarray}
 where the inequality follows from the fact that $(\sum^n_{i=1}a_i b_i)^2\leq (\sum^n_{i=1}a^2_i)(\sum^n_{i=1}b^2_i)$ for any real numbers $\{a_i\mid 1\leq i\leq n\}$ and $\{b_i\mid 1\leq i\leq n\}$.
  
Meanwhile, since $||Proj(y)||=\sqrt{y^2_1+\cdots+y^2_j}$, by (\ref{for-35}) we have
  
  \begin{eqnarray}
 \frac{||Proj(y)||}{\sqrt{\frac{1}{n}\sum^n_{i=1}|<p_{i}, \overrightarrow{v}>|^2}}\geq\sqrt{\frac{y^2_1+\cdots+y^2_j}{\sum^j_{l=1} \delta^2_{l}}}. \label{for-36}
  \end{eqnarray}
  
Without loss of generality, we assume that $\delta_1=\max\{\delta_1, \cdots, \delta_j\}$. Then we have $\sqrt{\frac{y^2_1+\cdots+y^2_j}{\sum^j_{l=1} \delta^2_{l}}}\ge\sqrt{\frac{y^2_1}{\sum^j_{l=1} \delta^2_{l}}}\geq \frac{1}{\sqrt{j}}\sqrt{\frac{y^2_1}{ \delta^2_{1}}}$. Combining this  with (\ref{for-36}), we have
 \begin{eqnarray}
 \frac{||Proj(y)||}{\sqrt{\frac{1}{n}\sum^n_{i=1}|<p_{i}, \overrightarrow{v}>|^2}}\geq\frac{1}{\sqrt{j}}\frac{y_1}{ \delta_{1}}. \label{for-37}
 \end{eqnarray}
 
Since $y$ is the mean point of $T$ with $|T|=m$, 
and $T=-(\mathcal{S}\cap P_L)\cup(\mathcal{S}\cap P_R)$,  we have

\begin{eqnarray*}
y_1=\frac{\sum_{p_i\in P_L\cap \mathcal{S}}(-x^i_1)+\sum_{p_i\in P_R\cap \mathcal{S}}x^i_1}{m}=\frac{\sum_{p_i\in \mathcal{S}}|x^i_1|}{m}.
\end{eqnarray*}

Further, since $\{x^i_1\mid 1\leq i\leq n\}$ has average value zero, and its CV is bounded by $\omega$, by Lemma \ref{lem-bound}, we know that $Prob(y_1\ge  (1-\eta\sqrt{\frac{\omega^2-1}{m}})\frac{\delta_1}{\omega})\ge 1-\frac{1}{\eta^2}$. Thus, with (\ref{for-37}),  we have the following result for (a),

$$Prob(\frac{||Proj(y)||}{\sqrt{\frac{1}{n}\sum^n_{i=1}|<p_{i}, \overrightarrow{v}>|^2}}\ge \frac{1-\eta\sqrt{\frac{\omega^2-1}{m}}}{\sqrt{j}\omega})\ge 1-\frac{1}{\eta^2}.$$ 

For (b), following the same approach given in the proof of Claim (1), we can easily get a similar result as Claim (1) :
%in the proof for Theorem \ref{the-alg1}: 
$\frac{||y-Proj(y)||}{ \delta_{opt}}\leq 5$ with probability $(1-\frac{1}{m})^2$.

%We denote the orthogonal subspace of $\mathcal{F}_{opt}$ as $\mathcal{F}^{\perp}_{opt}$, and the projection of $P_L$ and $P_R$ on $\mathcal{F}^{\perp}_{opt}$ as $P^{\perp}_L$ and $P^{\perp}_R$ respectively. Using Lemma \ref{lem-symmetric}, we replace $Q_1$ and $Q_2$ by $P^{\perp}_L$ and $P^{\perp}_R$ respectively. It is easy to know that the mean point of $P^{\perp}_L \cup P^{\perp}_R$ is the origin in  $\mathcal{F}^{\perp}_{opt}$. If we denote the mean point of $-P^{\perp}_L \cup P^{\perp}_R$ as $s$, then from Lemma \ref{lem-symmetric}, we know that $||s||\leq \delta_{opt}$, and the variance of $-P^{\perp}_L \cup P^{\perp}_R$ no more than $\delta_{opt}$. Note that $T$ is the random sample from  $-P_{L}\cup P_{R}$ with size $m$, and $y$ is the mean point. If we denote the projection of $y$ on $\mathcal{F}^{\perp}_{opt}$ as $y^{\perp}$, from Lemma \ref{lem-dis}, $||y^{\perp}-s||^2\leq\frac{\eta^2}{m}\delta^2_{opt}$ with probability $1-\frac{1}{\eta^2}$. So $||y^{\perp}||\leq (1+\frac{\eta}{\sqrt{m}})\delta_{opt}$ with probability $1-\frac{1}{\eta^2}$. Note that $y^{\perp}=y-<y, \overrightarrow{v}> \overrightarrow{v}$, so \textbf{(2)} is proved.

Combining the above results for  (a) and (b), and setting the sample size $m=\frac{\omega^2-1}{\epsilon^2}$ and $\eta=\frac{1}{\sqrt{\epsilon}}$,  we have the following inequality, with probability $(1-\epsilon)(1-\frac{\epsilon^2}{\omega^2-1})^2$, 
$$\frac{||y-Proj(y)||}{||Proj(y)||}\leq \frac{5\delta_{opt}}{\frac{1-\sqrt{\epsilon}}{\sqrt{j}\omega}\sqrt{\frac{1}{n}\sum^n_{i=1}|<p_{i}, \overrightarrow{v}>|^2}}.$$
This means that $y$ induces a $\Delta$-rotation for $\mathcal{F}_{opt}$, where $\Delta=\frac{5\sqrt{j}\omega}{1-\sqrt{\epsilon}}\delta_{opt}$. By Lemma \ref{lem-delta}, we get the desired result, i.e., $\sum_{p\in P}||p, \mathcal{F}'||^2\leq (1+\frac{5\sqrt{j}\omega}{1-\sqrt{\epsilon}})^2\sum_{p\in P}||p, \mathcal{F}_{opt}||^2$.
 \qed
  \end{proof}
  
 By Theorem \ref{the-ralg1} and a similar idea with Theorem \ref{the-rpa}, we obtain the following theorem for regular single $j$-flat fitting problem.
\begin{theorem}
\label{the-rrpa}
Let $P$ be the %$\mathbb{R}^{d}$ 
point set of a regular  $j$-flat fitting problem in $\mathbb{R}^d$ with regular factor $\omega$. 
%Assume the CV of $P$ is bounded by $\omega$. 
Then if run Algorithm Recursive-Projection on $P$ with sample size $m=\frac{\omega^2-1}{(\epsilon/j)^2}$, with probability $(1-\epsilon)(1-\frac{\epsilon^2}{\omega^2-1})^2$, the output $\mathcal{T}$ contains a root-to-leaf path such that the $j$ points associated with the path determine a flat $\mathcal{F}$ satisfying inequality
$$\sum_{p\in P}||p, \mathcal{F}||^2\leq (1+\frac{5\sqrt{j}\omega}{1-\sqrt{\epsilon}})^{2j}\sum_{p\in P}||p, \mathcal{F}_{opt}||^2.$$
%Here $P'$ is a subset of $P$ with fraction at least $1-\gamma$, and $\mathcal{F}_{opt}$ is the optimal fitting flat for $P$ among all the $j$-dimensional flats passing $o$.
\end{theorem}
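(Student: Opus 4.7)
The plan is to follow the scaffolding of the proof of Theorem \ref{the-rpa}, invoking Theorem \ref{the-ralg1} in place of Theorem \ref{the-alg1} at each level of the recursion. The crucial simplification afforded by the regular setting is that Theorem \ref{the-ralg1} guarantees the induced $\Delta$-rotation applies to the entire point set, not merely a $(1-\gamma/j)$-fraction of it. Consequently I do not need to maintain a shrinking sequence of subsets $\{\tilde{P}_l\}$; instead I can take $\tilde{P}_l$ to be the full projection of $\tilde{P}_{l-1}$ onto $f_{v_{l-1}}$, and in the end $P_j = P$ automatically.

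First I would assume without loss of generality that $o$ is the origin, so every flat generated by the algorithm can be viewed as a subspace of $\mathbb{R}^d$. For any node $v$ at level $l$ of the output tree $\mathcal{T}$, projection of $P$ onto $f_v$ yields an implicit point set $\tilde{P}$, and $f_v \cap \mathcal{F}_{opt}$ is the corresponding implicit $j$-flat to be fitted inside $f_v$. Before invoking Theorem \ref{the-ralg1}, I would verify that the regularity of $P$ with respect to $\mathcal{F}_{opt}$ is inherited by each such $\tilde{P}$ with respect to $f_v \cap \mathcal{F}_{opt}$: for any unit direction $\overrightarrow{u}$ lying inside $f_v \cap \mathcal{F}_{opt}$ and any $p \in P$, the inner product $\langle \tilde{p}, \overrightarrow{u}\rangle$ equals $\langle p, \overrightarrow{u}\rangle$ since $\overrightarrow{u}$ already lies in the projection subspace. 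Hence the CV bound transfers with the same factor $\omega$, and the sample size $m=(\omega^2-1)/(\epsilon/j)^2$ suffices to apply Theorem \ref{the-ralg1} with $\epsilon/j$ in place of $\epsilon$.

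Next, invoking Theorem \ref{the-ralg1} at each $v_{l-1}$ along a carefully chosen root-to-leaf path $\{v_0, v_1, \ldots, v_j\}$, I obtain a child $v_l$ such that the rotation of $\tilde{\mathcal{F}}_{l-1} \cap f_{v_{l-1}}$ induced by $\overrightarrow{ot}_{v_l}$ is a $\Delta_l$-rotation on the whole projected set $\tilde{P}_{l-1}$, where $\Delta_l = \frac{5\sqrt{j}\omega}{1-\sqrt{\epsilon/j}}\,\delta_{\tilde{P}_{l-1},\, \tilde{\mathcal{F}}_{l-1} \cap f_{v_{l-1}}}$. I would then reuse verbatim Claims (\textbf{2}), (\textbf{3}) and (\textbf{4}) from the proof of Theorem \ref{the-rpa}: the dimensionality count for $\tilde{\mathcal{F}}_l$, the per-step squared-distance blow-up (which now follows from Lemma \ref{lem-delta} applied to the regular $\Delta_l$), and the preservation of squared distances when lifting from $f_{v_l}$ back to $\mathbb{R}^d$. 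Building the lifted sequences $\{P_l\}$ and $\{\mathcal{F}_l\}$ exactly as before and chaining the one-step inequalities yields
\[
\sum_{p \in P}\|p, \mathcal{F}_j\|^2 \;\le\; \left(1 + \frac{5\sqrt{j}\omega}{1-\sqrt{\epsilon/j}}\right)^{2j}\sum_{p \in P}\|p, \mathcal{F}_{opt}\|^2,
\]
where on the left we really sum over $P$ because no points are ever dropped. The bound stated in the theorem follows upon noting that $\sqrt{\epsilon/j}\le\sqrt{\epsilon}$ only enlarges the right-hand side.

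For the success probability, each of the $j$ applications of Theorem \ref{the-ralg1} succeeds with probability $(1-\epsilon/j)(1-(\epsilon/j)^2/(\omega^2-1))^2$; multiplying over the path and using $(1-x/j)^j \ge 1-x$ together with $(1-y)^{2j}\ge 1-2jy$ applied to $y=(\epsilon/j)^2/(\omega^2-1)$ lower-bounds the total by $(1-\epsilon)(1-\epsilon^2/(\omega^2-1))^2$, matching the statement. The main obstacle I anticipate is precisely the regularity-inheritance step: convincing oneself that the CV bound survives every projection encountered during the recursion. As sketched above, this reduces to the observation that orthogonal projection preserves inner products along directions that lie inside the projected subspace, so the argument should go through cleanly, and the remaining work is bookkeeping that tracks the proof of Theorem \ref{the-rpa} line for line.
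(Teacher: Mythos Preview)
Your proposal is correct and follows essentially the same approach as the paper's own proof: invoke Theorem \ref{the-ralg1} (with $\epsilon/j$ in place of $\epsilon$) at each level of the recursion, reuse the dimensionality and lifting claims from the proof of Theorem \ref{the-rpa}, chain the per-step blow-up $j$ times, and bound the success probability via $(1-\epsilon/j)^j\ge 1-\epsilon$. The one place you are more careful than the paper is the regularity-inheritance step---the paper simply invokes Theorem \ref{the-ralg1} on each projected set $\tilde{P}_v$ without verifying that $\tilde{P}_v$ is still regular with factor $\omega$, whereas you explicitly argue this via preservation of inner products along directions in $f_v$; that is a genuine (if small) gap in the paper's exposition that you close.
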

\begin{proof}

Without loss of generality, we assume that $o$ is the origin of $\mathbb{R}^d$. Since all  related flats in Algorithm Recursive-Projection pass through $o$, we can view every flat as a subspace in $\mathbb{R}^d$. 

From the algorithm, we know that for any node $v$ at level $l$ of $\mathcal{T}$, $1\leq l\leq j$, there is a corresponding implicit point set $\tilde{P}_v$, which is the projection of $P$ on $f_v$. There is also an implicit flat $f_v\cap\mathcal{F}_{opt}$ in $f_v$. By Theorem \ref{the-ralg1}, we know that there is one child of $v$, denoted as $v'$, such that the rotation of $f_v\cap\mathcal{F}_{opt}$ induced by $\overrightarrow{o t}_{v'}$ forms a $\Delta$-rotation with respect to  $\tilde{P}_v$, where $\Delta=\frac{5\sqrt{j}\omega}{1-\sqrt{\epsilon}}$. 
%\textbf{Note} that at this moment, we ignore the success probability, and will discuss it later. 
Thus, if we always select such children (i.e., satisfying the above  condition)  from root to leaf,  we get a path with nodes $\{v_0, v_1, \cdots, v_j\}$, where $v_0$ is the root and $v_{l+1}$ is the child of $v_l$. Correspondingly, a sequence of implicit point sets $\{\tilde{P}_0, \tilde{P}_1, \cdots, \tilde{P}_j\}$ and a sequence of flats $\{\tilde{\mathcal{F}}_0, \tilde{\mathcal{F}}_1, \cdots, \tilde{\mathcal{F}}_j\}$ also be  obtained, which have the following properties. 
%such that:

\begin{enumerate}
\item Initially, $\tilde{\mathcal{F}}_0=\mathcal{F}_{opt}$, and $\tilde{P}_0=P$.

\item For any $1\leq l\leq j$, $\tilde{\mathcal{F}}_l$ is the $\Delta$-rotation of $\tilde{\mathcal{F}}_{l-1}\cap f_{v_{l-1}}$ induced by $\overrightarrow{o t}_{v_{l}}$, and $\tilde{P}_l$ is the projection of $\tilde{P}_{l-1}$ on $f_{v_{l-1}}$ (see Figure \ref{fig-project})
 \end{enumerate}

%for an illustration. 
Note that since both  $\tilde{\mathcal{F}}_l$ and $\tilde{\mathcal{F}}_{l-1}\cap f_{v_{l-1}}$ locate on $f_{v_{l-1}}$, they are all perpendicular to $\overrightarrow{o t}_{v_{l-1}}$.

%  \begin{figure}[]
%%\begin{minipage}[t]{0.5\linewidth}
%  \centering
%  \includegraphics[height=1.5in]{project.pdf}
%   \caption{An illustration for the progress from $\tilde{\mathcal{F}}_{l-1}$ to $\tilde{\mathcal{F}}_l$.}
%  \label{fig-project}
%\end{figure}

The following claim reveals 
%Let us first confirm 
the dimensionality of each $\tilde{\mathcal{F}}_l$.
% by the following claim:

\begin{claim}[\textbf{5}]
For $1\leq l\leq j$, $\tilde{\mathcal{F}}_l$ is a $(j-l+1)$-dimensional subspace.
\end{claim}

\begin{proof}
We prove the claim by induction. For the base case (i.e., $l=1$), $\tilde{\mathcal{F}}_1$ has the same dimensionality as $\tilde{\mathcal{F}}_0\cap f_{v_0}=\mathcal{F}_{opt}$ (note $f_{v_0}=\mathbb{R}^d$). Hence, $\tilde{\mathcal{F}}_1$ is a $j$-dimensional subspace. Then we assume that $\tilde{\mathcal{F}}_w$ is a $(j-w+1)$-dimensional subspace for $w\leq l$. Now we consider the case of $l+1$. Since $\tilde{\mathcal{F}}_{l+1}$ is a rotation of $\tilde{\mathcal{F}}_{l}\cap f_{v_{l}}$,  they have the same dimensionality. Also, from the algorithm, we know that $\tilde{\mathcal{F}}_{l}\cap f_{v_{l}}$ is the subspace in $\tilde{\mathcal{F}}_{l}$ which is perpendicular to $Proj(\overrightarrow{o t}_{v_{l}})$, where $Proj(\overrightarrow{ot}_{v_{l}})$ is the projection of $\overrightarrow{ot}_{v_{l}}$ on $\tilde{\mathcal{F}}_{l}$. Thus, $\tilde{\mathcal{F}}_{l}\cap f_{v_{l}}$ is a $(j-l+1-1)$-dimensional subspace, which implies that $\tilde{\mathcal{F}}_{l+1}$ is also a $(j-l)$-dimensional subspace. Hence, Claim (\textbf{5}) is proved.
\qed
\end{proof}

By Lemma \ref{lem-delta}, we have the following claim.
\begin{claim}[\textbf{6}]
For any $1\leq l\leq j$, $\sum_{\tilde{p}\in\tilde{P}_l}||\tilde{p},\tilde{ \mathcal{F}}_l||^2\leq (1+\frac{5\sqrt{j}\omega}{1-\sqrt{\epsilon}})^2\sum_{\tilde{p}\in\tilde{P}_l}||\tilde{p}, \tilde{\mathcal{F}}_{l-1}\cap f_{v_{l-1}}||^2$.
\end{claim}

We construct another sequence of flats $\{\mathcal{F}_1,\cdots, \mathcal{F}_j \}$ as follows.

\begin{enumerate}
\item Initially, $\mathcal{F}_1=\tilde{\mathcal{F}}_1$.
\item For any $2\leq l\leq j$, $\mathcal{F}_l=span\{\tilde{\mathcal{F}}_l, \overrightarrow{ot}_{v_1}, \cdots, \overrightarrow{o t}_{v_{l-1}}\}$.
\end{enumerate}

From the above construction for $\mathcal{F}_l$, we have the following claim.

\begin{claim}[\textbf{7}]
For any point $\tilde{p}\in \tilde{P}_l$, let $p$ be the corresponding point when $\tilde{p}$ is mapped back to $\mathbb{R}^d$. Then, $||p, \mathcal{F}_l||=||\tilde{p}, \tilde{\mathcal{F}}_l||$. 
\end{claim}

From Claim (\textbf{5}), we know that each $\mathcal{F}_l$ is a $j$-dimensional subspace. From the algorithm, we know that $span\{\tilde{\mathcal{F}}_{l-1}\cap f_{v_{l-1}}, \overrightarrow{o t}_{v_{l-1}}\}=\tilde{\mathcal{F}}_{l-1}$ (see Fig. \ref{fig-project}), which implies $span\{\tilde{\mathcal{F}}_{l-1}\cap f_{v_{l-1}}, \overrightarrow{ot}_{v_1}, \cdots,$ $\overrightarrow{o t}_{v_{l-1}}\}=\mathcal{F}_{l-1}$. Then by Claim (\textbf{6}) and (\textbf{7}), we have the following inequality.
\begin{eqnarray}
\sum_{p\in P}||p, \mathcal{F}_l||^2\leq (1+\frac{5\sqrt{j}\omega}{1-\sqrt{\epsilon}})^2\sum_{p\in P}||p, \mathcal{F}_{l-1}||^2. \label{for-39}
\end{eqnarray}

%Meanwhile, by the definitions for each $P_l$, we have $P_j\subset P_{j-1}\subset \cdots \subset P_1\subset P_0=P$. Thus,
%
% $$\textbf{(10).}\sum_{p\in P_{l}}||p, \mathcal{F}_{l-1}||^2\leq\sum_{p\in P_{l-1}}||p, \mathcal{F}_{l-1}||^2.$$
% 
% 
% Combining \textbf{(9)} and \textbf{(10)}, we have: 
% $$\sum_{p\in P_l}||p, \mathcal{F}_l||^2\leq (1+5\sqrt{j}r)^2\sum_{p\in P_{l-1}}||p, \mathcal{F}_{l-1}||^2.$$

 Recursively using inequality (\ref{for-39}), we have $\sum_{p\in P}||p, \mathcal{F}_j||^2\leq (1+\frac{5\sqrt{j}\omega}{1-\sqrt{\epsilon}})^{2(j-1)}\sum_{p\in P}||p, \mathcal{F}_1||^2$. Meanwhile, we have $\sum_{p\in P}||p, \mathcal{F}_1||^2\leq (1+\frac{5\sqrt{j}\omega}{1-\sqrt{\epsilon}})^2\sum_{p\in P}||p, \mathcal{F}_{opt}||^2$. Thus, 
 $$\sum_{p\in P}||p, \mathcal{F}_j||^2\leq (1+\frac{5\sqrt{j}\omega}{1-\sqrt{\epsilon}})^{2j}\sum_{p\in P}||p, \mathcal{F}_{opt}||^2.$$
 
%Furthermore, from 
By Claim (\textbf{5}), we know that $\tilde{\mathcal{F}}_j$ is a $1$-dimension flat (i.e., the single line spanned by $\overrightarrow{o t}_{v_{j}}$). Hence, $\mathcal{F}_j=span\{\overrightarrow{ot}_{v_1}, \cdots, \overrightarrow{o t}_{v_{j}}\}$. Thus if setting $\mathcal{F}=\mathcal{F}_j$,  we have 
$$\sum_{p\in P}||p, \mathcal{F}||^2\leq (1+\frac{5\sqrt{j}\omega}{1-\sqrt{\epsilon}})^{2j}\sum_{p\in P}||p, \mathcal{F}_{opt}||^2.$$
 
\noindent\textbf{Success probability:} Each time we use Theorem \ref{the-ralg1}, the success probability is $(1-\frac{\epsilon}{j})(1-\frac{(\epsilon/j)^2}{\omega^2-1})^2$ (we replace $\epsilon$ by $\frac{\epsilon}{j}$ tp increase the sample size from $\frac{\omega^2-1}{\epsilon^2}$ to $\frac{\omega^2-1}{(\epsilon/j)^2}$). Thus, the total success probability is $((1-\frac{\epsilon}{j})(1-\frac{(\epsilon/j)^2}{\omega^2-1})^2)^{j}\geq (1-\epsilon)(1-\frac{\epsilon^2}{\omega^2-1})^2$.
\qed
\end{proof}

With the above theorem, we can easily have the following theorem (using the approach similar to Theorem \ref{the-ptas}).   
%Correspondingly, we get the following theorem.
\begin{theorem}
\label{the-rptas}
Let $P$ be the point set of a regular $(k,j)$-projective clustering problem in $\mathbb{R}^d$ with regular factor $\omega$. If each optimal cluster has at least   $\alpha|P|$ points from $P$ for some constant  $0<\alpha \le  \frac{1}{k}$, Algorithm Projective-Clustering  yields a PTAS with constant probability, where the running time of the PTAS is $O(2^{poly(\frac{kj\omega}{\epsilon\alpha})}nd)$.

\end{theorem}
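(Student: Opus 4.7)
The plan is to lift Theorem \ref{the-rrpa} (the single-flat regular fitting result) to the $k$-cluster setting in essentially the same way that Theorem \ref{the-ptas} lifted Theorem \ref{the-rpa}, but with two simplifications coming from regularity. Let $\{C_1,\ldots,C_k\}$ be the clusters of an optimal solution with fitting flats $\{\mathcal{F}_1,\ldots,\mathcal{F}_k\}$; since every $|C_i|\geq \alpha|P|$, there is no ``small'' cluster to discard, and each $C_i$ has regular factor bounded by $\omega$. Consequently, a $\Delta$-rotation produced by Algorithm Symmetric-Sampling on a large enough subsample of each $C_i$ is valid for the whole cluster $C_i$, not just a $(1-\gamma/j)$-fraction, so the resulting $k$-flat solution will fit all of $P$ rather than a subset $P'$.

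First I would fix a per-cluster subsample size $m=\frac{\omega^2-1}{(\epsilon/(jk))^2}$ coming from Theorem \ref{the-rrpa}, where the extra factor $1/k$ is used to boost the per-cluster success probability so that a union bound over the $k$ clusters remains a constant. By Lemma \ref{lem-select}, drawing $r=O(\tfrac{m}{\alpha}\log(km))$ points uniformly from $P$ guarantees, with constant probability, that the sample contains at least $m$ points from every $C_i$. For each $i$, I would apply Lemma \ref{lem-dis} to the $C_i$-portion of the sample to obtain an approximate mean $\bar{o}_i$ of $o_i$ such that the flat through $\bar{o}_i$ parallel to $\mathcal{F}_i$ still fits $C_i$ well, via Lemma \ref{lem-translate}. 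Then, independently for each $i$, run Algorithm Recursive-Projection centered at $\bar{o}_i$ with sample size $m$; by Theorem \ref{the-rrpa}, with the claimed probability, at least one root-to-leaf path of the output tree $\mathcal{T}_i$ determines a $j$-flat $\mathcal{F}'_i$ with $\sum_{p\in C_i}||p,\mathcal{F}'_i||^2\leq \bigl(1+\tfrac{5\sqrt{j}\omega}{1-\sqrt{\epsilon/(jk)}}\bigr)^{2j}\sum_{p\in C_i}||p,\mathcal{F}_i||^2$. Enumerating all combinations of one path per tree and evaluating the objective on $P$ yields a candidate value $\mathcal{L}$ that is within the same factor of $Opt$.

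To sharpen the ratio from $(1+O(\sqrt{j}\omega))^{2j}$ down to $(1+\epsilon)$, I would reuse the grid-refinement step (Step 5 of Algorithm Projective-Clustering): around each candidate point $t_v$ in the tree, build a ball of radius proportional to $\sqrt{\mathcal{L}}$ and populate it with a grid whose spacing is chosen so that some grid point lies within $\tfrac{\epsilon}{4j}$-rotation distance of $Proj(t_v)$ on $\mathcal{F}_i$. The exact calibration of the radius and spacing is the same computation as in Section \ref{sec-parameter}, with the generic $r$ there replaced by $\sqrt{\omega^2-1}/(\epsilon/(jk))$. Chaining $j$ successive $\tfrac{\epsilon}{4j}$-rotations along a root-to-leaf path gives the desired $(1+\tfrac{\epsilon}{4j})^{2j}\leq 1+\epsilon$ factor. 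Preceding everything with the dimension reduction of \cite{DV07} brings the ambient dimension down to $(kj/\epsilon)^{O(1)}$, which keeps the grid enumeration of size $2^{\mathrm{poly}(kj\omega/(\epsilon\alpha))}$ and gives the stated running time $O(2^{\mathrm{poly}(kj\omega/(\epsilon\alpha))}nd)$.

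The main obstacle is the bookkeeping: tracking the per-cluster sample size, the success probability after a union bound over $k$ clusters, and the error from the estimated mean $\bar{o}_i$ simultaneously, while keeping the overall probability constant and the overall running time in the stated form. A subtler point that must be checked is that the regularity of each $C_i$ is inherited by the implicit projected point sets $\tilde{P}_l$ arising at successive levels of Algorithm Recursive-Projection, so that Theorem \ref{the-ralg1} can legitimately be invoked at every level; this should follow because orthogonal projection preserves the bound on the coefficient of variation along any direction in the image subspace, but it is the one place where the argument genuinely uses regularity rather than merely quoting Theorem \ref{the-rrpa} as a black box.
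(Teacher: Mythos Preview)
Your proposal is correct and follows essentially the same approach as the paper, which in fact gives no detailed argument at all: the paper simply states that Theorem~\ref{the-rptas} follows from Theorem~\ref{the-rrpa} ``using the approach similar to Theorem~\ref{the-ptas}''. Your write-up is therefore more explicit than the paper's own treatment, and the concern you flag about regularity surviving orthogonal projection is indeed the only nontrivial check; it goes through because for any direction $\overrightarrow{v}$ lying in the subspace $f_v$, the inner products $\langle \overrightarrow{o\tilde{p}},\overrightarrow{v}\rangle$ coincide with $\langle \overrightarrow{op},\overrightarrow{v}\rangle$, so the coefficient of variation bound is inherited verbatim.
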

\section{Extension to $L_\tau$ Sense Projective Clustering}

We first introduce the $L_\tau$ sense $\Delta$-rotation.

\begin{definition}[$L_\tau$ Sense $\Delta$-Rotation]
\label{def-ldelta}
Let $P$ be a points set, $\mathcal{F}$ be a $j$-dimensional flat, and $u-o$ be a vector in $\mathbb{R}^d$ with $o\in\mathcal{F}$. Further, let $h^\tau=\frac{1}{|P|}\sum_{p\in P}|<p-o, \frac{Proj(u)-o}{||Proj(u)-o||}>|^\tau$, and $\mathcal{F}'$ be a rotation of $\mathcal{F}$ induced by $u$ with angle $\theta$,  where $Proj(u)$ is the projection of $u$ on $\mathcal{F}$.
Then, $\mathcal{F}'$ is a $\Delta$-rotation of $\mathcal{F}$ (with respect to $P$) if $\theta \leq \arctan\frac{\Delta}{h^{\tau}}$.
\end{definition}

The following lemma shows t how the value of $\frac{1}{|P|}\sum_{p\in P}||p, \mathcal{F}||^\tau$ changes after a $L_{\tau}$ sense $\Delta$-rotation.

\begin{lemma}
\label{lem-ldelta}
Let $P$ be a point set, $\mathcal{F}$ be a $j$-dimensional flat, and $u$ be a point in $\mathbb{R}^d$. If $\mathcal{F}'$ is a $L_{\tau}$ sense $\Delta$-rotation (with respect to $P$) of $\mathcal{F}$ induced by the vector $u-o$ for some point $o\in\mathcal{F}$,  then for any integer $1\leq \tau<\infty$, 
$$(\frac{1}{|P|}\sum_{p\in P}||p, \mathcal{F}'||^\tau)^{1/\tau} \leq(\frac{1}{|P|}\sum_{p\in P}||p, \mathcal{F}||^\tau)^{1/\tau}+\Delta.$$
\end{lemma}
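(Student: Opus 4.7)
The plan is to mirror the structure of the proof of Lemma \ref{lem-delta} and then, instead of hand-squaring as in the $L_2$ case, invoke Minkowski's inequality to aggregate the per-point estimate. Throughout, I will use the same notation as in Definition \ref{def-ldelta}: for each $p\in P$, let $u_p=\bigl|\langle p-o,\tfrac{Proj(u)-o}{\|Proj(u)-o\|}\rangle\bigr|$ and let $Proj(p)$ denote the orthogonal projection of $p$ onto $\mathcal{F}$, so that $h^\tau=\tfrac{1}{|P|}\sum_{p\in P} u_p^\tau$.

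First, I would repeat the geometric step from the proof of Lemma \ref{lem-delta}. By the triangle inequality along $p\to Proj(p)\to \mathcal{F}'$, we have
\begin{equation*}
\|p,\mathcal{F}'\|\le \|p-Proj(p)\|+\|Proj(p),\mathcal{F}'\|=\|p,\mathcal{F}\|+\|Proj(p),\mathcal{F}'\|.
\end{equation*}
Since the rotation angle from $\mathcal{F}$ to $\mathcal{F}'$ is $\theta\le\arctan\tfrac{\Delta}{h}$, the projected displacement satisfies $\|Proj(p),\mathcal{F}'\|=u_p\sin\theta\le u_p\tan\theta\le\tfrac{\Delta}{h}u_p$. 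Combining gives the key pointwise bound
\begin{equation*}
\|p,\mathcal{F}'\|\le \|p,\mathcal{F}\|+\tfrac{\Delta}{h}u_p\qquad\text{for every }p\in P.
\end{equation*}

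Next, I would aggregate across $P$ using Minkowski's inequality in $\ell^\tau$, which is valid for every integer $\tau\ge 1$ (this is the clean $L_\tau$ replacement of the explicit squaring carried out in Lemma \ref{lem-delta}). Writing $a_p=\|p,\mathcal{F}\|$ and $b_p=\tfrac{\Delta}{h}u_p$, the pointwise bound yields
\begin{equation*}
\Bigl(\tfrac{1}{|P|}\sum_{p\in P}\|p,\mathcal{F}'\|^\tau\Bigr)^{1/\tau}\le\Bigl(\tfrac{1}{|P|}\sum_{p\in P}(a_p+b_p)^\tau\Bigr)^{1/\tau}\le\Bigl(\tfrac{1}{|P|}\sum_{p\in P}a_p^\tau\Bigr)^{1/\tau}+\Bigl(\tfrac{1}{|P|}\sum_{p\in P}b_p^\tau\Bigr)^{1/\tau}.
\end{equation*}

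Finally, I would evaluate the second term by using the definition of $h$ directly: $\bigl(\tfrac{1}{|P|}\sum_{p\in P}b_p^\tau\bigr)^{1/\tau}=\tfrac{\Delta}{h}\bigl(\tfrac{1}{|P|}\sum_{p\in P}u_p^\tau\bigr)^{1/\tau}=\tfrac{\Delta}{h}\cdot h=\Delta$. Substituting back gives exactly the claimed inequality $\bigl(\tfrac{1}{|P|}\sum_{p\in P}\|p,\mathcal{F}'\|^\tau\bigr)^{1/\tau}\le\bigl(\tfrac{1}{|P|}\sum_{p\in P}\|p,\mathcal{F}\|^\tau\bigr)^{1/\tau}+\Delta$. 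I do not anticipate a real obstacle here, since once the pointwise estimate is in hand, Minkowski's inequality does all the remaining work; the only subtlety worth flagging is that the $L_2$ version in Lemma \ref{lem-delta} avoids invoking Minkowski by cross-term manipulation using $2ab\le a^2+b^2$, and for general integer $\tau$ the cleanest replacement really is Minkowski, which is why the hypothesis $1\le\tau<\infty$ suffices.
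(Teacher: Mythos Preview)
Your proof is correct, and the pointwise geometric estimate $\|p,\mathcal{F}'\|\le\|p,\mathcal{F}\|+\tfrac{\Delta}{h}u_p$ is identical to the paper's. The difference lies in how the per-point bound is aggregated. The paper does not invoke Minkowski; instead it first proves an auxiliary scalar inequality (Lemma~\ref{lem-inequal}), namely $(x+\alpha y)^\tau\le(1+\alpha)^{\tau-1}x^\tau+\alpha(1+\alpha)^{\tau-1}y^\tau$ for integer $\tau\ge 1$, established by induction on $\tau$. Applying this with $x=\|p,\mathcal{F}\|$, $y=\tfrac{\delta}{h}u_p$, $\alpha=\tfrac{\Delta}{\delta}$ (where $\delta^\tau=\tfrac{1}{|P|}\sum_p\|p,\mathcal{F}\|^\tau$) and summing over $p$ collapses the right-hand side to $(1+\tfrac{\Delta}{\delta})^\tau\delta^\tau=(\delta+\Delta)^\tau$, exactly the desired bound. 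Your route via Minkowski is shorter, more standard, and in fact works for all real $\tau\ge 1$ rather than only integers; the paper's approach is more self-contained (no appeal to a named inequality) and keeps the argument structurally parallel to the $\tau=2$ case in Lemma~\ref{lem-delta}, where the cross term was handled by $2ab\le a^2+b^2$. Either way the conclusion is the same.
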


Before proving Lemma \ref{lem-ldelta}, we first  introduce the following lemma.

\begin{lemma}
\label{lem-inequal}
For any integer $\tau\ge 1$, and positive numbers $x,y, \alpha$, 
 $(x+\alpha y)^\tau\leq (1+\alpha)^{\tau-1}x^\tau+\alpha(1+\alpha)^{\tau-1}y^{\tau}.$
\end{lemma}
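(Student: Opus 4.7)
The plan is to recognize that the claimed inequality is equivalent to
\[
(x+\alpha y)^{\tau}\;\le\;(1+\alpha)^{\tau-1}\bigl(x^{\tau}+\alpha y^{\tau}\bigr),
\]
after factoring $(1+\alpha)^{\tau-1}$ on the right-hand side, and then to deduce it directly from the convexity of $t\mapsto t^{\tau}$ on $[0,\infty)$ for $\tau\ge 1$.

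Concretely, I would introduce the probability weights $w_{1}=\tfrac{1}{1+\alpha}$ and $w_{2}=\tfrac{\alpha}{1+\alpha}$ (they are positive and sum to $1$), and the values $a_{1}=(1+\alpha)x$, $a_{2}=(1+\alpha)y$. Then a quick check shows $w_{1}a_{1}+w_{2}a_{2}=x+\alpha y$, while $w_{1}a_{1}^{\tau}+w_{2}a_{2}^{\tau}=(1+\alpha)^{\tau-1}(x^{\tau}+\alpha y^{\tau})$. Applying Jensen's inequality to the convex function $t\mapsto t^{\tau}$ (valid for every integer $\tau\ge 1$, and in fact for every real $\tau\ge 1$) gives
\[
(x+\alpha y)^{\tau}=(w_{1}a_{1}+w_{2}a_{2})^{\tau}\le w_{1}a_{1}^{\tau}+w_{2}a_{2}^{\tau}=(1+\alpha)^{\tau-1}\bigl(x^{\tau}+\alpha y^{\tau}\bigr),
\]
which is exactly the statement of the lemma.

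As a backup in case the paper prefers a completely elementary argument, I would give an induction on $\tau$. The base case $\tau=1$ is immediate. For the inductive step, I would multiply the inductive hypothesis by $(x+\alpha y)$, expand, and reduce the step to the rearrangement inequality $xy^{\tau}+x^{\tau}y\le x^{\tau+1}+y^{\tau+1}$, which is equivalent to $(x-y)(x^{\tau}-y^{\tau})\ge 0$ — true for positive $x,y$ because $t\mapsto t^{\tau}$ is monotone on $[0,\infty)$ for $\tau\ge 1$.

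There is no real obstacle here; the only subtlety is spotting the right weights and arguments so that Jensen's inequality produces the constants $(1+\alpha)^{\tau-1}$ and $\alpha(1+\alpha)^{\tau-1}$ simultaneously. Once the reformulation $(x+\alpha y)^{\tau}\le (1+\alpha)^{\tau-1}(x^{\tau}+\alpha y^{\tau})$ is written down, both the convexity proof and the inductive proof are short, so I would just present the convexity/Jensen version for brevity.
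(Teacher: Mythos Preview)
Your proposal is correct. Your primary argument via Jensen's inequality is a genuinely different route from the paper's, while your backup argument is exactly the paper's proof.

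The paper proceeds by induction on $\tau$: the base case $\tau=1$ is an equality, and for the inductive step one multiplies the hypothesis $(x+\alpha y)^{\tau_0}\le (1+\alpha)^{\tau_0-1}(x^{\tau_0}+\alpha y^{\tau_0})$ by $(x+\alpha y)$, expands, and then replaces the cross terms using $x^{\tau_0}y+xy^{\tau_0}\le x^{\tau_0+1}+y^{\tau_0+1}$, which is equivalent to $(x-y)(x^{\tau_0}-y^{\tau_0})\ge 0$. This is precisely your ``backup'' sketch.

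Your Jensen argument is shorter and conceptually cleaner: after dividing both sides by $(1+\alpha)^{\tau}$ the inequality reads $\bigl(\tfrac{1}{1+\alpha}x+\tfrac{\alpha}{1+\alpha}y\bigr)^{\tau}\le \tfrac{1}{1+\alpha}x^{\tau}+\tfrac{\alpha}{1+\alpha}y^{\tau}$, which is immediate from convexity of $t\mapsto t^{\tau}$ and in fact holds for all real $\tau\ge 1$, not just integers. The paper's inductive proof is more self-contained (no appeal to convexity or Jensen) but is restricted to integer $\tau$ and is slightly longer. Either proof is perfectly adequate here.
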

\begin{proof}
We prove this lemma by mathematical induction on $\tau$.\\

\noindent\textbf{Base case:} For $\tau=1$, it is easy to see that $(x+\alpha y)^\tau=(1+\alpha)^{0}x^1+\alpha(1+\alpha)^{0}y^{1}=(1+\alpha)^{\tau-1}x^\tau+\alpha(1+\alpha)^{\tau-1}y^{\tau}$. Thus, base case holds.\\

\noindent\textbf{Induction step:} Assume that the inequality holds for $\tau\le \tau_0$ for some $\tau_{0} \ge 1$ (i.e., Induction hypothesis). Now consider the case of $\tau=\tau_0+1$. By the induction hypothesis, we have
\begin{eqnarray}
(x+\alpha y)^{\tau_0+1}\leq (x+\alpha y)((1+\alpha)^{\tau_0-1}x^{\tau_0}+\alpha(1+\alpha)^{\tau_0-1}y^{\tau_0}) \nonumber\\
=(1+\alpha)^{\tau_0-1}x^{\tau_0+1}+\alpha(1+\alpha)^{\tau_0-1}(x^{\tau_0}y+xy^{\tau_0})+\alpha^2(1+\alpha)^{\tau_0-1}y^{\tau_0+1}. \label{for-47}
\end{eqnarray}

Since both $x$ and $y$ are positive, we have $(x-y)(x^{\tau_0}-y^{\tau_0})\geq 0$.
% (no matter $x\ge y$ or $x<y$). 
Also, it is easy to know that
$$(x-y)(x^{\tau_0}-y^{\tau_0})\geq 0\Longleftrightarrow x^{\tau_0+1}+y^{\tau_0+1}\geq x^{\tau_0}y+xy^{\tau_0}.$$
Thus, if replacing $x^{\tau_0}y+xy^{\tau_0}$ by $x^{\tau_0+1}+y^{\tau_0+1}$ in (\ref{for-47}), we have
$$(x+\alpha y)^{\tau_0+1}\leq(1+\alpha)^{\tau_0-1}x^{\tau_0+1}+\alpha(1+\alpha)^{\tau_0-1}(x^{\tau_0+1}+y^{\tau_0+1})+\alpha^2(1+\alpha)^{\tau_0-1}y^{\tau_0+1}$$
$$=(1+\alpha)^{\tau_0}x^{\tau_0+1}+\alpha(1+\alpha)^{\tau_0}y^{\tau_0+1}.$$
Hence, the inequality holds for $\tau=\tau_0+1$.
\qed
\end{proof}

Now we prove Lemma \ref{lem-ldelta}.
\begin{proof}[\textbf{of Lemma \ref{lem-ldelta}}]
We use the same notations as in Definition \ref{def-ldelta}. For any $p\in P$, let $Proj( p)$ be its projection on $\mathcal{F}$ and $u_{p}=|<p-o, \frac{Proj(u)-o}{||Proj(u)-o||}>|$. Then we have

$$||p, \mathcal{F}'||\leq ||p-Proj(p)||+||Proj(p), \mathcal{F}'||=||p, \mathcal{F}||+||Proj(p), \mathcal{F}'||.$$

Since the rotation angle from $\mathcal{F}$ to $\mathcal{F}'$ is $\theta\leq\arctan\frac{\Delta}{h^{\tau}}$, we have $||Proj(p), \mathcal{F}'||=u_p \sin\theta\leq u_p\tan \theta\leq  \frac{\Delta}{h^{\tau}}u_p$. Let $\delta=(\frac{1}{|P|}\sum_{p\in P}||p, \mathcal{F}||^\tau)^{1/\tau}$. Then, we have 
$$||p, \mathcal{F}'||^\tau\leq (||p, \mathcal{F}||+u_p \sin\theta)^\tau\leq(||p, \mathcal{F}||+\frac{\Delta}{\delta}\frac{\delta}{h}u_p)^\tau.$$
%$$\leq \sum^{\tau}_{i=0}{\tau\choose{i}}(||p, \mathcal{F}||)^{\tau-i}( \frac{\Delta}{h}u_p)^i$$
%$$\leq ||p, \mathcal{F}||^2+\frac{\Delta}{\delta} (||p, \mathcal{F}||^2+(\frac{\delta}{h}u_p)^2)+(\frac{\Delta}{\delta})^2( \frac{\delta}{h}u_p)^2$$
%$$=(1+\frac{\Delta}{\delta})||p, \mathcal{F}||^2+(\frac{\Delta}{\delta}+(\frac{\Delta}{\delta})^2)(\frac{\delta}{h}u_p)^2.$$
Using Lemma \ref{lem-inequal} with $x=||p, \mathcal{F}||$, $y=\frac{\delta}{h}u_p$, and $\alpha=\frac{\Delta}{\delta}$, we have 
\begin{eqnarray}
||p, \mathcal{F}'||^\tau\leq(1+\frac{\Delta}{\delta})^{\tau-1}||p, \mathcal{F}||^\tau+\frac{\Delta}{\delta}(1+\frac{\Delta}{\delta})^{\tau-1}(\frac{\delta}{h}u_p)^{\tau}. \label{for-50}
\end{eqnarray}
Summing both sides of  (\ref{for-50}) over $p$,  we have 
\begin{eqnarray*}
\sum_{p\in P}||p, \mathcal{F}'||^\tau &\leq& (1+\frac{\Delta}{\delta})^{\tau-1}\sum_{p\in P}||p, \mathcal{F}||^\tau+\frac{\Delta}{\delta}(1+\frac{\Delta}{\delta})^{\tau-1}\sum_{p\in P}(\frac{\delta}{h}u_p)^\tau\\
&=&(1+\frac{\Delta}{\delta})^{\tau-1}\sum_{p\in P}||p, \mathcal{F}||^\tau+\frac{\Delta}{\delta}(1+\frac{\Delta}{\delta})^{\tau-1}(\frac{\delta}{h})^\tau\sum_{p\in P}(u_p)^\tau.
\end{eqnarray*}
Since $h^\tau=\frac{1}{|P|}\sum_{p\in P}(u_p)^\tau$ and $\delta^\tau=\frac{1}{|P|}\sum_{p\in P}||p, \mathcal{F}||^\tau$,  the above inequality becomes
\begin{eqnarray*}
\frac{1}{|P|}\sum_{p\in P}||p, \mathcal{F}'||^\tau&\leq&(1+\frac{\Delta}{\delta})^{\tau-1}\frac{1}{|P|}\sum_{p\in P}||p, \mathcal{F}||^\tau+\frac{\Delta}{\delta}(1+\frac{\Delta}{\delta})^{\tau-1}\frac{1}{|P|}\sum_{p\in P}||p, \mathcal{F}||^\tau\\
&=&(1+\frac{\Delta}{\delta})^\tau\frac{1}{|P|}\sum_{p\in P}||p, \mathcal{F}||^\tau=(\delta+\Delta)^\tau.
\end{eqnarray*}
Thus the lemma is true.
\qed
\end{proof}

Using Lemma \ref{lem-ldelta} and a similar approach for the $L_2$ case, we have the following theorem. Since the idea and proofs are almost the same, we omit them from the paper.
\begin{theorem}
\label{the-lptas}
Let $P$ be the point set of an $L_\tau$ sense $(k,j)$-projective clustering problem in $\mathbb{R}^d$ for integer $1\leq \tau<\infty$. Let $Opt$ be the optimal objective value. With constant probability and in $O(2^{poly(\frac{kj}{\epsilon\gamma})}nd)$ time,  Algorithm Projective-Clustering outputs an approximation solution $\{\mathcal{F}_1, \cdots,$ $\mathcal{F}_k\}$ such that each $\mathcal{F}_l$ is a $j$-flat, and $\frac{1}{|P'|}\sum_{p\in P'}\min_{1\leq l\leq k}||p, \mathcal{F}_l||^\tau\leq (1+\epsilon) Opt$,
where $P'$ is a subset of $P$ with   at least $(1-\gamma)|P|$ points. 
%The running time is 
\end{theorem}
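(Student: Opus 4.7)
The plan is to retrace the proof of Theorem \ref{the-ptas} step by step, replacing squared-distance arithmetic by its $L_\tau$ counterpart wherever it appears, with Lemma \ref{lem-ldelta} playing the role of Lemma \ref{lem-delta}. The purely geometric machinery, i.e.\ the Hyperbox Lemma \ref{lem-cuboid} and the Slab Partition Lemma \ref{lem-slab}, does not depend on the objective function and carries over verbatim.

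First, I would lift Theorem \ref{the-alg1}, the core symmetric-sampling guarantee, to the $L_\tau$ setting. The slab-partition construction, which discards a $\gamma/j^2$-fraction of points in each successive coordinate direction and applies Lemma \ref{lem-select} to guarantee enough sample points in each slab, is oblivious to the objective function and thus unchanged; the existence of a point $\rho_{l_0}\in\overline{\mathcal{S}}$ whose associated slab contains $\Pi_j$ after amplification by $\sqrt{j}r$ still holds. The only step that is genuinely $\tau$-sensitive is Claim (1), which bounds $\|\rho_{l_0}-Proj(\rho_{l_0})\|$ by $O(\delta_{P,\mathcal{F}})$; here $\delta_{P,\mathcal{F}}$ must be replaced by $\delta^{(\tau)}_{P,\mathcal{F}}:=(\tfrac{1}{|P|}\sum_{p\in P}\|p,\mathcal{F}\|^\tau)^{1/\tau}$. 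For $\tau\ge 2$ one can reduce to the $L_2$ case via the finite-sequence Jensen inequality $(\tfrac{1}{n}\sum y_i^2)^{1/2}\le(\tfrac{1}{n}\sum|y_i|^\tau)^{1/\tau}$ and then invoke Lemma \ref{lem-dis}; for $\tau=1$ a direct first-moment argument via Markov's inequality on the sample mean suffices. Combining the resulting bound with Lemma \ref{lem-ldelta} yields, with probability $(1-\epsilon)^4$, the $L_\tau$ analog of Theorem \ref{the-alg1}.

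Second, the Recursive-Projection analysis of Theorem \ref{the-rpa} is structural and uses only the single-step rotation bound together with the dimensionality bookkeeping of Claim~(\textbf{2}). Substituting the $L_\tau$ single-step inequality and iterating $j$ times along one root-to-leaf path gives
\[
\Bigl(\tfrac{1}{|P'|}\sum_{p\in P'}\|p,\mathcal{F}\|^\tau\Bigr)^{1/\tau}\le (1+5\sqrt{j}r)^j\Bigl(\tfrac{1}{|P|}\sum_{p\in P}\|p,\mathcal{F}_{opt}\|^\tau\Bigr)^{1/\tau}
\]
with $|P'|\ge(1-\gamma)|P|$. The projective-clustering extension (run Recursive-Projection $k$ times and enumerate combinations of root-to-leaf flats) is identical to the $L_2$ case, and the dimension-reduction preprocessing of \cite{DV07} can be replaced by the $L_\tau$ subspace-approximation dimension reduction of \cite{SV07} so as to preserve the linear dependence on $nd$. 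The grid-refinement stage then replaces the per-level factor $5\sqrt{j}r$ by $\tfrac{\epsilon}{4j}$, reducing the approximation ratio from $(1+5\sqrt{j}r)^{2j}$ to $(1+\tfrac{\epsilon}{4j})^{2j}\le 1+\epsilon$. The radius $r_{\mathcal{B}}$ and the grid density $\tfrac{\epsilon_0}{\sqrt{d'}}r_{\mathcal{B}}$ are chosen as in Appendix \ref{sec-parameter}, but the chain of inequalities $1\le\sqrt{\mathcal{L}/Opt}\le(1+5\sqrt{j}r)^j$ is now replaced by $1\le(\mathcal{L}/Opt)^{1/\tau}\le(1+5\sqrt{j}r)^j$, which suffices to ensure that some grid point inside $\mathcal{B}$ induces the desired $\tfrac{\epsilon}{4j}$-rotation at each level. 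All resulting complexity factors are still polynomial in $kj/(\epsilon\gamma)$, so the running time remains $O(2^{\mathrm{poly}(kj/(\epsilon\gamma))}nd)$.

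The hard part is the $L_\tau$ version of Claim (1): for $\tau=1$ one cannot directly invoke Lemma \ref{lem-dis} (which is intrinsically an $L_2$ variance bound), and for large $\tau$ one has to verify that the $L_2$-to-$L_\tau$ slack used to reduce to Lemma \ref{lem-dis} does not inflate the required sample size beyond constants. Both cases, however, are handled by combining Markov's inequality with the $L_\tau$ triangle inequality applied to the same decomposition $\rho_{l_0}=\alpha_1 s_1+\alpha_2 s_2$ used in the original Claim (1), and the extra factors are absorbed into the $(\omega,\tau)$-dependent constants in the exponent of the running time.
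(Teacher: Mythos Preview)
Your proposal is correct and follows exactly the approach the paper itself takes: the paper's ``proof'' of Theorem~\ref{the-lptas} consists solely of the sentence ``Using Lemma~\ref{lem-ldelta} and a similar approach for the $L_2$ case \ldots\ Since the idea and proofs are almost the same, we omit them,'' so your plan of retracing Theorems~\ref{the-alg1}, \ref{the-rpa}, and \ref{the-ptas} with Lemma~\ref{lem-ldelta} in place of Lemma~\ref{lem-delta} is precisely what the authors intend. In fact you supply considerably more detail than the paper does---in particular your identification of Claim~(1) as the only genuinely $\tau$-sensitive step, and your observation that the dimension-reduction preprocessing must be swapped from \cite{DV07} to an $L_\tau$ analogue such as \cite{SV07}, are points the paper passes over in silence; one small slip is the appearance of the regular factor $\omega$ in your last sentence, which has no place in the general (non-regular) theorem.
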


Furthermore, we also have a similar result for $L_\tau$ sense regular projective clustering.  For the same reason, we  omit the details for this case.
%proof here.

\begin{definition}[ $L_\tau$ Sense Coefficient of Variation (CV) ]
\label{def-lcv}
Let $x$ be a random variable, and $\mu=E[x]$. The coefficient of variation of $x$ is denoted as $\frac{(E[|x-\mu|^\tau])^{\frac{1}{\tau}}}{E[|x-\mu|]}$.
%Let $\mathcal{X}=\{x_{1}, \cdots, x_{n}\}$ be a set of $n$ nonnegative numbers. The coefficient of variation of $\mathcal{X}$ is the ratio of the standard variance $\delta$ to the the mean $\mu$: $\frac{\delta}{\mu}$.
\end{definition}

\begin{lemma}
\label{lem-lbound}
Let $\mathcal{X}=\{x_{1}, \cdots, x_{n}\}$ be a set of $n$ numbers with coefficient of variation $\omega$, and $S=\{x_{i_{1}}, \cdots, x_{i_{m}}\}$ be a random sample of $\mathcal{X}$. 
Also, let $\mu=\frac{1}{n}\sum^n_{i=1}x_i$ and $h^\tau=\frac{1}{n}\sum^n_{i=1}(x_i-\mu)^\tau$. Then, for any
% . If we denote  then for any 
positive constant $\eta$ and integer $1\leq \tau<\infty$, 
$$Prob(\frac{\sum^{m}_{l=1}|x_{i_l}-\mu|}{m}\ge (1-\eta\sqrt{\frac{\omega^2-1}{m}})\frac{h}{\omega})\ge 1-\frac{1}{\eta^2}.$$

%for any $t\geq r$, $Prob(\frac{\sum_{1\leq j\leq m}x_{i_{j}}}{m}\geq \frac{\delta}{t})\geq 1-\frac{R^2-1}{m}(\frac{t}{t-R})^2$.
\end{lemma}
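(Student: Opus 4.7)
My plan is to adapt the proof of Lemma \ref{lem-bound} essentially verbatim, replacing the single identity that relied on $\tau=2$ with a power-mean inequality. As before, set $\tilde{\mu}=\frac{1}{n}\sum_{i=1}^n|x_i-\mu|$; Definition \ref{def-lcv} immediately gives $\omega=h/\tilde{\mu}$, hence $\tilde{\mu}=h/\omega$. Since $S$ is a uniform random sample of size $m$ from $\mathcal{X}$, the sample average $Y=\frac{1}{m}\sum_{l=1}^m|x_{i_l}-\mu|$ is an unbiased estimator of $\tilde{\mu}$ with variance $\sigma^2/m$, where $\sigma^2=\frac{1}{n}\sum_{i=1}^n|x_i-\mu|^2-\tilde{\mu}^2$ is the variance of the numbers $\{|x_i-\mu|\}_{i=1}^n$.

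The only point where $\tau=2$ was used in Lemma \ref{lem-bound} is the identity $\frac{1}{n}\sum_{i=1}^n(x_i-\mu)^2=h^2$. For general $\tau\ge 2$, I will instead invoke monotonicity of $L^p$ norms on a probability space to obtain
$$\Big(\tfrac{1}{n}\sum_{i=1}^n|x_i-\mu|^2\Big)^{1/2}\le \Big(\tfrac{1}{n}\sum_{i=1}^n|x_i-\mu|^\tau\Big)^{1/\tau}=h,$$
so $\sigma^2\le h^2-\tilde{\mu}^2=\tilde{\mu}^2(\omega^2-1)$ using $h=\omega\tilde{\mu}$. (The degenerate case $\tau=1$ forces $\omega=1$, under which the asserted inequality reads $Y\ge 0$ and is trivial.) Consequently $\sqrt{\mathrm{Var}(Y)}=\sigma/\sqrt{m}\le (h/\omega)\sqrt{(\omega^2-1)/m}$.

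Next, Chebyshev's inequality applied to $Y$ gives $Prob(|Y-\tilde{\mu}|\ge \eta\,\sigma/\sqrt{m})\le 1/\eta^2$, and the event inclusion $\{|Y-\tilde{\mu}|\ge \eta(h/\omega)\sqrt{(\omega^2-1)/m}\}\subseteq \{|Y-\tilde{\mu}|\ge \eta\,\sigma/\sqrt{m}\}$ lifts this to $Prob(|Y-\tilde{\mu}|\ge \eta(h/\omega)\sqrt{(\omega^2-1)/m})\le 1/\eta^2$. Taking the one-sided complement yields $Y\ge \tilde{\mu}-\eta(h/\omega)\sqrt{(\omega^2-1)/m}=(1-\eta\sqrt{(\omega^2-1)/m})\frac{h}{\omega}$ with probability at least $1-1/\eta^2$, which is exactly the claim.

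The substantive step is the power-mean inequality relating the second moment of $|x_i-\mu|$ to its $\tau$-th moment; apart from that, the argument is a transcription of the $L_2$ proof. I do not anticipate a real obstacle, but some care is needed in pointing out that $\sigma^2\le \tilde{\mu}^2(\omega^2-1)$ holds even though the $L_\tau$ moment defining $\omega$ differs from the $L_2$ moment appearing in $\sigma^2$.
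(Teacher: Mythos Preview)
The paper does not supply a proof of Lemma~\ref{lem-lbound}; it is one of the results whose details are explicitly omitted as being ``almost the same'' as the $L_2$ case. Your adaptation---transcribe the proof of Lemma~\ref{lem-bound} and replace the identity $\frac{1}{n}\sum_i(x_i-\mu)^2=h^2$ by the power-mean bound $\bigl(\frac{1}{n}\sum_i|x_i-\mu|^2\bigr)^{1/2}\le h$---is exactly the intended extension and is correct for $\tau\ge 2$.

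There is a slip in your treatment of $\tau=1$. When $\omega=1$ the right-hand side $(1-\eta\sqrt{(\omega^2-1)/m})\,h/\omega$ equals $h=\tilde\mu$, not $0$, so the asserted event is $Y\ge\tilde\mu$, which is \emph{not} trivial: it would require $\mathrm{Prob}(Y\ge E[Y])\ge 1-1/\eta^2$ for every $\eta$, and that fails whenever the $|x_i-\mu|$ are not all equal. This is really a defect of the lemma as stated (for $\tau=1$ the $L_\tau$ CV is identically $1$ and carries no information about the variance of $|x_i-\mu|$), not of your method; but your parenthetical dismissal of the case is incorrect and should be replaced either by the observation that the statement is vacuous/ill-posed at $\tau=1$, or by restricting to $\tau\ge 2$ where the power-mean step is available.
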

\begin{theorem}
\label{the-lrptas}
Let $P$ be the point set of an $L_\tau$ sense regular $(k,j)$-projective clustering problem in $\mathbb{R}^d$ with 
regular factor $\omega$, where  $1\leq \tau<\infty$ is an integer. If each optimal cluster has size at least $\alpha|P|$. Then  Algorithm Projective-Clustering yields a PTAS with constant probability, where the running time of the PTAS is $O(2^{poly(\frac{kj\omega}{\epsilon\alpha})}nd)$.

\end{theorem}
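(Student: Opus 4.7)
\medskip

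\noindent\textbf{Proof proposal for Theorem \ref{the-lrptas}.} The plan is to mirror the $L_{2}$ regular pipeline (Theorem \ref{the-ralg1} $\to$ Theorem \ref{the-rrpa} $\to$ Theorem \ref{the-rptas}), replacing squared distances by $\tau$-th powers and invoking Lemma \ref{lem-ldelta} wherever Lemma \ref{lem-delta} was used, and Lemma \ref{lem-lbound} wherever Lemma \ref{lem-bound} was used. First I would prove an $L_{\tau}$ analogue of Theorem \ref{the-ralg1}: for a single regular $j$-flat fitting instance $P$ with mean $o$ (assumed WLOG to be the origin) and regular factor $\omega$, if $\mathcal{S}$ is a uniform random sample of size $m=\Theta((\omega^{2}-1)/\epsilon^{2})$ and $\overline{\mathcal{S}}$ is the output of Algorithm Symmetric-Sampling on $\mathcal{S}$ and $o$, then with probability $(1-\epsilon)(1-\epsilon^{2}/(\omega^{2}-1))^{2}$ some $y\in\overline{\mathcal{S}}$ induces an $L_{\tau}$ $\Delta$-rotation of $\mathcal{F}_{opt}$ with $\Delta=O(\sqrt{j}\omega/(1-\sqrt{\epsilon}))\,\delta_{opt}^{(\tau)}$, where $\delta_{opt}^{(\tau)}=(\tfrac{1}{|P|}\sum_{p}\|p,\mathcal{F}_{opt}\|^{\tau})^{1/\tau}$. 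Combined with Lemma \ref{lem-ldelta}, this yields $(\tfrac{1}{|P|}\sum_{p}\|p,\mathcal{F}'\|^{\tau})^{1/\tau}\le(1+5\sqrt{j}\omega/(1-\sqrt{\epsilon}))\,\delta_{opt}^{(\tau)}$.

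The construction of $y$ mimics the $L_2$ proof verbatim: split $P=P_L\cup P_R$ along the coordinate that dominates the $\tau$-th-moment spread, set $T=-( \mathcal{S}\cap P_L)\cup(\mathcal{S}\cap P_R)$, and let $y$ be the mean of $T$ (which Symmetric-Sampling enumerates). Two quantities must be controlled: (a) the projected length $\|Proj(y)\|$ must dominate the ``characteristic scale'' $h=(\tfrac{1}{|P|}\sum_{p}|\langle p,\vec{v}\rangle|^{\tau})^{1/\tau}$ along $\vec{v}=Proj(y)/\|Proj(y)\|$, and (b) $\|y-Proj(y)\|$ must be $O(\delta_{opt}^{(\tau)})$. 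For (a) I would replace the elementary bound $\tfrac{1}{n}\sum_{i}|\langle p_i,\vec{v}\rangle|^{2}\le\sum_{l\le j}\delta_l^{2}$ (see (\ref{for-35})) by a Hölder/power-mean bound $\tfrac{1}{n}\sum_{i}|\langle p_i,\vec{v}\rangle|^{\tau}\le j^{\tau-1}\sum_{l\le j}\delta_l^{(\tau)\tau}$, together with Lemma \ref{lem-lbound} applied to the coordinate of maximal $\tau$-th moment to lower bound $y_1$ by $\Omega(\delta_1^{(\tau)}/\omega)$. For (b) I would repeat the argument of Claim (1) but now bounding the deviation of the mean of the symmetric sample from the true mean by $O(\delta_{opt}^{(\tau)})$; this needs only Lemma \ref{lem-dis} applied to $\mathcal{F}^{\perp}$ and the identity $\delta_{opt}^{(\tau)}\ge\delta_{opt}^{(2)}$ fails in general, so the clean route is to invoke the $L_{\tau}$ variance lemma in $\mathcal{F}^{\perp}$ directly, which is exactly what Lemma \ref{lem-lbound} provides.

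Next I would chain this into a regular recursive projection theorem (analogue of Theorem \ref{the-rrpa}): run Algorithm Recursive-Projection with sample size $m=\Theta((\omega^{2}-1)/(\epsilon/j)^{2})$ and at every level select the child guaranteed by the single-flat statement. The dimensionality argument of Claim (\textbf{5}) carries over unchanged because it is purely geometric, and the coordinate-mapping Claim (\textbf{7}) likewise. Multiplying the per-level rotation errors via Lemma \ref{lem-ldelta} gives $(\tfrac{1}{|P|}\sum_{p}\|p,\mathcal{F}\|^{\tau})^{1/\tau}\le(1+5\sqrt{j}\omega/(1-\sqrt{\epsilon}))^{j}(\tfrac{1}{|P|}\sum_{p}\|p,\mathcal{F}_{opt}\|^{\tau})^{1/\tau}$, which after rescaling $\epsilon$ becomes $(1+\epsilon)$. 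Finally, to go from a single flat to $k$ flats I would invoke Lemma \ref{lem-select}: since every optimal cluster has size at least $\alpha|P|$, a uniform sample of size $\Theta((1/\alpha)\,m\log(km))$ contains, with constant probability, a sufficiently large uniform subsample of every optimal cluster. Running Recursive-Projection with grid refinement (as in Algorithm Projective-Clustering, Steps 5–6) once per cluster and then enumerating the $k$-tuples of root-to-leaf flats produces the desired $(1+\epsilon)$-approximation. The running time budget $O(2^{poly(kj\omega/(\epsilon\alpha))}nd)$ follows from the same accounting as in Section \ref{sec-parameter}, since the sample size now depends polynomially on $\omega/\epsilon$ instead of $j/(\epsilon\gamma)$.

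The main obstacle will be the $L_{\tau}$ analogue of Claim (1) (item (b) above): in the $L_2$ case the argument relied on $\delta_{opt}^{(2)}=\sqrt{a^{2}+\|\pi-o\|^{2}}$ to absorb the triangle-inequality loss, and this identity has no exact counterpart for $\tau\ne 2$. I expect the clean fix to be a power-mean inequality: $(\tfrac{1}{|P|}\sum_{p}\|p-o\|^{\tau})^{1/\tau}\ge\max\{(\tfrac{1}{|P|}\sum_{p}\|p-\pi\|^{\tau})^{1/\tau},\|\pi-o\|\}$ up to a constant depending only on $\tau$, which combined with Lemma \ref{lem-lbound} in $\mathcal{F}^{\perp}$ yields $\|y-Proj(y)\|\le C_{\tau}\,\delta_{opt}^{(\tau)}$ with the constant absorbed into the big-$O$. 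Once this auxiliary inequality is in place, the rest of the proof is a mechanical translation of the $L_{2}$ argument and no new ideas are required.
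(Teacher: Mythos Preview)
Your proposal is correct and takes essentially the same approach as the paper, which in fact omits the proof of Theorem \ref{the-lrptas} entirely, stating only that the $L_{2}$ regular pipeline carries over with Lemma \ref{lem-ldelta} in place of Lemma \ref{lem-delta} and Lemma \ref{lem-lbound} in place of Lemma \ref{lem-bound}. Your discussion of the Claim (1) analogue is actually more careful than anything the paper provides; one simplification you may have overlooked is that in the regular setting $o$ is the mean of $P$, so in $\mathcal{F}_{opt}^{\perp}$ the offset $\|\pi-o\|$ vanishes and the decomposition you worry about reduces to $\delta_{opt}=a$, which removes most of the obstacle you flag (the remaining $\tau=1$ case can be handled by a direct Markov bound rather than Lemma \ref{lem-dis}).
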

\end{document}